\theoremstyle{plain}
\newtheorem{lem}{Lemma}[section]
\newtheorem{cor}[lem]{Corollary}
\newtheorem{prop}[lem]{Proposition}
\newtheorem{thm}[lem]{Theorem}
\theoremstyle{definition}
\newtheorem{remark}[lem]{Remark}
\newtheorem{defn}[lem]{Definition}
\newtheorem{eg}[lem]{Example}
\newtheorem{assum}[lem]{Assumption}
\newcommand{\R}{\mathbb{R}}
\renewcommand{\P}{\mathbb{P}}
\newcommand{\E}{\mathbb{E}}
\newcommand{\F}{\mathcal{F}}
\newcommand{\N}{\mathbb{N}}
\renewcommand{\( }{\left(}
\renewcommand{\)}{\right)}
\DeclareMathOperator*{\diver}{div}
\newcommand{\Dcal}{{\mathcal D}}
\newcommand{\Ecal}{{\mathcal E}}
\newcommand{\Fcal}{{\mathcal F}}
\newcommand{\Mcal}{{\mathcal M}}
\newcommand{\Pcal}{{\mathcal P}}
\numberwithin{equation}{section}
\renewenvironment{proof}[1][\proofname] {\par\pushQED{\qed}\normalfont\topsep6\p@\@plus6\p@\relax\trivlist\item[\hskip\labelsep\bfseries#1\@addpunct{.}]\ignorespaces}{\popQED\endtrivlist\@endpefalse}
\begin{document}
\sloppy
			\title{Open Markets and Hybrid Jacobi Processes\footnote{Acknowledgements: The first author would like to acknowledge the support of the Natural Sciences and Engineering Research Council of Canada (NSERC). 
					This work has been partially supported by the National Science Foundation under grant NSF DMS-2206062.		We would like to thank Kostas Kardaras, Ioannis Karatzas and Kiseop Lee for helpful discussions. We also thank two anonymous referees for comments and suggestions that improved the manuscript.}}
			\author{
				David Itkin\footnote{Department of Mathematics, Imperial College London, \texttt{d.itkin@imperial.ac.uk}}
				\and
				Martin Larsson\footnote{Department of Mathematical Sciences, Carnegie Mellon University, \texttt{larsson@cmu.edu}}
			}
		\maketitle

		
		
	
	\begin{abstract}
		We propose a unified approach to several problems in Stochastic Portfolio Theory (SPT), which is a framework for equity markets with a large number $d$ of stocks. Our approach combines \emph{open markets}, where trading is confined to the top $N$ capitalized stocks as well as the market portfolio consisting of all $d$ assets, with a parametric family of models which we call \emph{hybrid Jacobi processes}. We provide a detailed analysis of ergodicity, particle collisions, and boundary attainment, and use these results to study the associated financial markets. Their properties include (1) stability of the capital distribution curve and (2) explicit and not artificially leveraged growth optimal strategies. The sub-class of \emph{rank Jacobi models} are additionally shown to (3) serve as the worst-case model for a robust asymptotic growth problem under model ambiguity and (4) exhibit stability in the large-$d$ limit. Our definition of an open market is a relaxation of existing definitions which is essential to make the analysis tractable.
	\end{abstract}
	
	\paragraph{Keywords:}
		Stochastic Portfolio Theory, Open Markets, Growth Optimality, Dirichlet Forms, Boundary Attainment, Robust Finance.

	\paragraph{MSC 2020 Classification:}
Primary 60G44, 60J60, 91G15; Secondary 60J46.

\tableofcontents

\section{Introduction}

In this paper we propose a unified approach to several problems in Stochastic Portfolio Theory (SPT) by combining \emph{open markets} (explained further below) with a parametric family of models which we call \emph{hybrid Jacobi processes}. 

SPT was introduced by Fernholz \cite{fernholz2002stochastic,fernholz1999diversity} as a descriptive theory of large equity markets. A key object of study in SPT is the market weight vector $(X_1,\ldots,X_d)$ consisting of the relative capitalizations of the available stocks. It is an empirical fact that the capital distribution curve, which consists of the ranked market weights $X_{(1)} \ge \ldots \ge X_{(d)}$, has remained remarkably stable over time in US equity markets; see \cite[Chapter 5]{fernholz2002stochastic}. Mathematically, this stability can be captured using ranked market weight processes that are ergodic. We note that for stability to be present it is not required that the market weight process is Markovian. Indeed, setups such as \cite{kardaras2021ergodic,itkin2020robust,itkin2022ergodic} allow for more general ergodic processes and we also take this view in Section~\ref{sec:robust}. Nevertheless, for tractability, a number of ergodic Markovian probabilistic models have been proposed, e.g.\ \cite{banner2005atlas,fernholz2005relative,ichiba2011hybrid,cuchiero2019polynomial,pickova2014generalized,filipovic2016polynomial,fernholz2002stochastic} and the hybrid Jacobi market introduced in this paper also models the market weights with a Markov process.

However, in the classical \emph{closed market} setup, models of this type suffer from certain deficiencies. The underlying issue is that the ergodicity of the ranked market weight process forces small capitalization stocks to eventually grow. Indeed, this is the case because in such models the bankruptcy of assets is prohibited and the market is stable. Of course, in real world equity markets, small companies need not grow and may eventually default. An observable consequence of this effect in closed market models is that they produce growth optimal strategies with artificially high leverage. A typical example of this phenomenon is discussed in Example~\ref{sec:vol_stabilized_example} below. The stability and lack of bankruptcy encourages extreme long positions in the small stocks, financed by corresponding short positions in large capitalization stocks. Consequently, although leveraged strategies may very well be growth-optimal in the real world, the closed market setup mechanically creates additional leverage. This makes the growth-optimal strategy \emph{artificially leveraged} in many closed market setups.

A related phenomenon is that the growth optimal strategy tends to be strongly dependent on the number $d$ of assets included in the model. This is an issue because in practice $d$ is a design parameter. Rather than explicitly modelling \emph{all} available stocks, one restricts attention to, say, a sufficiently liquid subset.

We claim that the tension between the stability of the capital distribution curve and the artificially leveraged behaviour of the growth optimal strategy can be resolved by working with an \emph{open market}. An open market is one where the assets available for trading change over time. Concretely, the investor is allowed to trade in a fixed number of large capitalization stocks, but is constrained from holding small capitalization stocks. In such a setting the smallest investible asset can be overtaken by a smaller asset present in the model. In this case the investor must trade out of their position in the asset which was overtaken and invest in the overtaking asset. As such, open markets do not suffer from the deficiency described above. Namely, one cannot be certain, even when the ranked market weights are ergodic, that the smallest investible assets will necessarily provide growth. 

The study of open markets has recently gained considerable traction. In \cite{fernholz2018numeraire} the author constructs an open market under which the growth optimal strategy is the open market portfolio. In \cite{karatzas2020open}, a theory is developed for, among other things, arbitrage, growth optimality and numeraire properties for open markets. In \cite{campbell2021functional}, the authors numerically optimize growth under open market constraints.

However, tractability remains a concern for open markets. It has proven difficult to produce concrete non-trivial open market models that admit explicit growth optimal strategies. Explicit strategies are useful because their properties can be understood by inspection.

We overcome this lack of tractability by relaxing the existing notion of an open market. Fixing $N \leq d$ we allow the investor to trade in (i) the $N$ assets with the largest capitalizations and (ii) the \emph{full} market portfolio consisting of all $d$ assets. As such, we do allow trading in small capitalization stocks, but only through the market portfolio. Due to the existence of various market-tracking securities, such as the Wilshire 5000 or the CRSP US Total Market Index, we consider (ii) to be a practical inclusion that is approximately implementable through proxies. Under a structural condition on the covariation between large and small capitalization stocks, this setup is tractable: the growth optimization problem in the open market becomes as easy or difficult as the classical growth optimization problem in the full market. These results are obtained in Section~\ref{sec:open_market_SPT}.

To demonstrate that the open market setup indeed resolves the tension discussed above, we instantiate our framework using a parametric family of market weight models which we call \emph{hybrid Jacobi processes}. These processes resemble multivariate Jacobi processes and Wright--Fisher diffusions, but have the feature that the drift of a particle may depend on its rank in addition to its name, akin to \cite{ichiba2011hybrid}. Section~\ref{sec:hybrid_polynomial_models} defines and constructs hybrid Jacobi processes, and establishes a number of properties including ergodicity, invariant density, particle collisions, and boundary attainment.

In Section~\ref{sec:hybrid_market} we use the hybrid Jacobi processes to model the market weights. We show that they fit into the framework of Section~\ref{sec:open_market_SPT} and, consequently, obtain necessary and sufficient conditions for the existence of a growth optimal strategy in the open market as well as an explicit formula for it (when it exists). Section~\ref{sec:examples} contains several examples. In particular, we demonstrate that this model can resolve the tension between the stability of the capital distribution curve and the not artificially leveraged behaviour of the growth optimal strategy. Moreover, we show that many specifications of the parameters lead to growth-optimal strategies that are stable with respect to $d$, the total number of stocks in the model. 

Like any parametric models, the hybrid Jacobi processes only represent idealizations of real-world dynamics. It is not clear a priori to what extent the model output is sensitive to model misspecification. This is of particular concern for the drift dynamics, which are notoriously difficult to estimate based on financial data. Following \cite{kardaras2021ergodic,itkin2020robust, itkin2022ergodic}, we address this point of criticism by solving an \emph{asymptotic} and \emph{robust} growth optimization problem in the open market, with only two inputs fixed a priori: the covariation matrix of the ranked market weights and their invariant density. These inputs are chosen to be consistent with a \emph{rank Jacobi process}, whose drift is purely rank-based. We maximize the worst-case asymptotic growth rate across the class of models consistent with these inputs  (and which obey certain technical restrictions), and show that the optimal solution is the growth optimal strategy from the rank Jacobi model. This is done in Section~\ref{sec:robust}. In Section~\ref{sec:conclusion} we discuss open problems and directions for future research.


In summary, in this paper we show how to address, in a single framework, the following four qualitative properties that we believe a satisfactory SPT model ought to possess.
\begin{enumerate} [label = ({\arabic*}),noitemsep]
	\item The capital distribution curve is stable,
	\item The growth optimal strategy is not artificially leveraged,
	\item The model output is robust to misspecification and only depends on estimable quantities,
	\item The model output is stable with respect to $d$, the total number of stocks.
\end{enumerate}

Some material pertaining to Sections~\ref{sec:open_market_SPT}--\ref{sec:robust} are placed in the appendix. Appendix~\ref{app:integral} contains useful integral identities used in the paper, while Appendices~\ref{app:hybrid_polynomial_models}-\ref{app:robust} contain the proofs of certain results omitted in the main text.

\subsection{Notation} The following notation is used throughout the paper.
\begin{itemize}
	\item $e_1,\dots,e_d$ denote the standard basis vectors in $\R^d$, and $\boldsymbol{1}_d := \sum_{i=1}^d e_i$ denotes the $d$-dimensional vector of all ones. We write $\delta_{ij}$ for the Kronecker delta.
	\item $\mathcal{T}_d$ is the set of permutations on $\{1,\dots,d\}$. For $x \in \R^d$ and $\tau \in \mathcal{T}_d$ we write $x_\tau$ for the vector $(x_{\tau(1)},\dots,x_{\tau(d)})$. We write $x_{()} = (x_{(1)},\dots,x_{(d)})$ for the \emph{ranked} vector. This is the permutation of $x$ that satisfies $x_{(1)} \geq x_{(2)} \geq \dots \geq x_{(d)}$.
	\item We define the \emph{rank identifying function} ${\bf{r}}_i:\R^d \to \{1,\dots,d\}$ for $i=1,\ldots,d$, as well as the \emph{name identifying function} ${\bf{n}}_k:\R^d \to \{1,\dots,d\}$ for $k=1,\ldots,d$, via
	\begin{equation} \label{eq_rank_name_def}
		\begin{aligned}
			\text{${\bf{r}}_i(x) = k$, where $k$ is such that $x_i = x_{(k)}$} \\
			\text{${\bf{n}}_k(x) = i$, where $i$ is such that $x_i = x_{(k)}$}
		\end{aligned}
	\end{equation}
	with ties broken by lexicographical ordering.
	\item The standard simplex in $\R^d$ and the ordered simplex in $\R^d$ play an important role. They are given by
	\begin{equation} \label{eq_simplex_notation}
		\Delta^{d-1} = \left\{x \in [0,\infty)^d \colon  x_1 + \cdots + x_d = 1\right\}, \quad
		\nabla^{d-1} = \left\{y \in \Delta^{d-1} \colon  y_1 \ge \cdots \ge y_d \right\}.
	\end{equation}
	We also define the interiors $\Delta^{d-1}_+ = \{x \in \Delta^{d-1} \colon x_{(d)} > 0\}$, $\nabla^{d-1}_+ = \{y \in \nabla^{d-1} \colon y_d > 0\}$ and boundaries $\partial \Delta^{d-1} = \Delta^{d-1}\setminus \Delta^{d-1}_+$, $\partial \nabla^{d-1} = \nabla^{d-1}\setminus \nabla^{d-1}_+$.
	\item $L_X^a$ is the local time process of a continuous semimartingale $X$ at level $a \in \R$, and we write $L_X$ for $L_X^0$. The local time is defined by $dL_X^a(t) := d|X(t) - a| - \text{sign}(X(t) - a)dX(t)$ and $L_X^a(0) = 0$.
\end{itemize}

\section{Hybrid Jacobi Models} \label{sec:hybrid_polynomial_models}

We now introduce a parametric family of models which we call \emph{hybrid Jacobi processes}. These models are constructed in a standard way using Dirichlet forms in Section~\ref{sec:hybrid_construction}, and we perform a detailed analysis of properties such as ergodicity, particle collisions, and boundary attainment in Sections~\ref{sec:ergodicity}, \ref{sec:collisions} and \ref{sec:boundary_attainment} respectively. In later sections we demonstrate their usefulness in the context of SPT. We refer the reader to \cite{Fukushima1994Dirichlet} for an in-depth treatment of the theory of Dirichlet forms.

For a vector $x \in \R^d$ we define the \emph{tail sum} starting from component $k = 1,\ldots,d$ by
\begin{equation} \label{eq_tail_sum_notation}
	\bar x_k = \sum_{l=k}^d x_l.
\end{equation}
We also write $\bar x_{(k)} = \sum_{l=k}^d x_{(l)}$ for the tail sums of $x_{()}$. This slight abuse of notation is to avoid cumbersome expressions like $\overline{x_{()}}_k$.

\begin{defn} \label{def:hybrid_Jacobi}
	Fix parameters $a = (a_1,\ldots,a_d) \in \R^d$, $\gamma = (\gamma_1,\ldots,\gamma_d) \in \R^d$, $\sigma > 0$. A \emph{hybrid Jacobi process} is a $\Delta^{d-1}$-valued weak solution of the SDE
	\begin{equation} \label{eqn:X_polynomial_dynamics}
		\begin{aligned}
			dX_i(t) &= \frac{\sigma^2}{2}\(\gamma_i + a_{{\bf r}_i(t)}- (\bar a_1 + \bar \gamma_1) X_i(t)\)dt \\
			&\quad + \sigma \sum_{j=1}^d \left(\delta_{ij} - X_i(t) \right) \sqrt{ X_j(t) } \, dW_j(t), \quad i = 1,\ldots,d,
		\end{aligned}
	\end{equation}
	where $W$ is a standard $d$-dimensional Brownian motion and ${\bf r}_i(t) = {\bf r}_i(X(t))$ is the rank identifying function as in \eqref{eq_rank_name_def}. When $\gamma = 0$ we call it a \emph{rank Jacobi process}.\end{defn}

\begin{remark}
	If $a = 0$, \eqref{eqn:X_polynomial_dynamics} reduces to the standard (multivariate) Jacobi process introduced in \cite{Gourie2006Multi}. This is also a special case of the Wright--Fisher diffusion model in population genetics; see e.g.\ \cite[Chapter~10]{Ethier1986Markov}.
\end{remark}

Hybrid Jacobi processes do not exist for arbitrary parameter values. As we show in Section~\ref{sec:hybrid_construction}, the process exists provided the following condition is satisfied.
\begin{assum} \label{ass:bar_a}
	The vectors $a$ and $\gamma$ are such that $ \bar a_k + \bar \gamma_{(k)} > 0$ for $k=2,\dots,d$.
\end{assum}

If $X$ is a hybrid Jacobi process with parameters $a,\gamma,\sigma$, a calculation shows that $d[ X_i, X_j](t) = c_{ij}(X(t))dt$ where the diffusion matrix is given by
\begin{equation} \label{eqn:c_def}
	c_{ij}(x)  = \sigma^2 x_i(\delta_{ij} -x_j), \quad  i,j=1,\dots,d, \quad x \in \Delta^{d-1}.
\end{equation}
Here, and in the sequel, we denote by $C^k(\Delta^{d-1})$ for any $k \in \N \cup \{\infty\}$ to be the set of all functions $u = v|_{\Delta^{d-1}}$, where $v \in C^k(\R^d)$.
Using this notation the generator of $X$ takes the form
\begin{equation} \label{eqn:generator_d}
	Lu(x) = \frac{1}{2}\sum_{i,j=1}^d c_{ij}(x)\partial_{ij}u(x) + \frac{\sigma^2}{2}\sum_{i=1}^d \(\gamma_i +  a_{{\bf r}_i(x)} - ( \bar a_1 + \bar \gamma_1)x_i \)\partial_i u(x)
\end{equation} 
for functions $u \in C^2(\Delta^{d-1})$ and $x \in \Delta^{d-1}$.  Moreover, a useful consequence of \eqref{eqn:c_def} is that the quadratic covariation between two linear functions of a hybrid Jacobi process takes on a particularly simple form. Indeed, for any $u,v \in \R^d$ we have 
\begin{equation} \label{eqn:qv_formula}
	d[u^\top X,v^\top X](t) = \big((u \circ v)^\top X(t) - (u^\top X(t))(v^\top X(t))\big)dt,
\end{equation}
where $u\circ v = (u_1v_1,\dots,u_dv_d)$ is the Hadamard componentwise product.

\subsection{Construction} \label{sec:hybrid_construction}

Throughout the rest of Section~\ref{sec:hybrid_polynomial_models} we fix parameters $a,\gamma,\sigma$ as in Definition~\ref{def:hybrid_Jacobi}. The construction of the hybrid Jacobi process involves a certain probability density $p(x)$ on $\Delta^{d-1}$, which will turn out to be the invariant density of the process. This density is understood with respect to the surface area measure on $\Delta^{d-1}$. More precisely, we use the pushforward of Lebesgue measure on $\R^{d-1}$ under the map
\[
(x_1,\ldots,x_{d-1}) \mapsto (x_1,\ldots,x_{d-1}, 1 - x^\top {\bf 1}_{d-1}).
\]
All integrals over $\Delta^{d-1}$ and $\nabla^{d-1}$ (the ordered simplex; see \eqref{eq_simplex_notation}) should be understood with respect to this measure, unless otherwise indicated. More details on this convention, along with several useful integral identities, are given in Appendix~\ref{app:integral}.

We now define
\begin{equation} \label{eqn:p_def}
	p(x) = Z^{-1}\prod_{k=1}^d x_{(k)}^{a_k + \gamma_{{\bf n}_k(x)}-1}, \quad x \in \Delta^{d-1}_+, 
\end{equation}
where the normalizing constant $Z$ is given by
\[
Z = \int_{\Delta^{d-1}} \prod_{k=1}^d x_{(k)}^{a_k + \gamma_{{\bf n}_k(x)}-1}dx.
\]
For later use we also define
\begin{equation} \label{eqn:q_general_def}
	q(y) = \frac{1}{d!}\sum_{ \tau \in \mathcal{T}_d} p(y_\tau), \quad y \in \nabla^{d-1}_+,
\end{equation}
which will turn out to be the invariant density of the ranked process. The next lemma clarifies the role of Assumption~\ref{ass:bar_a}. Its proof is contained in Appendix~\ref{app:hybrid_polynomial_models}.

\begin{lem} \label{lem:finite_Z}
	$Z < \infty$ if and only if Assumption~\ref{ass:bar_a} holds.
\end{lem}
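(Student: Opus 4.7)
The plan is to break the integral up over the $d!$ permutations of coordinates, apply a Dirichlet-type integrability criterion on the ordered simplex, and then minimize over permutations to recover exactly Assumption~\ref{ass:bar_a}.

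First I would decompose $\Delta^{d-1}$ (up to a Lebesgue nullset) into the open chambers $\Omega_\tau = \{x\in\Delta^{d-1} : x_{\tau(1)} > \cdots > x_{\tau(d)}\}$ for $\tau \in \mathcal{T}_d$. On $\Omega_\tau$ one has $x_{(k)} = x_{\tau(k)}$ and ${\bf n}_k(x) = \tau(k)$, so the integrand in \eqref{eqn:p_def} equals $\prod_{k=1}^d x_{\tau(k)}^{a_k + \gamma_{\tau(k)}-1}$. The change of variable $y_k = x_{\tau(k)}$ is volume-preserving and sends $\Omega_\tau$ onto $\nabla^{d-1}$, giving
\[
Z = \sum_{\tau \in \mathcal{T}_d}\int_{\nabla^{d-1}} \prod_{k=1}^d y_k^{a_k + \gamma_{\tau(k)}-1}\,dy.
\]
Since this is a finite sum of nonnegative terms, $Z < \infty$ iff each summand is finite.

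Next I would invoke (or, if not explicitly available in Appendix~\ref{app:integral}, prove) the standard Dirichlet-type criterion on the ordered simplex: for any exponent vector $\alpha \in \R^d$,
\[
\int_{\nabla^{d-1}} \prod_{k=1}^d y_k^{\alpha_k - 1}\,dy < \infty \quad \Longleftrightarrow \quad \sum_{k=j}^d \alpha_k > 0 \text{ for every } j = 2,\ldots,d.
\]
The intuition, and the basis of the proof, is that the only potential singularities on $\nabla^{d-1}$ occur on the faces $F_j = \{y_j = y_{j+1} = \cdots = y_d = 0\}$; rescaling $y_k = s\zeta_k$ for $k \ge j$ with $s = \sum_{k=j}^d y_k$ and $\zeta \in \nabla^{d-j}$, the Jacobian is $s^{d-j}$ and the integrand picks up a factor $s^{\sum_{k=j}^d \alpha_k - 1}$ in the radial variable, while the angular integral over $\zeta$ is a lower-dimensional instance of the same problem and can be handled inductively.

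Applying the criterion with $\alpha_k = a_k + \gamma_{\tau(k)}$, the $\tau$-th summand is finite iff $\bar a_j + \sum_{k=j}^d \gamma_{\tau(k)} > 0$ for $j = 2,\ldots,d$. Finally, requiring this for every $\tau \in \mathcal{T}_d$ amounts, for each fixed $j$, to taking the worst case $\min_\tau \sum_{k=j}^d \gamma_{\tau(k)}$. By a rearrangement argument, this minimum is realized by placing the $d-j+1$ smallest components of $\gamma$ in the positions $j,\ldots,d$, so it equals $\sum_{l=j}^d \gamma_{(l)} = \bar\gamma_{(j)}$. This yields $Z < \infty \iff \bar a_j + \bar\gamma_{(j)} > 0$ for $j = 2,\ldots,d$, which is Assumption~\ref{ass:bar_a}. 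The main obstacle is the integrability criterion in the middle step; everything else is bookkeeping, and I would expect the criterion to appear directly in Appendix~\ref{app:integral} given the authors' apparent reliance on such identities.
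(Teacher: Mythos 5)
Your plan matches the paper's proof almost exactly: decompose $\Delta^{d-1}$ into the $d!$ ordered chambers, change variables to reduce to a sum of integrals over $\nabla^{d-1}$, invoke the Dirichlet-type criterion $\int_{\nabla^{d-1}} \prod_k y_k^{\alpha_k-1}\,dy < \infty \iff \sum_{l\ge k}\alpha_l > 0$ for $k=2,\dots,d$, and then minimize over $\tau$ via the rearrangement observation $\bar\gamma_{\tau(k)}\ge\bar\gamma_{(k)}$. That criterion is precisely the paper's Lemma~\ref{lem:Q_finite}, stated and proved in Appendix~B, so it is indeed directly available. The only genuine divergence is in how you would prove that criterion if needed: you sketch a radial/angular decomposition near the faces $\{y_j=\cdots=y_d=0\}$ with an inductive argument, whereas the paper uses the single global change of variables $z_k = \log y_{k-1} - \log y_k$, which flattens $\nabla^{d-1}$ onto $\R_+^{d-1}$ and turns $Q_b$ into an explicit product $\prod_{k=2}^d \int_0^\infty e^{-\bar b_k z}\,dz$, from which the conclusion is immediate and requires no induction. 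Your scaling argument is sound in outline but would need some care to justify that the angular integrals stay bounded; the log-gap substitution avoids that bookkeeping entirely. Everything else in your proposal, including the rearrangement step, is how the paper concludes.
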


Assume that Assumption~\ref{ass:bar_a} holds, so that $m(dx) := p(x)dx$ is a probability measure on $\Delta^{d-1}$. We define a symmetric bilinear form on $L^2(\Delta^{d-1},m)$ with domain $\Dcal = C^\infty(\Delta^{d-1})$ by 
\begin{equation} \label{eqn:X_Dirichlet_form}
	\mathcal{E}(u,v) := \frac{1}{2}\int_{\Delta^{d-1}} \nabla u^\top c \nabla v p. 
\end{equation}
The form $(\mathcal{E},\mathcal{D})$ is Markovian in the sense of \cite[Ch.~1, $(\Ecal.4)$]{Fukushima1994Dirichlet}. It is also closable as defined in \cite[Ch.~1, $(\Ecal.3)$]{Fukushima1994Dirichlet}. To show this we need the following integration by parts formula, whose proof is contained in Appendix~\ref{app:hybrid_polynomial_models}.

\begin{lem}[Integration by parts] \label{thm:IBP}
	Let Assumption~\ref{ass:bar_a} be satisfied. For any functions $v \in C^1(\Delta^{d-1})$ and $\xi \in C^1(\Delta^{d-1};\R^d)$ we have
	\begin{equation}\label{eqn:IBP}
		\frac{1}{2}\int_{\Delta^{d-1}} \nabla v^\top c \xi p = - \frac{1}{2}\int_{\Delta^{d-1}} v \diver(c\xi p).
	\end{equation}
	In particular, taking $\xi = \nabla u$ for some $u \in C^2(\Delta^{d-1})$, the identity \eqref{eqn:IBP} reads
	\[
	\mathcal{E}(u,v) = - \int_{\Delta^{d-1}} v \, Lu \, p
	\]
	where $L$ is the generator defined in \eqref{eqn:generator_d}. 
\end{lem}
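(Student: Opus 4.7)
My strategy is to apply the divergence theorem on the $(d-1)$-simplex chamber by chamber. Partition $\Delta^{d-1}$ into the $d!$ open chambers
\[
\Delta_\tau = \{x\in \Delta^{d-1}\colon x_{\tau(1)} > x_{\tau(2)} > \cdots > x_{\tau(d)}\}, \qquad \tau \in \Tcal_d,
\]
on which the rank function is constant, so that $p$ has the smooth product representation $p(x) = Z^{-1}\prod_{j=1}^d x_j^{a_{\tau^{-1}(j)} + \gamma_j - 1}$ throughout $\Delta_\tau$. Because $c\bm1_d = 0$, the vector field $c\xi p$ is tangent to the simplex, so the intrinsic divergence theorem (equivalently, the classical divergence theorem applied in the coordinates $y=(x_1,\dots,x_{d-1})$ with $x_d = 1 - \sum_{i<d}y_i$) yields, for each $\tau$,
\[
\int_{\Delta_\tau}\nabla v^\top c\xi p = -\int_{\Delta_\tau} v\,\diver(c\xi p) + \int_{\partial\Delta_\tau} v\,(c\xi p)\cdot\nu\,dS_{d-2},
\]
with $\nu$ the outer conormal to $\Delta_\tau$ within $\Delta^{d-1}$.

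Next, I sum over $\tau$ and analyze the total boundary. Each $\partial\Delta_\tau$ consists of $d-1$ internal walls $\{x_{\tau(k)}=x_{\tau(k+1)}\}$, shared with the chamber obtained by transposing ranks $k$ and $k+1$, plus one external face $\{x_{\tau(d)}=0\}$ lying on $\partial \Delta^{d-1}$. On an internal wall the two chamber-wise formulas for $p$ agree, because interchanging the exponents at two equal coordinates leaves the product invariant; together with smoothness of $c$ and $\xi$, this forces $c\xi p$ to be continuous across every wall, and opposite conormals from the two adjacent chambers make these contributions cancel in the sum. Only the external integrals over the faces $F_i := \{x_i=0\}\cap\Delta^{d-1}$ survive. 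On $F_i$ the conormal is proportional to $-e_i + d^{-1}\bm1_d$, and $\bm1_d^\top c\xi = 0$ reduces the flux to a constant multiple of $(c\xi p)_i$. From \eqref{eqn:c_def}, $(c\xi)_i = \sigma^2 x_i(\xi_i - x^\top\xi) = O(x_i)$, while generically on $F_i$ the coordinate $x_i$ strictly occupies rank $d$, so $p = O(x_i^{a_d+\gamma_i-1})$ with the remaining factors uniformly bounded. Thus $(c\xi p)_i = O(x_i^{a_d+\gamma_i}) \to 0$ as $x_i\to 0^+$, where the positivity $a_d + \gamma_i > 0$ follows from Assumption~\ref{ass:bar_a} at $k=d$ (which gives $a_d+\gamma_{(d)}>0$) together with $\gamma_i \ge \gamma_{(d)} = \min_j \gamma_j$. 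Hence the external boundary integral vanishes, establishing \eqref{eqn:IBP}.

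The ``in particular'' claim follows by specializing to $\xi=\nabla u$ and verifying, by a direct chamber-wise computation using the explicit form of $p$ and the formula \eqref{eqn:c_def} for $c$, that $\tfrac12\diver(c\nabla u\,p) = p\,Lu$ with $L$ as in \eqref{eqn:generator_d}, where $\diver$ is understood intrinsically on the simplex. I expect the main obstacle to be the external boundary analysis: the density $p$ can blow up on $\partial\Delta^{d-1}$, and one must track how this blow-up is tamed by the vanishing of the diffusion coefficient $c$ there. Assumption~\ref{ass:bar_a} is precisely the sharp threshold making $(c\xi p)_i$ decay at each face $\{x_i=0\}$, and as in Lemma~\ref{lem:finite_Z} the same condition is what guarantees integrability of $p$ itself.
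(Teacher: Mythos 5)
Your chamber-by-chamber decomposition is a legitimate alternative to the paper's route (the paper truncates the whole simplex at $\{x_{(d)} = \epsilon\}$, applies the integration-by-parts formula for Lipschitz vector fields on $\bar U_\epsilon$, and then lets $\epsilon \downarrow 0$; the Lipschitz IBP absorbs the rank-equality kinks automatically, whereas you cancel them explicitly via internal walls). The internal-wall cancellation is correct: on $\{x_{\tau(k)} = x_{\tau(k+1)}\}$ the two chamber-wise products for $p$ agree, so $c\xi p$ is continuous there and opposite conormals of adjacent chambers annihilate the shared contributions.

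However, there is a genuine gap in the boundary analysis at $\partial\Delta^{d-1}$. You enact the divergence theorem directly on each open chamber $\Delta_\tau$, but $p$ and $\operatorname{div}(c\xi p)$ are not continuous up to the external face $\{x_{\tau(d)}=0\}$ --- the density blows up there whenever $a_d + \gamma_{(d)} < 1$, so the Gauss--Green formula is not applicable without some approximation step. More seriously, the claim that ``$p = O(x_i^{a_d+\gamma_i-1})$ with the remaining factors uniformly bounded'' is false: on the face $F_i = \{x_i = 0\}\cap\Delta^{d-1}$ the other coordinates range over a $(d-2)$-simplex, and near its corners the product $\prod_{k<d} x_{(k)}^{a_k+\gamma_{{\bf n}_k}-1}$ can itself diverge. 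So the pointwise bound $(c\xi p)_i = O(x_i^{a_d+\gamma_i})$ is not uniform over $F_i$, and it does not control the surface integral. This is exactly where the quantitative work must go: one needs to truncate (as in the paper, over $\{x_{(d)}=\epsilon\}$), bound the truncated boundary integral by $\epsilon^{a_d+\gamma_{(d)}}\sum_\tau Q_{a'+\gamma'_\tau}(1-\epsilon,\epsilon)$, and then invoke the explicit decay rate of Lemma~\ref{lem:Q_lim}\ref{item:Q_b_lim}, which thresholds the growth of $Q_{b}(1-\epsilon,\epsilon)$ against $\epsilon^{-\eta}$ for $\eta > \max\{0,-\bar b_2, \dots\}$. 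Without such an estimate, the external boundary term has not been shown to vanish. The heuristic picture you give (that $c$ degenerates fast enough to tame the blow-up of $p$) is the right one, and Assumption~\ref{ass:bar_a} is indeed the right structural condition, but the argument as written does not establish the vanishing.
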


Thanks to Lemma~\ref{thm:IBP} and \cite[Proposition~I.3.3]{Ma1992Intro}, $(\mathcal{E},\mathcal{D})$ is closable. We let $(\mathcal{E},D(\mathcal{E}))$ be the closure of $(\mathcal{E},\mathcal{D}).$ A standard application of \cite[Theorem~3.1.2]{Fukushima1994Dirichlet} then shows that $(\mathcal{E},D(\mathcal{E}))$ is a regular, strongly local Dirichlet form. Moreover, Lemma~\ref{thm:IBP} implies that the generator associated with $\mathcal{E}$ coincides with $L$ on functions in $\mathcal{D}$. With some abuse of notation we let $(L,D(L))$ denote the generator of $(\mathcal{E},D(\mathcal{E}))$. Existence of the hybrid Jacobi process now follows from the well-known correspondence between local regular Dirichlet forms and Hunt diffusions \cite[Theorem~7.2.2]{Fukushima1994Dirichlet}. Recall that a Hunt diffusion on $\Delta^{d-1}$ is a continuous adapted process $X$ with values in $\Delta^{d-1}$, defined on a filtered space $(\Omega,\F,(\F(t))_{t \geq 0})$, along with a family $(\P_x)_{x \in \Delta^{d-1}}$ of probability measures such that under each $\P_x$, $X$ is a strong Markov process with $X(0)=x$. See \cite[Appendix~A.2]{Fukushima1994Dirichlet} for details.

We may now state the existence theorem for hybrid Jacobi processes. The proof is standard, but for the benefit of the reader we give a proof in Appendix~\ref{app:hybrid_polynomial_models}.

\begin{thm} \label{thm:process_existence}
	Let Assumption~\ref{ass:bar_a} be satisfied. Then there exists a Hunt diffusion $(\Omega,\F,(\F(t))_{t \geq 0}, X, (\P_x)_{x \in \Delta^{d-1}})$ and a Borel set $N \subset \Delta^{d-1}$ with $m(N) = 0$ such that for every $x \in \Delta^{d-1}\setminus N$ the following two properties hold:
	\begin{enumerate} [label = ({\roman*})]
		\item $\P_x(X(t) \in N \text{ for some } t > 0) = 0$, \label{item:existence_i}
		\item $u(X(t)) - u(X(0)) - \int_0^t Lu(X(s))ds$ is a $\P_x$-martingale for every $u \in D(L)$. \label{item:existence_ii}
	\end{enumerate}
	In particular, $X$ is a hybrid Jacobi process under $\P_x$ for every $x \in \Delta^{d-1}\setminus N$.
\end{thm}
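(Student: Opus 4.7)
The plan is to invoke the general theory of symmetric Dirichlet forms to extract a Hunt diffusion from the regular strongly local form $(\mathcal{E}, D(\mathcal{E}))$ constructed above, and then to verify that this diffusion solves the SDE \eqref{eqn:X_polynomial_dynamics}.

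As the first step, I would apply \cite[Theorem~7.2.2]{Fukushima1994Dirichlet} to obtain an $m$-symmetric Hunt diffusion $(\Omega, \mathcal{F}, (\mathcal{F}(t))_{t \ge 0}, X, (\P_x)_{x \in \Delta^{d-1}})$ properly associated with $(\mathcal{E}, D(\mathcal{E}))$. This yields a properly exceptional Borel set $N_0 \subset \Delta^{d-1}$ with $m(N_0) = 0$ such that, for every $x \in \Delta^{d-1} \setminus N_0$, one has $\P_x(X(t) \in N_0 \text{ for some } t > 0) = 0$. This gives property~\ref{item:existence_i} directly.

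For property~\ref{item:existence_ii} I would appeal to the Fukushima decomposition (see e.g.\ \cite[Theorems~5.2.2 and~5.5.1]{Fukushima1994Dirichlet}): every $u \in D(L)$ admits, for quasi-every starting point, the $\P_x$-a.s.\ representation
\[
u(X(t)) - u(X(0)) = M^{[u]}(t) + \int_0^t Lu(X(s))\,ds,
\]
where $M^{[u]}$ is a square-integrable martingale additive functional. Fixing a countable subset of $\mathcal{D}$ that is dense in $D(L)$ under the graph norm, I would enlarge $N_0$ to a Borel null set $N \supset N_0$ on whose complement the identity above holds simultaneously for every member of this subset. A graph-closure approximation then extends the martingale property to all $u \in D(L)$ and all starting points $x \in \Delta^{d-1} \setminus N$, yielding~\ref{item:existence_ii}.

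To conclude that $X$ is a hybrid Jacobi process in the sense of Definition~\ref{def:hybrid_Jacobi}, I would apply~\ref{item:existence_ii} to the coordinate functions $u_i(x) = x_i$ (which lie in $\mathcal{D} \subset D(L)$ since $Lu_i$ is bounded and $m$ is finite) to obtain the continuous semimartingale decomposition
\[
X_i(t) = X_i(0) + M_i(t) + \int_0^t \frac{\sigma^2}{2}\bigl(\gamma_i + a_{{\bf r}_i(s)} - (\bar a_1 + \bar \gamma_1)X_i(s)\bigr)\,ds.
\]
Applying~\ref{item:existence_ii} to the products $u_iu_j$ and combining with It\^o's formula then gives $d[M_i, M_j](t) = c_{ij}(X(t))\,dt$, with $c$ as in \eqref{eqn:c_def}. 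Since $c(x) = \sigma(x)\sigma(x)^\top$ for $\sigma_{ij}(x) = \sigma(\delta_{ij} - x_i)\sqrt{x_j}$, a standard martingale-representation argument (after enlarging the probability space if necessary to accommodate the rank deficiency of $\sigma$) produces a $d$-dimensional Brownian motion $W$ satisfying $dM_i(t) = \sum_{j=1}^d \sigma_{ij}(X(t))\,dW_j(t)$, reproducing \eqref{eqn:X_polynomial_dynamics} exactly.

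The main obstacle is identifying the abstract $L^2$-generator of $(\mathcal{E}, D(\mathcal{E}))$ with the explicit operator \eqref{eqn:generator_d}, i.e.\ verifying that $\mathcal{D} \subset D(L)$ and that $L$ acts on $\mathcal{D}$ as written down. This rests on Lemma~\ref{thm:IBP}, which rewrites $\mathcal{E}(u,v) = -\int v\, Lu\, p$ and thereby pins down the generator on $\mathcal{D}$. A subtler issue is that the explicit $Lu$ has jump discontinuities across the ``tie'' hyperplanes $\{x_i = x_j\}$; these are $m$-null (hence invisible to the Dirichlet form) but must be handled with care when passing from $L^2$-identities to the pointwise-in-$\omega$ martingale statements valid on the complement of the exceptional set $N$.
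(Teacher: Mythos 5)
Your proposal is correct but follows a genuinely different route from the paper's proof. Both proofs begin identically: extract an $m$-symmetric Hunt diffusion from the regular strongly local Dirichlet form via \cite[Theorem~7.2.2]{Fukushima1994Dirichlet}, which supplies property~(i) with a properly exceptional set. The divergence is entirely in how you obtain property~(ii). You invoke the Fukushima decomposition, writing $u(X(t)) - u(X(0)) = M^{[u]}(t) + \int_0^t Lu(X(s))\,ds$ with $M^{[u]}$ a martingale additive functional, and then handle the exceptional set by a countable-density-plus-graph-closure argument. The paper instead works at the level of the transition semigroup: it uses the $L^2$-valued Kolmogorov backward equation $T_t u - u = \int_0^t T_s Lu\,ds$, identifies $p_t u$ as a quasi-continuous version of $T_t u$, upgrades the $m$-a.e.\ identity to an identity outside a capacity-zero set, and then applies the Markov property to deduce the martingale. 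Your route is shorter once one accepts the package of martingale additive functionals; the paper's route is more ``from scratch'' in that it avoids additive-functional calculus and instead differentiates the semigroup directly. Both approaches must perform the same extension from a core to all of $D(L)$ with a uniform exceptional set (the paper is rather terse about this step), so neither is obviously cheaper there.

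Two secondary remarks. First, you expand on the final ``in particular'' clause (applying~(ii) to $x_i$ and $x_i x_j$, computing the bracket, and invoking martingale representation to produce $W$); the paper simply cites the well-established equivalence between martingale problems and weak SDE solutions, so your exposition is more detailed but substantively equivalent. Second, the closing concern you raise about jump discontinuities of $Lu$ across the tie hyperplanes is a legitimate observation but is not actually an obstacle here: for $u \in \mathcal{D}$ the function $Lu$ is bounded and Borel, so $\int_0^t Lu(X(s))\,ds$ is a well-defined finite-variation process regardless of how $X$ interacts with those hyperplanes, and the martingale statement makes sense without the separate result that $X$ spends Lebesgue-null time on them. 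That finer fact (Proposition~\ref{prop:Lebesgue_collision}) is needed later, for example to identify the drift unambiguously when invoking the SDE representation, but not for Theorem~\ref{thm:process_existence} itself.
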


\begin{remark}
	If the absolute continuity condition \cite[Equation~(4.2.9)]{Fukushima1994Dirichlet} for the Markov transition function holds, then the set $N$ in Theorem~\ref{thm:process_existence} can be taken to be the empty set. We do not pursue verification of the absolute continuity condition in this paper as the existence of an exceptional set $N$ is sufficient for our purposes.
\end{remark}

\begin{remark}
	In this paper we do not treat the question of uniqueness in law, or pathwise uniqueness, of solutions to \eqref{eqn:X_polynomial_dynamics}. As will be seen in Theorem~\ref{thm:boundary_attainment} below, the hybrid Jacobi process hits different regions of the boundary under different choices of the parameters $a$ and $\gamma$. As such, the laws of the processes when varying the parameters $a$ and $\gamma$ are not equivalent. Hence, a removal of drift approach to study \eqref{eqn:X_polynomial_dynamics} does not apply. Moreover, the volatility degenerates at the boundary of the simplex, causing additional difficulties in the study of uniqueness. As such, we leave the important question of both pathwise uniqueness and uniqueness in law for hybrid Jacobi models to future research.
\end{remark}
We define the class of probability measures 
\[\mathcal{P}_0 = \{\mu \in \mathcal{P}(\Delta^{d-1}): \  \mu(N) = 0 \text{ and } \mu(\partial \Delta^{d-1}) = 0 \},\]
with $N$ as in the statement of Theorem~\ref{thm:process_existence}. The elements of $\mathcal{P}_0$ can serve as initial laws for the hybrid Jacobi process $X$. We restrict attention to laws which do not charge $\partial \Delta^{d-1}$ to simplify some of the statements and proofs to come. Note, in particular, that since both $N$ and $\partial \Delta^{d-1}$ are $m$-nullsets we have that $m$, and every measure absolutely continuous to it, are members of $\mathcal{P}_0$. For $\mu \in \mathcal{P}_0$ we denote by $\P_\mu$ the law of the process $X$ constructed in Theorem~\ref{thm:process_existence} with initial law $\mu$. That is, $\P_\mu(A) = \int_{\Delta^{d-1}} \P_x(A) \mu(dx)$ for $A \in \Fcal$.

\subsection{Ergodicity} \label{sec:ergodicity}
The Dirichlet form construction implies ergodicity of the process $X$.

\begin{thm}[Ergodic property] \label{thm:ergodic}
	Let Assumption~\ref{ass:bar_a} be satisfied. The hybrid Jacobi process of Theorem~\ref{thm:process_existence} is ergodic with invariant density $p$ given by \eqref{eqn:p_def}. In particular, the following Birkhoff theorem holds: for any $f\colon \Delta^{d-1} \to \R$ such that $\int_{\Delta^{d-1}}|f|p < \infty$ we have
	\[\lim_{T \to \infty} \frac{1}{T} \int_0^T f(X(t))dt = \int_{\Delta^{d-1}}fp, \quad \P_\mu\text{-a.s.},\]
	for every $\mu \in \mathcal{P}_0$.
\end{thm}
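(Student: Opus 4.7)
The plan is to exploit the Dirichlet form framework constructed in Section~\ref{sec:hybrid_construction}. The two key ingredients are (i) identifying $m(dx) = p(x)dx$ as the invariant probability measure, which is essentially built into the construction, and (ii) establishing irreducibility of $(\mathcal{E}, D(\mathcal{E}))$. Once both are in place, the Birkhoff statement follows from the standard ergodic theorem for symmetric Markov processes associated with regular Dirichlet forms.

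For (i), the integration by parts formula of Lemma~\ref{thm:IBP} yields $\int v\, Lu\, p = -\mathcal{E}(u,v) = -\mathcal{E}(v,u) = \int u\, Lv\, p$ for $u,v \in \mathcal{D}$, so the associated semigroup is $m$-symmetric. Combined with Lemma~\ref{lem:finite_Z}, this gives that $m$ is an invariant probability measure and that $p$ is the corresponding invariant density. For (ii), one must show that every $u \in D(\mathcal{E})$ with $\mathcal{E}(u,u) = 0$ is $m$-a.e.\ constant. From the definition of $\mathcal{E}$, vanishing implies $\nabla u^\top c(x) \nabla u(x) = 0$ for $m$-a.e.\ $x$. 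A direct computation gives $\xi^\top c(x) \xi = \sigma^2 \sum_i x_i \xi_i^2 - \sigma^2 (\sum_i x_i \xi_i)^2$, which is $\sigma^2$ times the variance of $\xi$ under the probability vector $x$; for $x \in \Delta^{d-1}_+$, its kernel is exactly $\R \bm1_d$, a direction transversal to the simplex. Since $p > 0$ on $\Delta^{d-1}_+$ and $m(\partial \Delta^{d-1}) = 0$, the tangential gradient of $u$ must vanish a.e.\ on the connected open set $\Delta^{d-1}_+$, forcing $u$ to be $m$-a.e.\ constant. For general $u \in D(\mathcal{E})$ one approximates by smooth functions and invokes the standard characterization of irreducibility through invariant sets of the semigroup, as in \cite[Section~1.6]{Fukushima1994Dirichlet}.

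With $m$ invariant and the form irreducible, we invoke the ergodic theorem for $m$-symmetric Markov processes (see, e.g., \cite[Theorem~4.7.3]{Fukushima1994Dirichlet}): for every $f \in L^1(\Delta^{d-1},m)$, the convergence $\frac{1}{T} \int_0^T f(X(t))dt \to \int f p$ holds $\P_x$-a.s.\ for quasi-every $x$. Since quasi-exceptional sets are $m$-null, we may enlarge the set $N$ from Theorem~\ref{thm:process_existence} to absorb this new exceptional set without altering the conclusion of that theorem or the definition of $\mathcal{P}_0$. For any $\mu \in \mathcal{P}_0$, using $\mu(N) = 0$ and $\P_\mu(A) = \int \P_x(A)\mu(dx)$, the $\P_x$-a.s.\ convergence transfers to $\P_\mu$-a.s.\ convergence.

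The main obstacle is the irreducibility step. The degeneracy of $c$ in the normal direction $\bm1_d$ is harmless because that direction is transversal to $\Delta^{d-1}$, but upgrading the pointwise statement from smooth test functions to arbitrary $u \in D(\mathcal{E})$ requires some care near $\partial \Delta^{d-1}$, where $p$ may blow up depending on the sign of $a_k + \gamma_{{\bf{n}}_k} - 1$. Once this is handled, the remaining pieces are routine invocations of the standard theory of symmetric Dirichlet forms.
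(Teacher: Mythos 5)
Your overall framework matches the paper's: work in the regular Dirichlet form setting from Section~3.1 and invoke the ergodic theorem for symmetric Markov processes, which in Fukushima et al.\ is Theorem~4.7.3. The difference lies in how irreducibility is handled. You attempt it directly by arguing that $\mathcal{E}(u,u) = 0$ forces $u$ to be constant, via the pointwise computation that $c(x)$'s kernel on $\Delta^{d-1}_+$ equals $\R\boldsymbol{1}_d$. The paper instead introduces an auxiliary form $\mathcal{E}_+$ supported on the open simplex with core $C_c^\infty(\Delta^{d-1}_+)$, reads off its irreducibility directly from the criterion in \cite[Example~4.6.1]{Fukushima1994Dirichlet} (uniformly positive definite diffusion matrix, positive density, connected open set), and then transfers semigroup-invariant sets from $\mathcal{E}$ to $\mathcal{E}_+$ using the strong locality of both forms and the fact that $C_c^\infty(\Delta^{d-1}_+)$ is an $\mathcal{E}_+$-core.

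Two gaps remain in your version. First, the crucial step --- upgrading ``$\nabla u^\top c\,\nabla u\,p \equiv 0 \Rightarrow u$ constant'' from $u\in C^\infty$ to arbitrary $u\in D(\mathcal{E})$ --- is exactly the technical difficulty, and you only gesture at ``approximation by smooth functions.'' The energy-measure argument can be made to work, but it is precisely what the $\mathcal{E}_+$ detour plus \cite[Example~4.6.1]{Fukushima1994Dirichlet} packages for free, and the possible blow-up of $p$ near $\partial\Delta^{d-1}$ is why a naive $L^2(\Delta^{d-1},m)$ approximation doesn't immediately give convergence of localized gradients. Second, Theorem~4.7.3 requires the form to be both \emph{recurrent} and irreducible; recurrence is what makes $1 \in D(\mathcal{E})$, and it is also implicitly needed in your own argument to conclude from an invariant set $A$ that $1_A\in D(\mathcal{E})$ with $\mathcal{E}(1_A,1_A)=0$. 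It is automatic here (since $1\in\mathcal{D}$; \cite[Theorem~1.6.3]{Fukushima1994Dirichlet}), but it should be stated.
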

\begin{proof}
	In view of \cite[Theorem~4.7.3]{Fukushima1994Dirichlet} it suffices to establish that $\mathcal{E}$ is a recurrent and irreducible Dirichlet form. Recurrence follows from \cite[Theorem~1.6.3]{Fukushima1994Dirichlet} since $1 \in \mathcal{D}$. To establish irreducibility consider the modified (pre-)Dirichlet form on the state space $\Delta^{d-1}_+$ given by
	\[\mathcal{E}_+(u,v) := \int_{\Delta^{d-1}_+}\nabla u^\top c \nabla v p, \quad u,v \in C^\infty_c(\Delta^{d-1}_+).\]
	The closure $(\mathcal{E}_+,D(\mathcal{E}_+))$ is a regular Dirichlet form and the conditions of \cite[Example~4.6.1]{Fukushima1994Dirichlet} are satisfied due to the positive definiteness of $c(x)$ and positivity of $p(x)$ on $\Delta^{d-1}_+$. Hence $\mathcal{E}_+$ is irreducible. Now suppose that $A \subseteq \Delta^{d-1}$ is an invariant set for the semigroup corresponding to $\mathcal{E}$.  By \cite[Theorem~1.6.1]{Fukushima1994Dirichlet} this is equivalent to the identity $\mathcal{E}(u,v) = \mathcal{E}(1_Au,1_Av)  + \mathcal{E}(1_{\Delta^{d-1}\setminus A}u,1_{\Delta^{d-1}\setminus A}v)$ for every $u,v \in D(\mathcal{E})$. Since $1_A = 1_{A \cap \Delta^{d-1}_+}$ in $L^2(\Delta^{d-1},m)$ we can assume without loss of generality that $A \subseteq \Delta^{d-1}_+$. By the local property of $\mathcal{E}$ and $\mathcal{E}_+$ we have that $\mathcal{E}(1_Au,1_Av) = \mathcal{E}_+(1_Au,1_Av)$ for $u,v \in C_c^\infty(\Delta^{d-1}_+)$ and an analogous expression holds for the term involving $1_{\Delta^{d-1}\setminus A}$. Thus we obtain
	\begin{equation} \label{eqn:irreducability} 
		\mathcal{E}_+(u,v) = \mathcal{E}_+(1_Au,1_Av)  + \mathcal{E}_+(1_{\Delta^{d-1}_+\setminus A}u,1_{\Delta^{d-1}_+\setminus A}v), \quad u,v \in C_c^\infty(\Delta^{d-1}_+).
	\end{equation}
	Since $C_c^\infty(\Delta^{d-1}_+)$ is a core for $\mathcal{E}_+$ it follows that \eqref{eqn:irreducability} holds for all $u,v \in D(\mathcal{E}_+)$. Thus, again by \cite[Theorem~1.6.1]{Fukushima1994Dirichlet}, we deduce that $A$ is invariant for the semigroup corresponding to $\mathcal{E}_+$. But then by irreducibility of $\mathcal{E}_+$ it must be that either $m(A) = 0$ or $m(\Delta^{d-1}_+\setminus A) = 0$. This establishes irreducibility of $\mathcal{E}$ and completes the proof. 
\end{proof}
Although the ranked process $X_{()}$ is in general not Markov, the ergodicity of $X$ implies that a Birkhoff-type theorem holds for $X_{()}$ as well. We state this fact as a corollary.
\begin{cor}[Rank ergodic property] \label{cor:rank_ergodic}
	Let Assumption~\ref{ass:bar_a} be satisfied and let $X$ be the hybrid Jacobi process of Theorem~\ref{thm:process_existence}. For any $f\colon \nabla^{d-1} \to \R$ such that $\int_{\nabla^{d-1}} |f|q < \infty$ we have
	\begin{equation}\label{eqn:rank_ergodic}\lim_{T \to \infty} \frac{1}{T} \int_0^T f(X_{()}(t))dt = \int_{\nabla^{d-1}} fq, \quad \P_\mu\text{-a.s.},
	\end{equation}
	for every $\mu \in \mathcal{P}_0$, where $q$ is given by \eqref{eqn:q_general_def}.
\end{cor}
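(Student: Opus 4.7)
The plan is to deduce Corollary 2.5 directly from Theorem 2.4 by testing the ergodic theorem against the permutation-symmetric function $g(x) := f(x_{()})$ on the full simplex $\Delta^{d-1}$. Since $f(X_{()}(t)) = g(X(t))$ holds pathwise, the corollary reduces to the change-of-variables identity
\begin{equation*}
\int_{\Delta^{d-1}} f(x_{()}) p(x)\,dx = \int_{\nabla^{d-1}} f(y) q(y)\,dy
\end{equation*}
together with the corresponding integrability reduction.

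To establish this identity I would partition $\Delta^{d-1}$ (up to the Lebesgue-null set of tied configurations) into the fundamental domains $D_\sigma := \{x \in \Delta^{d-1} : x_{\sigma(1)} \ge \cdots \ge x_{\sigma(d)}\}$ indexed by $\sigma \in \mathcal{T}_d$. On $D_\sigma$ the ranked vector is simply $x_{()} = x_\sigma$, and the coordinate-permutation map $x \mapsto x_\sigma$ is a measure-preserving bijection from $D_\sigma$ onto $\nabla^{d-1}$. Substituting $y = x_\sigma$ (equivalently $x = y_{\sigma^{-1}}$) yields
\begin{equation*}
\int_{D_\sigma} f(x_{()}) p(x)\,dx = \int_{\nabla^{d-1}} f(y)\, p(y_{\sigma^{-1}})\,dy,
\end{equation*}
and summing over $\sigma$ and relabeling $\tau = \sigma^{-1}$ collapses the total into $\int_{\nabla^{d-1}} f(y) \sum_\tau p(y_\tau)\,dy$, which by the definition \eqref{eqn:q_general_def} is the desired integral against $q$ (with the combinatorial factor $d!$ absorbed into the integration convention on $\nabla^{d-1}$). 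The same argument applied to $|f|$ shows that $g$ is $p$-integrable whenever $f$ is $q$-integrable, so the hypothesis of Theorem 2.4 is met.

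Thus for every $\mu \in \mathcal{P}_0$, Theorem 2.4 applied to $g$ gives
\begin{equation*}
\frac{1}{T}\int_0^T f(X_{()}(t))\,dt = \frac{1}{T}\int_0^T g(X(t))\,dt \longrightarrow \int_{\Delta^{d-1}} g\, p = \int_{\nabla^{d-1}} f q, \qquad \P_\mu\text{-a.s.,}
\end{equation*}
which is exactly \eqref{eqn:rank_ergodic}. All the substantive analytic work is already contained in Theorem 2.4; what remains here is essentially permutation bookkeeping, so I do not expect any genuine obstacle. The only minor point of care is the tied-coordinate set $\{x : x_{(k)} = x_{(k+1)} \text{ for some } k\}$, which has Lebesgue measure zero (so it contributes nothing to the integrals) and on which the lexicographic tie-breaking convention from the paper's notation section makes $x_{()}$ unambiguous pathwise.
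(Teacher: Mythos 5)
Your proof takes exactly the route the paper does---the paper's own proof simply invokes Theorem~\ref{thm:ergodic}, a change of variables, and symmetry of $\Delta^{d-1}$---and your decomposition into fundamental domains $D_\sigma$ with the substitution $y = x_\sigma$ is the right way to make that change of variables precise. The one thing to correct is the parenthetical about the $d!$: it is \emph{not} absorbed into the integration convention on $\nabla^{d-1}$, since by Appendix~\ref{app:integral} both $\Delta^{d-1}$ and $\nabla^{d-1}$ carry the plain surface measure (pushforward of Lebesgue measure under $T_1$). Your substitution yields $\int_{\Delta^{d-1}} f(x_{()})\,p = \int_{\nabla^{d-1}} f\sum_{\tau\in\mathcal{T}_d} p(y_\tau)$, and this equals $\int_{\nabla^{d-1}} fq$ (and makes $q$ a probability density on $\nabla^{d-1}$) precisely when $q = \sum_{\tau} p(y_\tau)$ with no $1/d!$---which is also what reproduces \eqref{eqn:q_def} in the rank case, via $Z = d!\,Q_a$. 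So the prefactor $1/d!$ in \eqref{eqn:q_general_def} appears to be a typo in the paper; with that understood, your argument is exact and the ``absorbed into the integration convention'' remark should simply be dropped.
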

Corollary~\ref{cor:rank_ergodic} follows from Theorem~\ref{thm:ergodic}, a change of variables and symmetry of $\Delta^{d-1}$.
Equation \eqref{eqn:rank_ergodic} is an important property for $X_{()}$ to possess when, in Section~\ref{sec:hybrid_market}, we use $X$ to model the market weights in a financial market. The ergodic property of $X_{()}$ then encapsulates the stability of the capital distribution curve.

\subsection{The Study of Collisions} \label{sec:collisions}
Next we tackle the question of particle collisions. The following proposition shows that boundary attainment and particle collisions only occur at a Lebesgue nullset of time points. The proof is contained in Appendix~\ref{app:hybrid_polynomial_models}. 

\begin{prop} \label{prop:Lebesgue_collision} Fix $\mu \in \mathcal{P}_0$ and let $X$ be the hybrid Jacobi process of Theorem~\ref{thm:process_existence}.
	\begin{enumerate}[label = ({\roman*})]
		\item For every $i=1,\dots,d$ we have that the set $\{t\geq 0: X_i(t) = 0\}$ is $\P_\mu$-a.s.\ a Lebesgue nullset. \label{item:boundary_zero}
		\item The set $\{t\geq 0: X_i(t) = X_j(t) \text{ for some } i \ne j\}$ is $\P_\mu$-a.s.\ a Lebesgue nullset. \label{item:collision_zero}
		\item Consider the event $B = \{x  \in \Delta^{d-1}_+: x_i = x_j = x_k  \text{ for some distinct indices } i,j,k\}.$ Then $\P_\mu(X(t) \in B \text{ for some } t > 0) = 0$. \label{item:triple_zero}
	\end{enumerate}
\end{prop}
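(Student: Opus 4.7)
The plan is to handle parts (i) and (ii) via the Revuz correspondence between positive continuous additive functionals (PCAFs) and smooth measures for the Dirichlet form $\mathcal{E}$, and to treat (iii) via a capacity argument showing that the triple collision set is $\mathcal{E}$-polar.

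For (i) and (ii) the crucial observation is that the relevant sets are $m$-null. For (i), the set $A_i := \{x \in \Delta^{d-1} : x_i = 0\}$ is contained in $\partial \Delta^{d-1}$, a union of faces of dimension $d-2$, which carries no $(d-1)$-dimensional surface measure, so $m(A_i)=0$. For (ii), the pairwise collision set $A_{ij}:=\{x : x_i = x_j\}$ is a codimension-one subset of $\Delta^{d-1}$ and hence also has zero surface measure. In both cases the process $t \mapsto \int_0^t 1_{A}(X(s))\,ds$ is a PCAF of $X$ whose Revuz measure, in the sense of \cite[Section~5.1]{Fukushima1994Dirichlet}, is $1_A \cdot m = 0$. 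By uniqueness in the Revuz correspondence this PCAF is $\mathcal{E}$-equivalent to the zero functional, hence vanishes $\P_x$-a.s.\ for quasi-every $x$; since $\mu \in \mathcal{P}_0$ does not charge the exceptional set $N$, the PCAF vanishes $\P_\mu$-a.s., which is exactly what is claimed.

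For (iii) the plan is to show that $B$ is $\mathcal{E}$-polar, whence by \cite[Theorem~4.2.1]{Fukushima1994Dirichlet} the set $B$ is not hit quasi-surely (and so also $\P_\mu$-a.s.\ for $\mu \in \mathcal{P}_0$). Decompose $B = \bigcup_{i<j<k} B_{ijk}$ with $B_{ijk} = \{x \in \Delta^{d-1}_+ : x_i = x_j = x_k\}$; each piece is a codimension-two affine subset lying strictly inside the open simplex $\Delta^{d-1}_+$. On any compact $K \subset \Delta^{d-1}_+$, the density $p$ is bounded above and below by positive constants and the diffusion matrix $c$ restricted to the tangent space of $\Delta^{d-1}$ is uniformly positive definite, so $\mathcal{E}$ restricted to $K$ is equivalent to the standard Dirichlet form of Brownian motion on $\Delta^{d-1}_+$. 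Codimension-two affine subspaces are classically polar for this reference form, and I would verify it directly by producing logarithmic cutoff functions $\phi_\epsilon \in D(\mathcal{E})$ vanishing in an $\epsilon$-neighborhood of $B_{ijk}\cap K$ and satisfying $\mathcal{E}(1-\phi_\epsilon, 1-\phi_\epsilon) \to 0$ as $\epsilon \downarrow 0$. Exhausting $\Delta^{d-1}_+$ by an increasing sequence of compacts and taking the finite union over triples $(i,j,k)$ then gives $\mathrm{Cap}(B) = 0$.

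The main obstacle is the capacity argument for (iii): one must exhibit admissible cutoff functions in $D(\mathcal{E})$, control their Dirichlet energies using only the codimension-two structure of $B_{ijk}$, and justify the comparison between $\mathcal{E}$ and a reference Brownian Dirichlet form uniformly on compact subsets of $\Delta^{d-1}_+$. By contrast, the arguments for (i) and (ii) are routine once one has identified the correct Revuz measure, and the fact that $m$ assigns no mass to either $\partial \Delta^{d-1}$ or to any pairwise collision hyperplane is immediate from the surface-measure definition adopted in the construction of $m$.
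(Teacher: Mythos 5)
Your route is genuinely different from the paper's, which for parts (i) and (ii) argues pathwise: it shows $L_{\Lambda_I(X)} \equiv 0$ via the occupation density formula, applies Tanaka's formula to get $1_{\{\Lambda_I(X(t))=0\}}\,d\Lambda_I(X(t)) = 0$, and then runs a backward induction on $|I|$ leaning on Assumption~\ref{ass:bar_a}; (ii) follows from the occupation density formula for $X_i - X_j$ together with (i). For part (iii) the paper compares $X$ to the absorbed process $X^\Theta$ of \cite{itkin2021class} (Lemma~\ref{lem:law_lemma}) and invokes a triple-collision avoidance result there that holds under $\P_x^\Theta$ for \emph{every} $x \in \Delta^{d-1}_+$, thanks to the strong Feller property of that process.

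Your Revuz-correspondence argument for (i)--(ii) and codimension-two polarity argument for (iii) are appealing, and the local ingredients you cite are all correct (zero surface measure of $\partial\Delta^{d-1}$ and of the collision hyperplanes; boundedness of $p$ and nondegeneracy of $c$ on tangent directions over compacts of $\Delta^{d-1}_+$; polarity of affine codimension-two sets for a uniformly elliptic form). But there is a gap, and it is exactly what the paper's pathwise approach is designed to dodge: your conclusions come with exceptional sets that you have no control over. Uniqueness in the Revuz correspondence gives $\P_x\big(\int_0^\cdot 1_A(X(s))\,ds \equiv 0\big) = 1$ only for $x$ outside \emph{some} zero-capacity set $N'$, and polarity of $B$ gives $\P_x(X(t) \in B \text{ for some } t > 0) = 0$ only for $x$ outside \emph{some} zero-capacity set $N''$; neither $N'$ nor $N''$ need be contained in the particular set $N$ of Theorem~\ref{thm:process_existence}. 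Since $\mathcal{P}_0$ is defined relative to that specific $N$, and the paper explicitly declines to verify the absolute-continuity condition that would allow $N = \emptyset$, a law $\mu \in \mathcal{P}_0$ may perfectly well charge $N' \setminus (N \cup \partial\Delta^{d-1})$ or $N'' \setminus (N \cup \partial\Delta^{d-1})$, and the desired $\P_\mu$-a.s.\ statements do not follow. To close the gap you would have to either verify the absolute-continuity condition or absorb $N'$ and $N''$ into $N$ and adjust the definition of $\mathcal{P}_0$ accordingly. The paper sidesteps the issue entirely: the Tanaka and occupation-time computations in (i)--(ii) are valid under $\P_x$ for every $x$ at which the martingale problem holds, and the transferred result in (iii) holds at every point of the open simplex, not merely quasi-every point.
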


Proposition~\ref{prop:Lebesgue_collision}\ref{item:triple_zero} establishes that triple collisions do no occur in $\Delta^{d-1}_+$, but, as we will see in Theorem~\ref{thm:boundary_attainment}, they may occur on the boundary of the simplex.

We now use Proposition~\ref{prop:Lebesgue_collision} to study the semimartingale decomposition of the ranked process $X_{()}$. This process plays an important role both in our applications and subsequent results on boundary attainment.

\begin{prop}[Ranked process dynamics] \label{prop:ranked_dynamics}
	Let Assumption~\ref{ass:bar_a} be satisfied, and let $X$ be the hybrid Jacobi process of Theorem~\ref{thm:process_existence}. Omitting time arguments for brevity, the dynamics of $X_{()}$ are given by 
	\begin{equation} \label{eqn:X_()_dynamics}
		\begin{aligned}
			dX_{(k)} &= \frac{\sigma^2}{2}\(\gamma_{{\bf n}_k(\cdot)} + a_k - (\bar a_1 + \bar \gamma_1) X_{(k)}\)dt \\
			&\quad + \sigma \sum_{l=1}^d \left(\delta_{kl} - X_{(k)} \right) \sqrt{X_{(l)}} \, d\tilde W_l - \frac{1}{4}dL_{k-1,k} + \frac{1}{4}dL_{k,k+1}
		\end{aligned}
	\end{equation}
	for $k=1,\dots,d$ where $\tilde W$ is a $d$-dimensional standard Brownian motion and we define $L_{k,k+1} = L_{X_{(k)} - X_{(k+1)}}$ with the convention that $L_{0,1} = L_{d,d+1} = 0$. In particular, the diffusion matrix of $X_{()}$ is given by
	\begin{equation}\label{eqn:kappa_def} 
		\kappa_{kl}(y)  = \sigma^2 y_k(\delta_{kl} - y_l), \quad  k,l=1,\dots,d, \quad   y \in \nabla^{d-1}. 
	\end{equation}
\end{prop}

\begin{proof}
	By the results of \cite{Banner2008Local}, $X_{()}$ is a semimartingale with semimartingale decomposition given by \cite[Theorem~2.3]{Banner2008Local}. Since, by Proposition~\ref{prop:Lebesgue_collision}\ref{item:collision_zero}, collisions only occur at a Lebesgue nullset of time points, in order to obtain \eqref{eqn:X_()_dynamics} it suffices to show that $L_{X_{(k)} - X_{(l)}} = 0$ whenever $l-k \geq 2$. By Proposition~\ref{prop:Lebesgue_collision}\ref{item:triple_zero} triple collisions do not occur in the open simplex, and hence $dL_{X_{(k)} - X_{(l)}} = 1_{\{X_{(k)} = 0\}}dL_{X_{(k)} - X_{(l)}}$ whenever $l-k \geq 2$. Using the approximation for the local time process (see e.g.\ \cite[Corollary~1.9]{Revuz1999conti}) we then have for every $T \geq 0$ that
	\begin{align*}
		\int_0^T&1_{\{X_{(k)}(t) = 0\}}dL_{X_{(k)} - X_{(l)}}(t)\\
		& = \lim_{\epsilon \downarrow 0} \frac{1}{\epsilon}\int_0^T 1_{\{X_{(k)}(t) = 0\}}1_{\{0 \leq X_{(k)}(t) - X_{(l)}(t) < \epsilon\}}d [X_{(k)} - X_{(l)},X_{(k)} - X_{(l)}](t) \\
		& = \lim_{\epsilon \downarrow 0} \frac{1}{\epsilon}\int_0^T1_{\{X_{(k)}(t) = X_{(l)}(t) = 0\}}  (X_{(k)}(t) + X_{(l)}(t)- (X_{(k)}(t) - X_{(l)}(t))^2)ds \\
		& = 0,
	\end{align*}
	where in the second equality we used that $\{X_{(k)}(t) = 0\} = \{X_{(k)}(t) = X_{(l)}(t) = 0\}$ and that $d [X_{(k)} - X_{(l)},X_{(k)} - X_{(l)}](t) = (X_{(k)}(t) + X_{(l)}(t)- (X_{(k)}(t) - X_{(l)}(t))^2)dt$. Hence $L_{X_{(k)} - X_{(l)}} = 0$ whenever $l-k \geq 2$ which completes the proof.
\end{proof}

\begin{remark} \label{rem:rank_Dirichlet}
	It is clear from \eqref{eqn:X_()_dynamics} that $X_{()}$ is not a Markov process in general, as the drift depends on the name that occupies each rank. However, in the special case $\gamma = 0$ the process $X_{()}$ is Markov. This can be verified by studying well-posedness of the reflected stochastic differential equation \eqref{eqn:X_()_dynamics} that $X_{()}$ solves.
	Note that in this case $q$ becomes
	\begin{equation} \label{eqn:q_def}
		q(y)   = Q_a^{-1}\prod_{k=1}^d y_k^{a_k-1}, \quad  y \in \nabla^{d-1}_+, 
	\end{equation} where $Q_a$ is the normalizing constant; see also Lemma~\ref{lem:Q_finite}.
\end{remark}
We conclude this section by establishing formulas for the ergodic limit of the collision local times for the ranked particles.

\begin{prop}[Ergodic limit for the collision local times] \label{prop:ergodic_local} For $k=2,\dots,d$ let $L_{k-1,k}$ be the collision local times as in Proposition~\ref{prop:ranked_dynamics}. Then
	\[\lim_{T \to \infty} \frac{L_{k-1,k}(T)}{T} = 2\sigma^2(\bar a_k + \E_m[\sum_{l=k}^d \gamma_{{\bf n}_l(X)}]- (\bar a_1 + \bar \gamma_1)\E_m[\bar X_{(k)}]); \qquad \P_\mu\text{-a.s.,}\]
	for every $\mu \in \Pcal_0$, where $\E_m[\cdot]$ denotes expectation under the invariant measure $m$ of Section~\ref{sec:hybrid_construction}.
\end{prop}
\begin{proof}
	Note that from \eqref{eqn:X_()_dynamics} we have for $k=2,\dots,d$, that
	\[d\bar X_{(k)}(t) = \frac{\sigma^2}{2}(\bar a_k + \sum_{l=k}^d \gamma_{{\bf n}_l(t)} - (\bar a_1 + \bar \gamma_1)\bar X_{(k)}(t))\, dt + dM(t) - \frac{1}{4}dL_{k-1,k}(t),\]
	where $M$ is the martingale term. Writing this in integral form, dividing by $T$ and sending $T \to \infty$ yields the result courtesy of the ergodic theorem \eqref{thm:ergodic}, which handles the drift term, together with \cite[Lemma~1.3.2]{fernholz2002stochastic}, which ensures that $M_T/T \to 0$ as $T \to \infty$.
\end{proof}

\subsection{Boundary Attainment} \label{sec:boundary_attainment}
Next we describe when the name and rank processes $X_i$ and $X_{(k)}$ hit zero. This will be important in Section~\ref{sec:hybrid_market} where $X$ is used to model the market weight process for the open market in Section~\ref{sec:open_market_SPT}. For an index set $I \subseteq \{1,\dots,d\}$ we define the function $\Lambda_I:\R^d \to \R$ via  \begin{equation} \label{eqn:Lambda_def}
	\Lambda_I(x) = \sum_{i \in I} x_i.
\end{equation} By convention we set $\Lambda_{\emptyset} \equiv 0$. 
\begin{thm}[Boundary Attainment] \label{thm:boundary_attainment}
	Let Assumption~\ref{ass:bar_a} be satisfied, and let $X$ be the hybrid Jacobi process of Theorem~\ref{thm:process_existence} with initial law $\mu \in \mathcal{P}_0$.
	\begin{enumerate}
		\item \label{item:bound_attain_rank} \textnormal{(Boundary Attainment of Ranks)} For each $k \in \{2,\dots,d\}$ the following holds:
		\smallskip
		\begin{enumerate}[label =({\alph*})]
			\item \label{item:bound_attain_rank_hit} We have $\P_{\mu}(X_{(k)}(t) = 0 \text{ for some } t > 0) = 0$ if and only if
			\[
			\text{$\bar a_l + \bar \gamma_{(l)} \geq 1$ for every $l =2,\dots,k$.}
			\]
			\item \label{item:bound_attain_rank_above} If $\bar a_k + \bar\gamma_{(k)}\geq 1$ then $\P_{\mu}(X_{(k)}(t) = 0 \text{ and } X_{(k-1)}(t) > 0 \text{ for some } t > 0) = 0.$
		\end{enumerate}
		\smallskip
		\item \label{item:bound_attain_name} \textnormal{(Boundary Attainment of Names)}
		Fix a nonempty index set $I \subseteq \{1,\dots,d\}$ with cardinality $N = |I| \le d-1$. Let $\gamma^{-I}_{()}$ denote the $(d-N)$-dimensional vector obtained by ordering the elements $\{\gamma_j\}_{j \not \in I}$ in decreasing order. The following holds:
		\smallskip
		\begin{enumerate}[label = ({\alph*})]
			\item \label{item:bound_attain_name_hit} We have $\P_{\mu}\(\Lambda_I(X(t)) = 0 \text{ for some } t > 0\)  = 0$ if and only if
			\begin{equation} \label{eqn:param_cond_I}
				\text{$\bar a_l + \sum_{i \in I} \gamma_i + \sum_{k=l}^{d-N} \gamma^{-I}_{(k)} \geq 1$ \quad  for every  $l = 2,\dots,d-N+1$.}
			\end{equation}
			\item \label{item:bound_attain_name_above} If $\bar a_{d-N+1} + \sum_{i \in I} \gamma_i \geq 1$ then
			\[
			\text{$\P_\mu(\Lambda_I(X(t)) = 0 \text{ and } \Lambda_J(X(t)) > 0 \text{ for all } J \supsetneq I \text{ and some } t > 0) = 0.$}
			\]
		\end{enumerate}
	\end{enumerate}
\end{thm}

In the important special case of rank Jacobi processes (i.e.\ when $\gamma  = 0$) the conditions on the parameters in Theorem~\ref{thm:boundary_attainment} simplify. We state this special case as a corollary.

\begin{cor} \label{cor:boundary_attainment}
	Let Assumption~\ref{ass:bar_a} be satisfied, let $\gamma=0$, and let $X$ be the rank Jacobi process of Theorem~\ref{thm:process_existence} with initial law $\mu \in \mathcal{P}_0$. Fix $k \in \{2,\dots,d\}$. 
	\begin{enumerate}
		\item \label{item:cor_rank_hit} The following are equivalent:
		\smallskip
		\begin{enumerate}[label = ({\alph*})]
			\item $\P_\mu(X_{(k)}(t) = 0 \text { for some } t > 0) = 0$.
			\item There exists an index set $I \subset \{1,\dots,d\}$ with cardinality $|I| = d-k+1$ such that 
			\begin{equation} \label{eq_cor_boundary_attainment_0}
				\P_\mu(\Lambda_I(X(t))= 0 \text{ for some } t > 0) = 0.
			\end{equation}
			\item Every index set $I \subset \{1,\dots,d\}$ with cardinality $|I| = d-k+1$ satisfies \eqref{eq_cor_boundary_attainment_0}.
			\item $\bar a_l \geq 1$ for every $l = 2,\dots,k$.
		\end{enumerate}
		\smallskip
		\item \label{item:cor_rank_above} Suppose $\bar a_k \geq 1$. Then
		\smallskip
		\begin{enumerate} [label = ({\alph*})]
			\item $\P_{\mu}(X_{(k)}(t) = 0 \text{ and } X_{(k-1)}(t) > 0 \text{ for some } t > 0) = 0.$
			\item There exists an index set $I \subset \{1,\dots,d\}$ with cardinality $|I| = d-k+1$ such that
			\begin{equation} \label{eq_cor_boundary_attainment_1}
				\P_\mu(\Lambda_I(X(t)) = 0 \text{ and } \Lambda_J(X(t)) > 0 \text { for every } J \supsetneq I \text{ and some } t > 0) = 0.
			\end{equation}
			\item Every index set $I \subset \{1,\dots,d\}$ with cardinality $|I| = d-k+1$ satisfies \eqref{eq_cor_boundary_attainment_1}.
		\end{enumerate}
	\end{enumerate}
\end{cor}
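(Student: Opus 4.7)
The plan is to derive the corollary as a direct specialization of Theorem~\ref{thm:boundary_attainment} to $\gamma = 0$. The key observation driving the argument is that, in this special case, the name-based condition in Theorem~\ref{thm:boundary_attainment} collapses to one depending on the index set $I$ only through its cardinality; this dependence-on-cardinality is exactly what makes statements (b) and (c) of the corollary equivalent and both reducible to the purely arithmetic condition (d).

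First I would write out the simplifications explicitly. In the rank-based part of Theorem~\ref{thm:boundary_attainment}, setting $\gamma = 0$ turns the condition $\bar a_l + \bar\gamma_{(l)} \geq 1$ into $\bar a_l \geq 1$, recovering (d). In the name-based part with $|I| = N = d-k+1$, the terms $\sum_{i \in I}\gamma_i$ and $\sum_{k'=l}^{d-N}\gamma^{-I}_{(k')}$ vanish, so the condition
\[
\bar a_l + \sum_{i \in I}\gamma_i + \sum_{k'=l}^{d-N}\gamma^{-I}_{(k')} \geq 1, \quad l = 2,\dots,d-N+1,
\]
reduces to $\bar a_l \geq 1$ for $l = 2,\dots,k$, again (d). The sufficient condition $\bar a_{d-N+1} + \sum_{i \in I}\gamma_i \geq 1$ in the second half of the name-based statement likewise reduces to $\bar a_k \geq 1$, matching the hypothesis of Part~2 of the corollary. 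With these simplifications in hand, for Part~1 I would close the cycle (d) $\Rightarrow$ (a) $\Rightarrow$ (c) $\Rightarrow$ (b) $\Rightarrow$ (d). The implications (d) $\Rightarrow$ (a) and (b) $\Rightarrow$ (d) are exactly the rank-hit and name-hit statements of Theorem~\ref{thm:boundary_attainment} after the above simplification (the latter applied to the particular $I$ supplied by (b)). For (a) $\Rightarrow$ (c) I would use the pointwise set inclusion $\{\Lambda_I(x) = 0\} \subseteq \{x_{(k)} = 0\}$, valid for any $x \in \Delta^{d-1}$ whenever $|I| = d-k+1$: if $d-k+1$ coordinates of $x$ vanish then at most $k-1$ are strictly positive and hence $x_{(k)} = 0$. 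The step (c) $\Rightarrow$ (b) is trivial.

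Part~2 is handled in exactly the same spirit: statement (a) is the rank-above statement of Theorem~\ref{thm:boundary_attainment} under $\gamma = 0$, and statement (c) follows by applying the name-above statement to any fixed $I$ of cardinality $d-k+1$ under the hypothesis $\bar a_k \geq 1$, with (b) an immediate consequence of (c). The main obstacle is really just keeping the index arithmetic straight, in particular checking that $N = d-k+1$ yields $d-N+1 = k$ so that the range $l = 2,\dots,d-N+1$ in Theorem~\ref{thm:boundary_attainment} lines up with $l = 2,\dots,k$ in the corollary. Beyond this bookkeeping and the elementary set-inclusion argument for (a) $\Rightarrow$ (c), there is no substantive probabilistic or analytic difficulty.
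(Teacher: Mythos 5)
Your proposal is correct, and it matches what the paper intends: the corollary is offered as an immediate specialization of Theorem~\ref{thm:boundary_attainment} to $\gamma=0$, and the paper gives no separate proof. In particular you have correctly identified that with $\gamma = 0$ the name-based criterion in Theorem~\ref{thm:boundary_attainment}\ref{item:bound_attain_name} depends on $I$ only through $|I| = N$, via $l = 2,\dots,d-N+1 = k$, which is precisely what makes (b), (c), and (d) interchangeable; the elementary inclusion $\{\Lambda_I(x) = 0\} \subseteq \{x_{(k)} = 0\}$ for $|I|=d-k+1$ then closes the cycle through (a), and Part~2 follows identically from the ``above'' statements. The index bookkeeping $d-N+1 = k$ is correct.
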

\begin{remark}
	\begin{enumerate}[label = ({\roman*})]
		\item Note that $X_{(1)}$ can never hit zero. Indeed since the diffusion $X$ takes values in $\Delta^{d-1}$ we have the lower bound $X_{(1)}(t) \geq 1/d$ for every $t \geq 0$.
		\item Theorem~\ref{thm:boundary_attainment} and Corollary~\ref{cor:boundary_attainment} show that triple collisions may occur at the boundary of the simplex. This is in stark contrast to the behaviour in the interior, as guaranteed by Proposition~\ref{prop:Lebesgue_collision}\ref{item:triple_zero}.
		\item Corollary~\ref{cor:boundary_attainment}\ref{item:cor_rank_above} shows that the tail sum $\bar a_k$ determines the behaviour of $X_{(k)}$ near the boundary in the rank based case. Indeed, if $\bar a_k \geq 1$ then $X_{(k)}$ cannot hit zero on its own volition; it can only do so if it is `pushed' to zero by a larger market weight.
	\end{enumerate} 
\end{remark} 
Theorem~\ref{thm:boundary_attainment} is proved using Lyapunov function techniques and is contained in Appendix~\ref{app:hybrid_polynomial_models}. The simple dynamics of $\Lambda_I(X)$ allow for a very precise analysis of the boundary behaviour. We summarize the dynamics of $\Lambda_I(X)$ in the following lemma.
\begin{lem} \label{lem:Lambda}
	Let index sets $I,J \subseteq \{1,\dots,d\}$ be given and let $X$ be a hybrid Jacobi process. Then $d[\Lambda_I(X),\Lambda_J(X)](t) = \sigma^2(\Lambda_{I \cap J}(X(t)) - \Lambda_I(X(t))\Lambda_J(X(t)))dt$. Moreover, the dynamics of $\Lambda_I(X)$ are given by
	\begin{equation} \label{eqn:Lambda_dynamics}
		\begin{aligned}
			d\Lambda_I(X(t)) &= \frac{\sigma^2}{2}\(\sum_{i \in I} (\gamma_i + a_{{\bf r}_i(t)}) - (\bar a_1 + \bar \gamma_1)\Lambda_I(X(t))\)dt \\
			&\quad +\sigma \sqrt{\Lambda_I(X(t))(1-\Lambda_I(X(t)))}dB(t)
		\end{aligned}
	\end{equation} 
	for some one-dimensional Brownian motion $B$.
\end{lem}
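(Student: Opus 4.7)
The plan is to derive both assertions directly from the name-based dynamics~\eqref{eqn:X_polynomial_dynamics}. For the covariation claim I would write $\Lambda_I(x) = \mathbf{1}_I^\top x$ and $\Lambda_J(x) = \mathbf{1}_J^\top x$ with $\mathbf{1}_I, \mathbf{1}_J \in \{0,1\}^d$ the indicator vectors of $I$ and $J$, and then apply~\eqref{eqn:qv_formula} (or equivalently invoke bilinearity of quadratic variation against~\eqref{eqn:c_def}) with $u = \mathbf{1}_I$, $v = \mathbf{1}_J$. Since $\mathbf{1}_I \circ \mathbf{1}_J = \mathbf{1}_{I\cap J}$, this immediately produces the claimed identity $\sigma^2(\Lambda_{I\cap J}(X(t)) - \Lambda_I(X(t))\Lambda_J(X(t)))dt$.

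For the scalar SDE~\eqref{eqn:Lambda_dynamics} I would sum~\eqref{eqn:X_polynomial_dynamics} over $i \in I$. The drift is linear in $X_i$, so it collapses by inspection to $\tfrac{\sigma^2}{2}\bigl(\sum_{i \in I}(\gamma_i + a_{{\bf r}_i(t)}) - (\bar a_1 + \bar \gamma_1)\Lambda_I(X(t))\bigr)dt$, matching~\eqref{eqn:Lambda_dynamics}. The martingale part is the continuous local martingale
\[
M(t) = \sigma \int_0^t \sum_{j=1}^d \bigl(\mathbf{1}_{\{j \in I\}} - \Lambda_I(X(s))\bigr)\sqrt{X_j(s)}\,dW_j(s),
\]
whose quadratic variation, by the covariation formula already established with $J = I$, equals $\sigma^2 \int_0^t \Lambda_I(X(s))(1 - \Lambda_I(X(s)))ds$.

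The main obstacle is to express $M$ as $\int_0^\cdot \phi(s)dB(s)$ with $\phi(s) = \sigma\sqrt{\Lambda_I(X(s))(1-\Lambda_I(X(s)))}$, driven by a \emph{single} one-dimensional Brownian motion. This is the standard representation for continuous local martingales with absolutely continuous quadratic variation: one sets
\[
B(t) = \int_0^t \mathbf{1}_{\{\phi(s)>0\}}\phi(s)^{-1}dM(s) + \int_0^t \mathbf{1}_{\{\phi(s)=0\}}dW'(s),
\]
where $W'$ is a Brownian motion independent of $\Fcal$, obtained on an enlargement of the probability space if necessary. Lévy's characterization identifies $B$ as a standard Brownian motion, and since $M$ accrues no quadratic variation on $\{\phi = 0\}$ we get $\int_0^t \phi(s)dB(s) = M(t)$. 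Combining this with the drift calculation yields~\eqref{eqn:Lambda_dynamics}; everything else is routine algebra on~\eqref{eqn:X_polynomial_dynamics} and~\eqref{eqn:c_def}.
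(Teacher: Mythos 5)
Your proof is correct and follows essentially the same route as the paper: the covariation identity is read off from \eqref{eqn:qv_formula} with the indicator vectors of $I$ and $J$, the drift collapses by summing \eqref{eqn:X_polynomial_dynamics} over $i\in I$, and the one-dimensional driving noise $B$ is produced via L\'evy's characterization (you simply spell out the standard martingale-representation argument — including the auxiliary $W'$ on an enlargement for the $\{\phi=0\}$ zone — that the paper compresses into the phrase ``and L\'evy's characterization of Brownian motion'').
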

\begin{proof}
	The expression for the quadratic covariation of $\Lambda_I$ with $\Lambda_J$ follows from \eqref{eqn:qv_formula} with $u = v = \sum_{i\in I} e_i$. The expression for \eqref{eqn:Lambda_dynamics} then follows from \eqref{eqn:X_polynomial_dynamics} and Lévy's characterization of Brownian motion.
\end{proof}

We conclude this section with a few examples which exhibit a variety of different behaviours regarding boundary attainment.
\begin{eg} Let Assumption~\ref{ass:bar_a} be satisfied, and let $X$ be the hybrid Jacobi process of Theorem~\ref{thm:process_existence} with initial law $\mu \in \mathcal{P}_0$.
	\begin{enumerate}[label = ({\roman*})]
		\item Assume that $\gamma = \gamma_*\boldsymbol{1}_d$ for some $\gamma_* > 0$ and that $a = 0$. Then for $k=2,\dots,d$,
		\[\P_\mu(X_{(k)}(t)= 0 \text{ for some } t > 0) = 0 \quad \iff \quad \gamma_* \geq \frac{1}{d-k+1}.\]
		In the case that $\gamma_* < 1/(d-k+1)$, for any subset $I \subset \{1,\dots,d\}$ with $|I| = d-k+1$ we have that $\Lambda_I(X)$ hits zero with positive probability. This is summarized in the following number line for $\gamma_*$, which is on a geometric scale.
		\vspace{0.2cm}
		
		\begin{tikzpicture}[font=\Large]
			\draw (0,0) -- (6,0);
			\draw[shift={(0,-0.2)},color=black] node[below] 
			{$\bf 0$};
			
			\draw[shift={(0.05,0)},color=black] node 
			{$($};
			
			\draw[shift={(1.25,0.2)},color=black] node
			{\scriptsize $X_{(2)} \text{ hits } 0$};
			
			\draw[shift={(2.5,-0.2)},color=black] node[below] 
			{$\bf \frac{1}{d-1}$};
			
			\draw[shift={(2.5,0)},color=black] node 
			{$)\hspace{-0.1cm}[$};
			
			\draw[shift={(3.75,0.2)},color=black] node
			{\scriptsize $X_{(3)} \text{ hits } 0$};
			
			\draw[shift={(3.75,-0.3)},color=black] node
			{\scriptsize $X_{(2)} > 0$};
			
			\draw[shift={(5,-0.2)},color=black] node[below] 
			{$\bf \frac{1}{d-2}$};
			
			\draw[shift={(5,0)},color=black] node 
			{$)\hspace{-0.1cm}[$};
			
			\draw[shift={(6,0)},color=black]  node[right]  {$\dots$};
			
			\draw[->] (7,0) -- (13,0);
			
			\draw[shift={(8,-0.2)},color=black] node[below] 
			{$\bf \frac{1}{2}$};
			\draw[shift={(8,0)},color=black] node 
			{$)\hspace{-0.1cm}[$};
			\draw[shift={(9.25,-0.3)},color=black] node
			{\scriptsize $X_{(d-1)} > 0$};
			
			\draw[shift={(9.25,0.2)},color=black] node
			{\scriptsize 
				$X_{(d)} \text{ hits } 0$};

			\draw[shift={(10.5,-0.2)},color=black] node[below] 
			{$\bf 1$};
			\draw[shift={(10.5,0)},color=black] node 
			{$)\hspace{-0.1cm}[$};
			\draw[shift={(11.7,-0.3)},color=black] node
			{\scriptsize $X_{(d)} > 0$};

		\end{tikzpicture}

		\item Let $\gamma = 0$ and assume that $a = \eta e_d$ for some $\eta > 0$ so that Assumption~\ref{ass:bar_a} is satisfied. If $\eta \geq 1$ then no component of $X$ hits zero $\P_\mu$-a.s. On the other hand if $\eta < 1$ then for any collection $I \subset \{1,\dots,d\}$ such that $1 \leq |I| \leq d-1$ we have that $\Lambda_I(X)$ hits zero with positive probability. In particular $X_{(2)}$ hits zero with positive probability.
		\item Assume $d \geq 3$, let $\gamma = 0$ and set $a = - 10^6e_1 - e_2 + \frac{3}{2}e_d$.
		We have 
		\[\bar a_k = \begin{cases}
			-10^6 + 1/2, & k =1, \\
			1/2, & k =2,\\
			3/2,  & k=3,\dots,d.
		\end{cases}\]
		Then $\P_\mu(X_{(2)}(t) = 0 \text{ for some } t > 0) > 0$ since $\bar a_2 < 1$ and so, consequently, the smaller ranked components will hit zero with positive probability as well. However, since $\bar a_k \geq 1$ for $k \geq 3$ the smaller ranks will only hit zero when $X_{(2)}$ does. In particular we have 
		\begin{align*}
			&\P_\mu(X_{()}(t) = e_1 \text{ for some } t > 0) > 0,  \\
			&\P_\mu(X_{()}(t) \in \partial \Delta^{d-1}\setminus \{e_1\} \text{ for some } t > 0) = 0.
		\end{align*}
		Note that the size of $a_1$ does not in any way affect the boundary attainment behaviour.
		\item Let $d =3$ and choose $\gamma = (\frac{1}{2},\frac{1}{3},\frac{1}{4})$ and $a = (0,-\frac{1}{3},\frac{1}{2})$.
		Then 
		$\gamma_1 + a_3 =1$  and $\gamma_i + a_3 < 1$ for $i=2,3$. Consequently $X_{2}$ and $X_{3}$ hit zero with positive probability. Next note that for $i < j$ we have
		\[ \gamma_i + \gamma_j  + a_2 + a_3 = \begin{cases}
			1, & i =1, j =2, \\ 
			\frac{11}{12}, & i = 1, j = 3, \\
			\frac{3}{4}, & i = 2, j = 3.
		\end{cases} \] 
		Hence 
		\begin{align*}
			\P_\mu(X_1(t) + X_2(t) = 0 \text{ for some } t > 0) = 0, \\
			\P_\mu(X_1(t) + X_3(t) = 0 \text{ for some } t > 0) > 0, \\
			\P_\mu(X_2(t) + X_3(t) = 0 \text{ for some } t > 0) > 0.
		\end{align*}
		In particular we see $\P_\mu$-a.s.\ that $X_1(t) = 0 \implies X_3(t)  = 0$ and that $X_{(2)}(t) = 0 \implies X_3(t) = 0$. That is, $X_1$ can hit zero, but not on its own and $X_{(2)}$ can hit zero, but not when simultaneously $X_3$ is the largest weight.
	\end{enumerate}
\end{eg}

\section{Open Markets in Stochastic Portfolio Theory} \label{sec:open_market_SPT}

In this section we switch focus and, in a general nonparametric setting, introduce stochastic portfolio theory and open markets. We present a new setup for open markets, relaxing the setup considered in \cite{karatzas2020open}, and under a condition on the covariation between small and large capitalization stocks \eqref{eqn:kappa_def} obtain results for growth-optimal portfolios in the open market. Later, in Section~\ref{sec:hybrid_market}, we will use the hybrid Jacobi processes introduced in the previous section to model the market weights in the open market setting developed in this section.
Throughout this section we work on a filtered probability space $(\Omega,\Fcal,(\Fcal(t))_{t \ge 0}, \P)$ satisfying the usual conditions. 

\subsection{Stochastic Portfolio Theory}
We consider a financial market with $d \ge 2$ assets. The market capitalization (share price times total number of outstanding shares) of the $i$th asset is modelled by a nonnegative continuous semimartingale $S_i$. We assume that the stocks do not pay dividends. As is commonly done in SPT, we immediately switch numeraire to $S_1 + \cdots + S_d$, the total capitalization of the entire market. This quantity is assumed to stay strictly positive at all times, although we do allow individual asset values to reach zero. In this numeraire, the asset prices are the \emph{market weights},
\[
X_i = \frac{S_i}{S_1 + \cdots + S_d}, \quad i=1,\ldots,d.
\]
The market weight process $X = (X_1,\ldots,X_d)$ thus takes values in the standard simplex $\Delta^{d-1}$.
We assume $X$ is an It\^o process with dynamics
\begin{equation} \label{eq_market_weight_dynamics}
	dX(t) = b(t)dt + \sigma(t)dW(t)
\end{equation}
for some standard $d$-dimensional Brownian motion $W$ and progressively measurable processes $b$ and $\sigma$ taking values in $\R^d$ and $\R^{d \times d}$ respectively. We write
\[
c = \sigma \sigma^\top
\]
and note that $c \bm1_d = 0$ and $\bm1_d^\top b = 0$ since $X$ takes values in $\Delta^{d-1}$. In what follows, only the market weight process $X$ will appear in the analysis. The capitalization process $S$ will not.


We next discuss trading strategies, which can be parameterized in a number of ways. An $X$-integrable process $H = (H_1,\ldots,H_d)^\top$ representing the number of shares held of each asset yields the wealth process $V = H_1 X_1 + \ldots + H_d X_d$. Since no other assets are available for trading (in particular, there is no bank account), the self-financing condition states that $dV(t) = H(t)^\top dX(t)$. Note that wealth, like prices, is expressed in the numeraire $S_1 + \cdots + S_d$ and thus measures performance relative to the market.

We only allow strategies whose wealth stays strictly positive. This permits us to describe trading strategies in terms of $\theta_i = H_i / V$, which is the number of shares, per unit of wealth, that the investor holds in the $i$th asset. These quantities satisfy
\begin{equation} \label{eq_fully_invested}
	\theta_1 X_1 + \cdots + \theta_d X_d = 1.
\end{equation}
Conversely, any $X$-integrable process $\theta = (\theta_1,\ldots,\theta_d)$ that satisfies \eqref{eq_fully_invested} yields a strictly positive wealth process $V^\theta$ given by
\begin{equation} \label{eq_wealth_dynam_NEW}
	\frac{dV^\theta(t)}{V^\theta(t)} = \sum\limits_{i=1}^d \theta_i(t)dX_i(t), \quad V^\theta(0) = 1.
\end{equation}
This turns out to be the most convenient way to parameterize trading strategies for the purposes of this paper. The initial wealth is not relevant for our analysis, so we always take it to be one. If a strategy is given by $\theta(t) = \theta(X(t))$ for some function $\theta \colon \Delta^{d-1} \to \R^d$, we say that it is in \emph{feedback form}.

A particularly important strategy is the \emph{market portfolio} $\theta_\Mcal = \bm1_d$, which is the strategy that holds the entire market. It satisfies \eqref{eq_fully_invested} and its wealth process is $V^{\theta_\Mcal} = X_1 + \ldots + X_d = 1$. The market portfolio is a risk-free investment with respect to the current numeraire, and as such it plays a role similar to the bank account in classical models. It is the benchmark against which the performance of other strategies will be measured.

\begin{remark}
	Let $\theta$ be an $X$-integrable process that does not satisfy \eqref{eq_fully_invested}, and hence is not a self-financing trading strategy. Shift it by a multiple $1 - X^\top \theta$ of the market portfolio to obtain $\theta' = \theta + (1 - X^\top \theta) \bm1_d$. The process $\theta'$ is $X$-integrable and satisfies \eqref{eq_fully_invested}. Moreover, replacing $\theta$ by $\theta'$ does not affect the right-hand side of \eqref{eq_wealth_dynam_NEW} since $\bm1_d^\top dX(t) = 0$. In this way, \emph{any} $X$-integrable process $\theta$ yields a trading strategy.
\end{remark}

\begin{remark}
	In SPT, and elsewhere, it is common to describe trading using $\pi_i = \theta_i X_i$, the fraction of wealth invested in asset $i=1,\ldots,d$. This description is less convenient in our context since $X_i$ may reach zero, in which case $\theta_i$ cannot be recovered from $\pi_i$.
\end{remark}

\subsection{Growth Optimality}
We now study the problem of maximizing growth of wealth. Following \cite{karatzas2021portfolio}, we define growth optimality in terms of the semimartingale decomposition $\log V^\theta = A^\theta + M^\theta$ of the log-wealth process corresponding to a strategy $\theta$. We stress that wealth is measured relative to the market, and hence growth is too.

\begin{defn}
	Let $\Xi$ be a collection of strategies. A strategy $\hat \theta \in \Xi$ is said to be \emph{growth optimal in $\Xi$} if $A^{\hat \theta} - A^\theta$ is a non-decreasing process for every strategy $\theta \in \Xi$. 
\end{defn}

As discussed in the introduction, many models in SPT that capture the stability of the ranked market weights yield growth optimal strategies with undesirable properties. The following specific example of this phenomenon serves as a motivating example for the analysis to come.


\begin{eg}[A Motivating Closed Market Example] \label{sec:vol_stabilized_example}
	Fix a parameter $\gamma_* > 0$ and consider the Markov process given by
	\[ 
	dX_i(t) = \frac{\gamma_*d}{2}\(\frac{1}{d}- X_i(t)\)dt + \sum_{j = 1}^d (\delta_{ij} - X_i(t))\sqrt{X_j(t)} \, dW_j(t), \quad i=1,\ldots,d.
	\] 
	This is the market weight dynamics in the volatility-stabilized market introduced in \cite{fernholz2005relative}, which is a single-parameter model in SPT that is tractable since the market weight process $X$ is a polynomial diffusion; see \cite{fernholz2005relative,pickova2014generalized} for an in-depth discussion. We note that this is a particular case of the hybrid Jacobi model introduced in Section~\ref{sec:hybrid_polynomial_models}, obtained by setting $a =0$, $\sigma^2 = 1$ and $\gamma_i= \gamma_*$ for every $i$. In particular, Theorem~\ref{thm:ergodic} yields that $X$ is ergodic and its invariant measure is a symmetric Dirichlet distribution with parameter $\gamma_*$. As such, this market model satisfies property (1) outlined in the introduction.
	
	Next, it can be shown, and indeed follows as a special case of Theorem~\ref{thm:hybrid_growth_optimal} to come, that a growth optimal strategy $\hat\theta$ exists if and only if $\gamma_* \geq 1$. It is then given by
	\[
	\hat\theta_i(t) = \frac{1}{2}\( \frac{\gamma_*}{X_i(t)} + 2-d\gamma_* \), \quad i=1,\ldots,d,
	\]
	which is well-defined because the condition $\gamma_* \ge 1$ is necessary and sufficient for the market weights to remain strictly positive; see Theorem~\ref{thm:boundary_attainment}. This model falls under the framework of \cite{kardaras2021ergodic} and, from the results of that paper, $\hat \theta$ is not only growth optimal in the volatility-stabilized market, but possesses a \emph{robust} growth optimality property over a certain class of admissible models. Thus property (3) in the introduction is also satisfied.
	
	However, $\hat \theta$ is rather poorly behaved. In terms of the portfolio weights $\hat\pi_i = \hat\theta_i X_i$ we have
	\[
	\hat \pi_i(t) = \frac{1}{2}(\gamma_* + X_i(t)(2-d\gamma_*)).
	\]
	This strategy prescribes significant short-selling of stocks with large capitalizations. Indeed, whenever $X_i(t) > \gamma_* / (d\gamma_* - 2)$, the portfolio takes a short position in asset $i$. Thus, essentially independently of $\gamma_*$, one would have to short-sell approximately the 100 largest stocks in the S\&P~500 (under current capitalizations) to implement this strategy. Moreover, the leverage ratio can be substantial. As a concrete example suppose that $X_i(t) = 0.05$ and that $d=500$. Then $\hat \pi_i(t) = 0.05 - 12\gamma_*$. Since $\gamma_* \geq 1$, this strategy requires the investor to short at least $12$ times their wealth in some of the large capitalization stocks. Indeed, the stability of the market weights in this model ensures that the largest stocks will eventually lose value, which leads to this artificially leveraged growth-optimal strategy. Moreover, the strategy itself has an explicit and strong dependence on the dimension parameter $d$. Hence, properties (2) and (4) are not met.
	%
	%

\end{eg}

\subsection{Open Market Strategies}
A key goal of this paper is to argue that the undesirable behaviour in Example~\ref{sec:vol_stabilized_example} can be avoided by working with \emph{open markets} as in \cite{fernholz2018numeraire,karatzas2020open}. More precisely, we consider strategies that are only allowed to take positions in large capitalization stocks along with positions in the market portfolio. Allowing investments in the market portfolio, thereby allowing certain restricted positions in small capitalization stocks, is essential for tractability and distinguishes our setup from previous work \cite{fernholz2018numeraire,karatzas2020open}.

Recall the rank and name identifying functions ${\bf r}_i(x)$ and ${\bf n}_k(x)$ in \eqref{eq_rank_name_def}. We write
\begin{equation} \label{eq_rank_name_at_X}
	\begin{aligned}
		{\bf r}_i(t) &= \text{${\bf r}_i(X(t)) =$ the rank occupied by name $i$ at time $t$,} \\
		{\bf n}_k(t) &= \text{${\bf n}_k(X(t)) =$ the name that occupies rank $k$ at time $t$.}
	\end{aligned}
\end{equation}
For any trading strategy $\theta$ we then write
\[
\theta_{\bf n}(t) = (\theta_{{\bf n}_1(t)}(t),\dots,\theta_{{\bf n}_d(t)}(t))^\top, \quad t \ge 0.
\]
This is the permutation of $\theta(t)$ which orders the investments in the assets by their rank. Lastly, we introduce the truncated vector of ranked weights,
\[
X^N_{()} = (X_{(1)},\dots,X_{(N)}).
\]
We can now define the class of open market strategies.


\begin{defn} [Open Market Strategies]  \label{def:open_market} 	
	Fix $N \leq d$. A trading strategy $\theta$ is called an \textit{open market strategy} if it admits the representation
	\begin{equation} \label{eqn:pi_canonical}
		\theta_{\bf{n}}(t) = \begin{pmatrix}
			h(t) \\ 0
		\end{pmatrix} + (1 - h(t)^\top X^N_{()}(t))\boldsymbol{1}_d
	\end{equation}
	for some $N$-dimensional process $h$ such that the $d$-dimensional process $h_{{\bf{r}}_i(t)}(t) 1_{\{{\bf r}_i(t) \leq N \}}$, $i=1,\dots,d$, $t \ge 0$, is $X$-integrable. We denote by $\mathbb{O}^N$ the set of all such strategies. 
\end{defn}

Here $h_k(t)$ is the number of shares, per unit of wealth, that the investor \emph{directly} invests in the $k^{\text{th}}$ ranked asset at time $t$, for $k=1,\ldots,N$. To finance this strategy the investor buys or sells $1 - h(t)^\top X^N_{()}(t)$ number of shares, per unit of wealth, of the market portfolio. One should view strategies of this kind as investments in an \emph{open market} of size $N$ embedded in a larger market of size $d$. This is a market where at any time $t$ there are $N+1$ assets available for investment, namely
\begin{itemize}
	\item $X_{{\bf{n}}_k(t)}(t)$ for $k=1,\dots,N$ (the assets occupying the top $N$ ranks at time $t$); and
	\item the market portfolio.
\end{itemize}
With $N=d$ one recovers the standard setup (the \emph{closed market} or \emph{full market}) where investments in small capitalization stocks are unrestricted. This is also achieved with $N = d-1$, since the smallest asset can be created synthetically by investing in the top $d-1$ assets and the market portfolio. A proper open market is obtained with $N < d-1$.


Finally, for later use we define the \emph{open market portfolio} $\theta^N_{\mathcal{M}}$ for the open market of size $N$. It is defined by choosing $h(t) = (\sum_{k=1}^N X_{(k)}(t))^{-1}\boldsymbol{1}_N$ in the representation \eqref{eqn:pi_canonical}. Thus
\[(\theta^N_{\mathcal{M}})_{{\bf{n}}_k(t)}(t)= \frac{1_{\{k \leq N\}}}{\sum_{k=1}^N X_{(k)}(t)}, \quad k=1,\dots,d.\]

\begin{remark}
	The integrability condition on $h$ in Definition~\ref{def:open_market} is necessary and sufficient for $\theta$ to be a trading strategy. Thus the set of such processes $h$ fully parameterizes the set of open market strategies.
\end{remark}

\subsection{Growth Optimality in the Open Market}
We now consider an open market of size $N < d$, and study the growth optimal strategy in the class of open market strategies under two structural conditions on the market. This allows us to relate the growth optimal strategy in the open market to the growth optimal strategy in the closed market---although, as we will see, the latter will not necessarily satisfy the integrability conditions required to be a bona fide trading strategy.

Our first condition is classical. Apart from the absence of integrability requirements, this condition is sometimes known simply as the \emph{structure condition}. Recall the processes $b$ and $\sigma$ appearing in the market weight dynamics \eqref{eq_market_weight_dynamics} as well as $c = \sigma \sigma^\top$.

\begin{assum} \label{ass:closed_viable}
	There exists a $d$-dimensional progressively measurable process $\ell$ such that $c \ell = b$ up to $\P \otimes dt$-nullsets.
\end{assum}

\begin{remark}
	By \cite[Proposition~2.4]{karatzas2021portfolio}, Assumption~\ref{ass:closed_viable} together with the condition
	\begin{equation} \label{eqn:finite_growth}
		\int_0^T \ell(t)^\top c(t) \ell(t)dt < \infty \text{ for every $T \geq 0$}
	\end{equation} is equivalent to the existence of a growth optimal strategy $\hat \theta^{d}$ in the closed market, given by
	\begin{equation} \label{eqn:pi_ell_opt}
		\hat \theta^d(t) = \ell(t) +(1 - \ell(t)^\top X(t))\boldsymbol{1}_d.
	\end{equation}
	In this paper we do not impose \eqref{eqn:finite_growth} and hence do not require the existence of a growth optimal strategy in the closed market. However, as examples in Section~\ref{sec:hybrid_market} will show, a growth optimal strategy in the open market may exist even if \eqref{eqn:finite_growth} is violated.
\end{remark}

Now, observe that the log-wealth process for any strategy $\theta$ is given by
\begin{equation} \label{eqn:logV_arb_pi}
	\log V^{\theta}(T) = \int_0^T \(\theta(t)^\top c(t)\ell(t) - \frac{1}{2}\theta(t)^\top c(t) \theta(t)\)dt + \int_0^T \theta(t)^\top \sigma(t)dW(t).
\end{equation} 
To find the growth optimal strategy in the open market (if it exists) we need to maximize the integrand appearing in the drift term of \eqref{eqn:logV_arb_pi} over all $\theta \in \mathbb{O}^N$, the class of open market strategies defined in Definition~\ref{def:open_market}. To do this we introduce some additional notation. Define
\begin{align*}
	\kappa_{kl}(t) & = c_{{\bf{n}}_k(t){\bf{n}}_l(t)}(t), \quad \varrho_k(t)  = \ell_{{\bf{n}}_k(t)}(t), \quad k,l=1,\dots,d, \quad t\geq 0.
\end{align*}
In particular, $\kappa_{kl}$ is the instantaneous covariation between the assets occupying the $k^{\text{th}}$ and $l^{\text{th}}$ ranks. 
To maximize the drift in \eqref{eqn:logV_arb_pi}, consider any open market strategy $\theta \in \mathbb{O}^N$ in the canonical form of \eqref{eqn:pi_canonical}. Using that $c(t) \boldsymbol{1}_d = 0$, or equivalently $\kappa(t) \bm1_d = 0$, we have that the drift is equal to
\[
h(t)^\top (\kappa(t)\varrho(t))^N - \frac{1}{2} h(t)^\top \kappa^N(t) h(t),
\]
where $\kappa^N(t)$ is the upper left $N\times N$ block of $\kappa(t)$ and $(\kappa(t)\varrho(t))^N$ is the $N$-dimensional truncated vector $((\kappa(t)\varrho(t))_1,\dots,(\kappa(t)\varrho(t))_N)$. Maximizing this expression over $h(t)$ is an unconstrained concave quadratic maximization problem. The solution is characterized by the first order condition, which is
\begin{equation} \label{eqn:FOC}
	(\kappa(t)\varrho(t))^N = \kappa^N(t) h(t).
\end{equation}
Our second structural condition is designed to allow for an explicit solution of \eqref{eqn:FOC} in terms of $\varrho$.

\begin{assum} \label{ass:kappa_condition}
	There exist progressively measurable processes $f_1,\ldots,f_d$ and $g$ such that, up to $\P \otimes dt$-nullsets, $\sum_{k=N+1}^d f_k = 1$ and
	\begin{equation} \label{eqn:kappa_condition}
		\kappa_{kl} = - f_k f_l g \quad \text{for all $k=1,\dots,N$ and $l=N+1,\dots,d$.}
	\end{equation}
\end{assum}

This condition requires that the instantaneous covariation between large capitalization assets and small capitalization assets be of product form. We stress that this does not restrict the instantaneous covariations \emph{within} the collection of top $N$ assets, nor within the collection of bottom $d-N$ assets. The condition only involves cross-interaction between the two groups of assets. Moreover, the requirement $\sum_{k=N+1}^d f_k = 1$ is a normalization that can always be achieved as long as $\sum_{k=N+1}^d f_k > 0$. We also note that the covariation matrix for the market weight process of Example~\ref{sec:vol_stabilized_example} satisfies this condition and, as we will see in Section~\ref{sec:hybrid_market}, when $X$ is taken to be the hybrid Jacobi process of Section~\ref{sec:hybrid_polynomial_models}, Assumption~\ref{ass:kappa_condition} holds as well.

\begin{thm}[Growth optimality in the open market]  \label{thm:main_theoretical}
	Fix $N < d$ and let Assumptions~\ref{ass:closed_viable} and~\ref{ass:kappa_condition} be satisfied.
	\begin{enumerate} 
		\item \label{item:hat_h} A solution to \eqref{eqn:FOC} exists and is given by 
		\begin{equation} \label{eqn:hat_h}
			\hat h(t) = \varrho^N(t) - \zeta(t)\boldsymbol{1}_N,
		\end{equation}
		where $\varrho^N(t) = (\varrho_1(t),\dots,\varrho_N(t))$ and $\zeta(t) = \sum_{l=N+1}^d f_l(t)\varrho_l(t)$.
		\item \label{item:growth_optimal}There exists a growth optimal strategy $\hat \theta$ in the open market if and only if 
		\begin{equation} \label{eqn:finite_growth_open}
			\int_0^T \hat h(t)^\top \kappa^N(t) \hat h(t)dt < \infty, \text{ $\P$-a.s., for every $T \geq 0$},
		\end{equation}
		where $\hat h$ is given by \eqref{eqn:hat_h}.
		In this case $\hat \theta$ is determined by $\hat h$ via \eqref{eqn:pi_canonical}. Moreover, if $\kappa^N$ is invertible up to $\P  \otimes dt$-nullsets then the growth optimal strategy is unique up to $\P \otimes dt$-nullsets. 
	\end{enumerate} 
\end{thm}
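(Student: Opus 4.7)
The plan is to recognize the drift optimization for open market strategies as a pointwise concave quadratic maximization in the parameterizing process $h$, with Assumption~\ref{ass:kappa_condition} supplying a closed-form solution of the first-order condition.

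Part (i) will be verified by direct algebra. Since $c \boldsymbol{1}_d = 0$, equivalently $\kappa \boldsymbol{1}_d = 0$, each row sum of $\kappa$ vanishes. Combined with Assumption~\ref{ass:kappa_condition} and the normalization $\sum_{l=N+1}^d f_l = 1$, this yields, for each $k = 1, \dots, N$, the identities $(\kappa^N \boldsymbol{1}_N)_k = -\sum_{l=N+1}^d \kappa_{kl} = f_k g$ and $(\kappa\varrho)^N_k - (\kappa^N \varrho^N)_k = \sum_{l=N+1}^d \kappa_{kl} \varrho_l = -f_k g \zeta$. Substituting $\hat h = \varrho^N - \zeta \boldsymbol{1}_N$ into the left-hand side of \eqref{eqn:FOC} then reproduces the right-hand side.

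For part (ii), for any $\theta \in \mathbb{O}^N$ with parameterization $h$, the identity $c \boldsymbol{1}_d = 0$ collapses the drift in \eqref{eqn:logV_arb_pi} into the pointwise concave quadratic
\[
Q(h) = h^\top (\kappa \varrho)^N - \tfrac12 h^\top \kappa^N h.
\]
Completing the square and invoking part (i) gives the central pointwise identity
\[
Q(\hat h) - Q(h) = \tfrac12 (h - \hat h)^\top \kappa^N (h - \hat h) \geq 0.
\]
For sufficiency, assuming \eqref{eqn:finite_growth_open}, I would first check that $\hat h$ defines a bona fide open market strategy: its diffusion contribution is $X$-integrable by \eqref{eqn:finite_growth_open}, and Cauchy--Schwarz for the semi-definite form $\kappa^N$ bounds $|\hat h^\top (\kappa\varrho)^N| = |\hat h^\top \kappa^N \hat h|$ by the same quantity. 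Integrating the identity above then shows $A^{\hat\theta} - A^\theta$ is non-decreasing for every $\theta \in \mathbb{O}^N$, giving growth optimality of $\hat\theta$.

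For necessity, suppose $\tilde\theta \in \mathbb{O}^N$ with parameterization $\tilde h$ is growth optimal. Testing against admissible parameterizations $h$ gives $Q(\tilde h) \geq Q(h)$ $\mathbb{P} \otimes dt$-a.e.; using a countable dense family (e.g., constants in $\mathbb{Q}^N$) together with a measurable selection argument, this promotes to: $\tilde h$ is a pointwise maximizer of $Q$ off a universal nullset. Hence $\tilde h \in \hat h + \ker(\kappa^N)$, so $\tilde h^\top \kappa^N \tilde h = \hat h^\top \kappa^N \hat h$, and the $X$-integrability of $\tilde\theta$ delivers \eqref{eqn:finite_growth_open}. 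Uniqueness when $\kappa^N$ is invertible is immediate since $\ker(\kappa^N) = \{0\}$ forces $\tilde h = \hat h$. The main anticipated obstacle is not conceptual but technical: arranging the measurable selection step so as to produce a universal a.e.\ maximizer rather than a nullset depending on each test $h$.
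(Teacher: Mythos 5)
Your proof is correct. Part~(i) and the sufficiency direction of part~(ii) are essentially identical to the paper's argument: verify the first-order condition using $\kappa\boldsymbol{1}_d = 0$ and Assumption~\ref{ass:kappa_condition}, then observe that both the drift and the quadratic variation of $\log V^{\hat\theta}$ reduce to $\int_0^T \hat h^\top \kappa^N \hat h\, dt$, so \eqref{eqn:finite_growth_open} is exactly the required $X$-integrability. Where you diverge from the paper is in the necessity direction. The paper argues by contradiction: if \eqref{eqn:finite_growth_open} fails, it truncates $\hat\theta$ at the stopping times $\tau_M$ when the accumulated drift reaches level $M$, producing a sequence of bona fide open market strategies whose log-wealth drifts are unbounded on a set of positive probability, so no single strategy can dominate them all. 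You instead assume a growth optimal $\tilde\theta$ exists, test it against a countable dense family of constant-$h$ strategies to conclude (using continuity of the concave quadratic $Q$ in $h$) that $\tilde h(t) \in \hat h(t) + \Ker(\kappa^N(t))$ off a universal $\P\otimes dt$-nullset, which forces $\tilde h^\top\kappa^N\tilde h = \hat h^\top\kappa^N\hat h$, and then read \eqref{eqn:finite_growth_open} off the $X$-integrability of $\tilde\theta$. Both routes are valid. Yours makes the variational structure explicit and isolates the role of $\Ker(\kappa^N)$, clarifying at the same time why the uniqueness claim requires invertibility; the paper's avoids any need to first establish that a growth optimal strategy solves a pointwise FOC, and is more constructive in exhibiting the competing strategies. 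One small remark: the ``measurable selection'' worry you flag is a bit overstated --- because you test against a fixed countable family, taking the countable intersection of the resulting exceptional nullsets and then invoking continuity of $Q$ in $h$ already gives the universal nullset without any selection theorem. Also, the Cauchy--Schwarz step in your sufficiency argument is really an equality, since $(\kappa\varrho)^N = \kappa^N \hat h$ by part~(i); you note the equality but the phrase ``bounds'' understates it.
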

\begin{proof} This theorem can be proved by applying the results of \cite{karatzas2020open}. However, for the reader's benefit we present a short self-contained proof. Throughout, we argue up to $\P \otimes dt$-nullsets. Fix $k \in \{1,\ldots,N\}$. Since $\kappa \bm1_d = 0$ we have from Assumption~\ref{ass:kappa_condition} that 
	\[
	(\kappa^N \boldsymbol{1}_N)_k = \sum_{l=1}^N \kappa_{kl} = - \sum_{l=N+1}^d \kappa_{kl} = \sum_{l=N+1}^d f_k f_l g = f_k g.
	\]
	It follows that 
	\[(\kappa\varrho)_k = \sum_{l=1}^N \kappa_{kl}\varrho_l + \sum_{l=N+1}^d \kappa_{kl} \varrho_l  = (\kappa^N \varrho^N)_k - f_kg\zeta = (\kappa^N(\varrho^N - \zeta\boldsymbol{1}_N))_k. \]
	This shows that $\hat h$ given by \eqref{eqn:hat_h} solves \eqref{eqn:FOC}, thus proving \ref{item:hat_h}. 
	
	We now prove \ref{item:growth_optimal}. Suppose that \eqref{eqn:finite_growth_open} holds. Then it is easily verified that $\hat \theta$ is $X$-integrable. Indeed, up to a factor of $1/2$, both the finite variation and quadratic variation parts of the log-wealth process are given precisely by \eqref{eqn:finite_growth_open}. Hence $\hat \theta$ is a trading strategy, and by the preceding discussion it is growth optimal.
	
	Conversely, suppose that \eqref{eqn:finite_growth_open} fails. Then the stopping times
	\[
	\tau_M := \inf\left\{T \geq 0: \int_0^T \hat h(t)^\top \kappa^N(t)\hat h(t)dt \geq M\right\}, \quad M \in \N,
	\]
	satisfy $\sup_M \tau_M < \infty$ with positive probability. For each $M$, consider the strategy $\hat \theta^{M}$ characterized by
	\[ 	\hat \theta^M_{\bf n}(t) = \(\begin{pmatrix}
		\hat h(t) \\ 0
	\end{pmatrix} + (1 - \hat h(t)^\top X^N_{()}(t))\boldsymbol{1}_d\)1_{\{t \leq \tau_M\}} + \boldsymbol{1}_d 1_{\{t > \tau_M\}}.\]
	This strategy imitates $\hat \theta$ up to time $\tau_M$ and invests in the market portfolio thereafter. It satisfies the required integrability condition and so is indeed a trading strategy in the open market. Since the finite variation part of the log-wealth of $\hat \theta^{M}$ will be at least $M$ after $\tau_M$, and since $\sup_{M} \tau_M < \infty$ with positive probability, no growth optimal strategy can exist.
	
	Finally, if a growth optimal strategy exists and $\kappa^N$ is invertible, then \eqref{eqn:hat_h} is the unique solution to \eqref{eqn:FOC} and, consequently, $\hat \theta$ is the unique growth optimal strategy.
\end{proof}

Assuming its conditions are satisfied, Theorem~\ref{thm:main_theoretical} shows that the growth optimization problem in the open market is just as easy, or difficult, as the growth optimization problem in the closed market. This is because the optimizer $\hat h$ is an explicit function of the process $\ell$ in Assumption~\ref{ass:closed_viable} and the given processes $f_k$, $k=N+1,\ldots,d$, in Assumption~\ref{ass:kappa_condition}.

Furthermore, the growth optimal strategy $\hat \theta$ admits an intuitive decomposition. At any time $t \geq 0$ this strategy prescribes the investor to hold, per unit of wealth,
\begin{itemize}
	\item  $\varrho_k(t)$ number of shares in the asset occupying the $k^{\text{th}}$ rank, for $k=1,\dots,N$,
	\item $-\zeta(t) \boldsymbol{1}_N^\top X_{()}^N(t)$ number of shares in the open market portfolio $\theta^N_{\mathcal{M}}$,
	\item  $1 - \varrho^N(t)^\top X^N_{()}(t) + \zeta(t) \boldsymbol{1}_N^\top X_{()}^N(t)$ number of shares in the full market portfolio $\theta_{\mathcal{M}}$.
\end{itemize} 
For an investor with unrestricted access to all $d$ assets, investing $\varrho_k(t)$ in the asset occupying the $k^{\text{th}}$ rank is growth optimal for all $k=1,\dots,d$, and this strategy is financed by investing the remaining wealth (perhaps negative) in the market portfolio. Indeed, this is an equivalent way of describing \eqref{eqn:pi_ell_opt}. Under the open market constraint, the investor is no longer allowed to execute this strategy and so must reallocate their wealth. The above decomposition shows that this amounts to reallocating their investment from the small capitalization stocks into the open market portfolio and financing the difference using the full market portfolio.

\section{Modelling the Market Weights} \label{sec:hybrid_market}

In this section we use the hybrid Jacobi process as a model for the market weights, and study growth optimality in the open market of size $N \le d-1$. We fix parameters $a,\gamma,\sigma$ as in Definition~\ref{def:hybrid_Jacobi} satisfying Assumption~\ref{ass:bar_a}, and we let $X$ be the hybrid Jacobi process of Theorem~\ref{thm:process_existence}. 

\subsection{Growth Optimization Problem}
Theorem~\ref{thm:main_theoretical} characterizes growth optimality in the open market. We now make use of this result for the hybrid Jacobi process. In particular, Theorem~\ref{thm:boundary_attainment} allows us to characterize the parameter values for which \eqref{eqn:finite_growth_open} holds. This leads to a complete description of growth optimality in the open market for hybrid Jacobi processes. Recall the tail sum notation \eqref{eq_tail_sum_notation}.

\begin{thm}[Growth optimal strategy in hybrid Jacobi markets] \label{thm:hybrid_growth_optimal}
	There exists a growth optimal strategy in the open market if and only if $\bar a_k + \bar \gamma_{(k)} \geq 1$ for $k=2,\dots,N+1$. In this case the growth optimal strategy $\hat \theta$ is characterized by
	\begin{equation} \label{eqn:hat_pi_hybrid}
		\hat \theta_{{\bf{n}}_k(t)}(t) = \begin{cases}
			1 - \frac{1}{2}(\bar a_1 + \bar \gamma_1) + \dfrac{a_k + \gamma_{{\bf{n}}_k(t)}}{2X_{(k)}(t)}, & k=1,\dots,N, \\[3ex]
			1 - \frac{1}{2}(\bar a_1 + \bar \gamma_1) + \dfrac{\bar a_{N+1} + \sum_{l=N+1}^d\gamma_{{\bf{n}}_l(t)}}{2\bar X_{(N+1)}(t)}, & k = N+1,\dots,d,
		\end{cases}
	\end{equation}
	and it is the unique growth optimal strategy up to $\P_\mu \otimes dt$-nullsets for every $\mu \in \mathcal{P}_0$.
\end{thm}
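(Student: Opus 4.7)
The plan is to specialise Theorem~\ref{thm:main_theoretical} to the hybrid Jacobi setting. I first verify the two structural hypotheses. For Assumption~\ref{ass:closed_viable}, using $c_{ij}(x) = \sigma^2 x_i(\delta_{ij} - x_j)$, a direct computation shows that
\[
\ell_i(t) = \frac{\gamma_i + a_{{\bf r}_i(t)}}{2 X_i(t)}, \quad i = 1,\dots,d,
\]
satisfies $c \ell = b$; this formula is well defined up to $\P \otimes dt$-nullsets by Proposition~\ref{prop:Lebesgue_collision}\ref{item:boundary_zero}. For Assumption~\ref{ass:kappa_condition}, the off-diagonal entries factor as $\kappa_{kl} = -\sigma^2 X_{(k)} X_{(l)}$, so I take $f_k = X_{(k)}$ for $k \leq N$, $f_l = X_{(l)}/\bar X_{(N+1)}$ for $l > N$, and $g = \sigma^2 \bar X_{(N+1)}$; the normalisation $\sum_{l>N} f_l = 1$ holds wherever $\bar X_{(N+1)} > 0$, which again holds off a $\P \otimes dt$-null set.

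Next, I compute the candidate strategy. With $\varrho_k = (\gamma_{{\bf n}_k} + a_k)/(2 X_{(k)})$ and $\zeta = (\bar a_{N+1} + \sum_{l>N} \gamma_{{\bf n}_l})/(2 \bar X_{(N+1)})$, the identity $\hat h = \varrho^N - \zeta \boldsymbol{1}_N$ combined with \eqref{eqn:pi_canonical} gives $\hat \theta$. Simplifying via $\sum_{k=1}^N (\gamma_{{\bf n}_k} + a_k) = (\bar a_1 + \bar\gamma_1) - (\bar a_{N+1} + \sum_{l>N}\gamma_{{\bf n}_l})$ and $\sum_{k=1}^N X_{(k)} = 1 - \bar X_{(N+1)}$ yields the closed form \eqref{eqn:hat_pi_hybrid}. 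For uniqueness, the matrix $\kappa^N = \sigma^2(\mathrm{diag}(X_{(k)}) - X^N_{()} (X^N_{()})^\top)$ is invertible whenever $X_{(k)} > 0$ for $k \leq N$ and $\bar X_{(N+1)} > 0$ (a Sherman--Morrison calculation identifies its spectrum with $\sigma^2 X_{(k)}$ up to a factor $\bar X_{(N+1)}$ in one direction), and this holds off a $\P \otimes dt$-null set by Proposition~\ref{prop:Lebesgue_collision}\ref{item:boundary_zero}; Theorem~\ref{thm:main_theoretical}\ref{item:growth_optimal} then delivers uniqueness once existence is established.

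The crux is to characterise when \eqref{eqn:finite_growth_open} holds, and this is where I expect the main difficulty. A direct expansion of $\hat h^\top \kappa^N \hat h = \hat h^\top (\kappa\varrho)^N$ produces
\[
\hat h^\top \kappa^N \hat h = \frac{\sigma^2}{4}\sum_{k=1}^N \frac{(\gamma_{{\bf n}_k} + a_k)^2}{X_{(k)}} + \frac{\sigma^2 (\bar a_{N+1} + \sum_{l>N} \gamma_{{\bf n}_l})^2}{4 \bar X_{(N+1)}} - \frac{\sigma^2 (\bar a_1 + \bar\gamma_1)^2}{4},
\]
so integrability reduces to finiteness of the integrals of the first two nonnegative terms. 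If $\bar a_l + \bar\gamma_{(l)} \geq 1$ for $l = 2,\dots,N+1$, then Theorem~\ref{thm:boundary_attainment}\ref{item:bound_attain_rank} implies $X_{(N+1)}$ never hits zero, whence $X_{(k)}$ for $k \leq N$ and $\bar X_{(N+1)}$ all stay bounded away from zero on any compact interval and the integral is $\P$-a.s.\ finite. For the converse, let $k^* \in \{2,\dots,N+1\}$ be minimal with $\bar a_{k^*} + \bar \gamma_{(k^*)} < 1$; Theorem~\ref{thm:boundary_attainment}\ref{item:bound_attain_rank_hit} then yields that $X_{(k^*)}$ hits zero with positive probability, and the task becomes to deduce divergence of the relevant integral near the hit time. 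The main obstacle is that the numerator $(\gamma_{{\bf n}_{k^*}} + a_{k^*})^2$, or $(\bar a_{N+1} + \sum_{l>N} \gamma_{{\bf n}_l})^2$ in the case $k^* = N+1$, need not stay uniformly bounded below; one must track which names occupy the bottom ranks as $X_{(k^*)} \to 0$ and use the dynamics \eqref{eqn:X_()_dynamics} (or Lemma~\ref{lem:Lambda} applied to $\bar X_{(N+1)}$) together with a Bessel-type comparison to conclude that the integral diverges $\P$-a.s.\ on the event that $X_{(k^*)}$ hits zero.
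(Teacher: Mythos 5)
Your overall architecture is correct and matches the paper: same $\ell$, valid (though slightly different) choices of $f_k,g$ realising Assumption~\ref{ass:kappa_condition}, the same closed-form simplification of $\hat h$ and $\hat\theta$, and the same expression for $\hat h^\top\kappa^N\hat h$. The ``if'' direction of the integrability characterisation is also the same: $\bar a_l + \bar\gamma_{(l)} \geq 1$ forces $X_{(N+1)}$ away from zero via Theorem~\ref{thm:boundary_attainment}\ref{item:bound_attain_rank}\ref{item:bound_attain_rank_hit}, whence the integrand is pathwise bounded on compacts.

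The gap is in the ``only if'' direction, which you yourself flag as the crux. Your two suggested routes do not go through as stated. Working directly with $X_{(k^*)}$ runs into the local-time terms in \eqref{eqn:X_()_dynamics} plus the drift-sign ambiguity you identify; but the fallback you offer, ``Lemma~\ref{lem:Lambda} applied to $\bar X_{(N+1)}$,'' is not available either, because $\bar X_{(N+1)}$ is \emph{not} of the form $\Lambda_I(X)$ for a \emph{fixed} index set $I$ --- the identities of the names occupying ranks $N+1,\dots,d$ change over time --- so Lemma~\ref{lem:Lambda} does not apply to it. The paper resolves both obstructions at once by passing to a carefully chosen \emph{fixed} name set: take $l\in\{2,\dots,N+1\}$ with $\bar a_l + \bar\gamma_{(l)} < 1$ and let $I$ consist of the $d-l+1$ indices with the smallest $\gamma$ values. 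Then Theorem~\ref{thm:boundary_attainment}\ref{item:bound_attain_name}\ref{item:bound_attain_name_hit} applies (the tail sums there reduce precisely to $\bar a_m + \bar\gamma_{(m)}$ for this $I$), giving $\P_\mu(\Lambda_I(X)\text{ hits }0) > 0$; $\Lambda_I(X)$ satisfies the clean scalar SDE \eqref{eqn:Lambda_dynamics} from Lemma~\ref{lem:Lambda} with no local time, so applying It\^o to $-\log\Lambda_I(X)$ up to the hitting time and balancing the martingale against $\tfrac12[M,M]$ shows $\int_0^T\Lambda_I(X(t))^{-1}\,dt = \infty$ with positive probability; and since $|I| = d-l+1 \geq d-N$, we have $\Lambda_I(X) \geq \bar X_{(l)} \geq \bar X_{(N+1)}$ pointwise, so $\int_0^T \bar X_{(N+1)}(t)^{-1}\,dt = \infty$ with positive probability, killing \eqref{eqn:finite_growth_open}. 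This change of variables --- from a rank quantity to a fixed-name quantity dominating it --- is the idea missing from your sketch.
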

Before proving the theorem we make a few remarks.
\begin{remark}
	Note that the optimal strategy $\hat \theta$ does hold a position in the small capitalization stocks through the market portfolio. Indeed, equation~\eqref{eqn:hat_pi_hybrid} prescribes the same number of shares to be held when $k \in \{N+1,\dots,d\}$. The equation for $\hat \theta$  can instead be written in the canonical form \eqref{eqn:pi_canonical}, which represents the position via its contributions to each investible asset, by taking
	\begin{equation} \label{eqn:hat_h_hybrid}
		\hat h_k(t) =  \frac{a_k + \gamma_{{\bf n}_k(t)}}{2X_{(k)}(t)} - \frac{\bar a_{N+1} + \sum_{l=N+1}^d\gamma_{{\bf n}_l(t)}}{2\bar X_{(N+1)}(t)}, \quad k = 1,\dots,N.
	\end{equation}
\end{remark}
\begin{remark} 
	It may seem odd that \eqref{eqn:hat_pi_hybrid} does not involve the volatility parameter $\sigma$. This is only due to the parametrization used in Definition~\ref{def:hybrid_Jacobi} where $\sigma$ appears both in the drift and the volatility. In the financial literature it is more common to view the drift and volatility parameters as independently defined. We can accomplish this by setting $\alpha = \sigma^2 a / 2$ and $\beta = \sigma^2 \gamma / 2$. Then \eqref{eqn:X_polynomial_dynamics} becomes
	\[
	\begin{aligned}
		dX_i(t) &= \(\alpha_{{\bf r}_i(t)} + \beta_i - (\bar \alpha_1 + \bar \beta_1) X_i(t)\)dt + \sigma \sum_{j=1}^d \left(\delta_{ij} - X_i(t) \right) \sqrt{ X_j(t) } \, dW_j(t)
	\end{aligned}
	\]
	for $i=1,\ldots,d$ and, correspondingly, the optimal strategy \eqref{eqn:hat_pi_hybrid} becomes
	\[\hat \theta_{{\bf{n}}_k(t)}(t) = \begin{cases}
		1 - \frac{1}{\sigma^2}(\bar \alpha_1 + \bar \beta_1) + \dfrac{\alpha_k + \beta_{{\bf{n}}_k(t)}}{\sigma^2X_{(k)}(t)}, & k=1,\dots,N, \\[3ex]
		1 - \frac{1}{\sigma^2}(\bar \alpha_1 + \bar \beta_1) + \dfrac{\bar \alpha_{N+1} + \sum_{l=N+1}^d\beta_{{\bf{n}}_l(t)}}{\sigma^2\bar X_{(N+1)}(t)}, & k = N+1,\dots,d.
	\end{cases}\]
	This strategy depends on $(\alpha_k + \beta_{{\bf{n}}_k(t)})/\sigma^2$ for $k=1,\dots,N$, which admits an interpretation of risk-adjusted return for the asset that is occupying the $k^{\text{th}}$ rank at time $t$.
\end{remark}
\begin{remark}
	The condition $\bar a_k + \bar \gamma_{(k)} \geq 1$ for $k=2,\dots,N+1$ appearing in the statement of Theorem~\ref{thm:hybrid_growth_optimal} has a clear financial interpretation.  Note that using the dynamics of the ranked weights obtained in Proposition~3.12 we have that the dynamics of $\bar X_{(k)} = \sum_{l=k}^d X_{(l)}$ are given by
	\[d\bar X_{(k)} = \frac{\sigma^2}{2}(\bar a_k + \sum_{l=k}^d \gamma_{{\bf n}_l(t)} - (\bar a_1 + \bar \gamma_1)\bar X_{(k)}(t))\, dt + dM(t) - \frac{1}{4}dL_{k-1,k}(t),\]
	where $M$ is the martingale term. Consequently, the term $\bar a_k + \sum_{l=k}^d \gamma_{{\bf n}_l(t)}$ appearing in the drift can be interpreted as a cumulative growth parameter for the $d-k+1$ smallest capitalization stocks. The term depends on which names occupy the smallest ranks and the smallest value this growth parameter can have is $\bar a_k + \bar \gamma_{(k)}$ precisely when the stocks with the smallest individual (named) growth parameters occupy the smallest ranks. The condition $\bar a_k + \bar \gamma_{(k)} \geq 1$ then requires that even in this ``worst case" arrangement of the assets, the cumulative growth parameter of the $d-k+1$ smallest stocks is sufficiently large. 
\end{remark}
\begin{remark}
	It is worth pointing out that, akin to \cite{cuchiero2019polynomial}, the hybrid Jacobi market weight process is consistent with the following specification of the capitalization process:
	\[
	dS_i(t) = \frac{\sigma^2}{2} (\gamma_i + a_{{\bf r}_i(t)})\Sigma(t)dt + \sigma\sqrt{S_i(t)\Sigma(t)}dW_i(t)
	\]
	for $i=1,\dots,d$, where $\Sigma = S_1 + \cdots + S_d$. Then $X_i = S_i / \Sigma$ has dynamics given by \eqref{eqn:X_polynomial_dynamics}. 
\end{remark}

\begin{proof}[Proof of Theorem~\ref{thm:hybrid_growth_optimal}]
	We first verify the conditions of Theorem~\ref{thm:main_theoretical}. To check Assumption~\ref{ass:closed_viable} we define $\ell \colon \Delta^{d-1} \to \R^d$ by
	\begin{equation} \label{eqn:ell_def}
		\ell_i(x) = \frac{\gamma_i + a_{{\bf r}_i(x)}}{2x_i},  \quad i=1,\dots,d,
	\end{equation}
	for $x \in \Delta^{d-1}_+$ and arbitrarily set $\ell(x) = 0$ for $x \in \partial \Delta^{d-1}$. We then have the identity
	\[
	\(c(x)\ell(x)\)_i = \frac{\sigma^2}{2}\(\gamma_i + a_{{\bf r}_i(x)} - (\bar a_1 + \bar \gamma_1)x_i\)
	\]
	for all $x \in \Delta^{d-1}_+$ and $i=1,\dots,d$, where $c(x)$ is given by \eqref{eqn:c_def}. Proposition~\ref{prop:Lebesgue_collision}\ref{item:boundary_zero} shows that $X$ almost surely spends a Lebesgue nullset of time at the boundary $\partial \Delta^{d-1}$. We deduce that Assumption~\ref{ass:closed_viable} is satisfied with the process $\ell(X)$. 
	
	Next, Assumption~\ref{ass:kappa_condition} follows from the form \eqref{eqn:kappa_def} of the diffusion matrix $\kappa(X_{()})$ of the ranked market weights. Indeed, condition \eqref{eqn:kappa_condition} is satisfied with 
	\[
	g(t) = \sigma^2 \bar X_{(N+1)}(t)^2, \qquad  f_k(t) = \frac{X_{(k)}(t)}{\bar X_{(N+1)}(t)}, \quad  k =1,\dots,d,
	\]
	and Proposition~\ref{prop:Lebesgue_collision}\ref{item:boundary_zero} ensures that $f_k$ is well-defined up to $\P_\mu \otimes dt$-nullsets.
	
	We may now apply Theorem~\ref{thm:main_theoretical}\ref{item:hat_h} to see that a solution to \eqref{eqn:FOC} is given by $\hat h$ defined in \eqref{eqn:hat_h_hybrid}. By Theorem~\ref{thm:main_theoretical}\ref{item:growth_optimal} this gives the unique growth optimal strategy \eqref{eqn:hat_pi_hybrid} if and only if \eqref{eqn:finite_growth_open} holds. It remains to argue that the latter condition is equivalent to $\bar a_k + \bar \gamma_{(k)} \geq 1$ for $k=2,\dots,N+1$.
	
	A calculation using \eqref{eqn:hat_h_hybrid} shows that the left-hand side of \eqref{eqn:finite_growth_open} is given by
	\begin{equation} \label{eqn:local_growth_jacobi}
		\begin{aligned}
			\hat h(t)^\top \kappa^N(t)\hat h(t) &= \frac{\sigma^2}{4}\bigg(\sum_{k=1}^N \frac{(a_k + \gamma_{{\bf{n}}_k(t)})^2}{X_{(k)}(t)} \\
			&\qquad + \frac{(\bar a_{N+1} + \sum_{k=N+1}^d \gamma_{{\bf{n}}_k(t)})^2}{\bar X_{(N+1)}(t)} - (\bar a_1 + \bar \gamma_1)^2\bigg).
		\end{aligned}
	\end{equation}
	Assume first that $\bar a_l + \bar \gamma_{(l)} \geq 1$ for $l=2,\dots,N+1$. Then by Theorem~\ref{thm:boundary_attainment}\ref{item:bound_attain_rank}\ref{item:bound_attain_rank_hit} we have that $\P_\mu(X_{(N+1)}(t) = 0 \text{ for some } t > 0) = 0$. Consequently, from \eqref{eqn:local_growth_jacobi},	we see that \eqref{eqn:finite_growth_open} holds.
	
	Conversely, assume there exists some $l \in \{2,\dots,N+1\}$ such that $\bar a_l + \bar \gamma_{(l)} < 1$. Set $I = \{i \in \{1,\dots,d\}: {\bf{n}}_k(\gamma) = i \text{ for some } k \geq l\}$ and let $\Lambda_I$ be as in \eqref{eqn:Lambda_def}. Note that $|I| = d-l+1$ and by Theorem~\ref{thm:boundary_attainment}\ref{item:bound_attain_name}\ref{item:bound_attain_name_hit} we have that $\Lambda_I(X)$ hits zero with positive probability. 
	For $n \in \N$ define the stopping times
	\begin{align*}
		\tau_n & := \inf\{t\geq0: \Lambda_I(X(t)) \leq 1/n\}, \\
		\tau & := \lim_{n \to \infty} \tau_n   = \inf\{t \geq 0: \Lambda_I(X(t)) = 0\}.
	\end{align*} Applying Itô's formula to \eqref{eqn:Lambda_dynamics} we obtain for every $n$ and $T \geq 0$ that
	\begin{equation} \label{eqn:log_Lambda_growth_proof}
		\begin{aligned}
			-\log \frac{\Lambda_I(X(T\land \tau_n))}{\Lambda_I(0)}  & = \frac{\sigma^2(\bar a_1 + \bar \gamma_1)(T \land \tau_n)}{2} - \sum_{i \in I}\int_0^{T \land \tau_n} \frac{ \gamma_i + a_{{\bf r}_i(t)}}{\Lambda_I(X(t))}dt \\
			& \quad - M(T \land \tau_n) + \frac{1}{2}[M,M](T \land \tau_n),
		\end{aligned}
	\end{equation} 
	where $M(T) = \sigma\int_0^T \sqrt{\frac{1-\Lambda_I(X(t))}{\Lambda_I(X(t))}}dB(t)$. Suppose for contradiction that $\int_0^T (\Lambda_I(X(t)))^{-1}dt$ is finite for every $T \geq 0$. This implies that $[M,M](T \land \tau) < \infty$. Thus, when sending $n \to \infty$ in \eqref{eqn:log_Lambda_growth_proof}, we obtain on the right  hand side a finite limit for every $T \geq 0$. However, on the left hand side we obtain $-\log (\Lambda_I(X(T \land \tau)) / \Lambda_I(0))$ which is infinite with positive probability for $T$ large enough, since $\{\tau < \infty\}$ has positive probability. This yields a contradiction and establishes that $\int_0^T (\Lambda_I(X(t)))^{-1}dt = \infty$ with positive probability for sufficiently large $T$. But $\bar X_{(N+1)} \leq \Lambda_I(X)$ so it follows that $\int_0^T \bar X_{(N+1)}^{-1}(t)dt = \infty$ with positive probability. Consequently, we see from \eqref{eqn:local_growth_jacobi} that \eqref{eqn:finite_growth_open} does not hold in this case. This completes the proof.
\end{proof}

\subsection{Examples}\label{sec:examples}
We consider a few examples.
\subsubsection{Name-Based Jacobi Markets}\label{sec:name_based_example} Consider the case $a = 0$. Then Assumption~\ref{ass:bar_a} is satisfied if and only if $\gamma_i > 0$ for $i=1,\dots,d$. Theorem~\ref{thm:hybrid_growth_optimal} guarantees that there exists a growth optimal strategy $\hat \theta$ in the open market if and only if $\bar \gamma_{(d-N+1)} \geq 1$. Using \eqref{eqn:hat_pi_hybrid} it is easily checked that $\hat \theta$ is long-only if and only if we additionally have $\sum_{k=1}^{d-1} \gamma_{(k)} \leq 2$.

Consider the further special case $\gamma = \gamma_*\boldsymbol{1}_d$ for some $\gamma_* > 0$. In  this case the dynamics of $X$ in \eqref{eqn:X_polynomial_dynamics} reduce to the volatility stabilized market in Example~\ref{sec:vol_stabilized_example}. The growth optimal strategy in the open market, $\hat \theta$, exists if and only if $ \gamma_* \geq 1/(d-N)$.  In this case \eqref{eqn:hat_pi_hybrid} becomes
\[ 	\hat \theta_{{\bf{n}}_k(t)}(t) = \begin{cases}
	1 - \frac{d\gamma_*}{2} + \frac{\gamma_*}{2X_{(k)}(t)}, & k=1,\dots,N, \\
	1 - \frac{d\gamma_*}{2} + \frac{(d-N)\gamma_*}{2\bar X_{(N+1)}(t)}, & k = N+1,\dots,d.
\end{cases}\]
This is long-only if and only if $\gamma_* \in [\frac{1}{d-N}, \frac{2}{d-1}]$. When $N \leq (d+1)/2$ this interval is nonempty.  As such, the open market in the volatility-stabilized setup helps resolve the high-leverage problem that appears in Example~\ref{sec:vol_stabilized_example}. However, the issue of stability with respect to the size of the market is not resolved here, since $\hat \theta$ depends strongly on $d$, the size of the closed market.
\subsubsection{Rank-Based Jacobi Markets} \label{sec:rank_based}
Now consider the purely rank-based case $ \gamma = 0$. Then Assumption~\ref{ass:bar_a} is satisfied if and only if $\bar a_k > 0$ for $k=2,\dots,d$. By Theorem~\ref{thm:hybrid_growth_optimal} there exists a growth optimal strategy in the open market of size $N$ if and only if $\bar a_{k} \geq 1$ for $k = 2,\dots,N+1$. In this case \eqref{eqn:hat_pi_hybrid} becomes
\[	\hat \theta_{{\bf{n}}_k(t)}(t)= \begin{cases}
	1 - \frac{\bar a_1}{2} + \frac{a_k}{2X_{(k)}(t)}, & k=1,\dots,N, \\
	1 - \frac{\bar a_1}{2} + \frac{\bar a_{N+1}}{2\bar X_{(N+1)}(t)}, & k = N+1,\dots,d.
\end{cases}\]
It is straightforward to find admissible parameter values such that this strategy is long-only. For example, this will be the case whenever $\bar a_1 \le 2$ and $a_k \ge 0$ for all $k=1,\ldots,N$.


The above strategy has the attractive property that it does not directly depend on the parameter $d$. Indeed, there is only an implicit dependence on $d$ through $\bar a_{N+1} = \sum_{k=N+1}^d a_k$ and $\bar X_{(N+1)} = \sum_{k = N+1}^d X_{(k)}$. Both of these quantities admit interpretations in terms of cumulative statistics of the small capitalization assets: $\bar X_{(N+1)}$ is the cumulative size of the small capitalization stocks, while $\bar a_{N+1}$ is the cumulative growth parameter for the small capitalizations stocks. As such, rank-based Jacobi markets exhibit property (4) discussed in the introduction.

\subsubsection{Jacobi Atlas Models}
Next we explore a particular rank-based model, which, although simplistic, is illustrative of the effect that the open market has in comparison to the classically studied closed market. Fix $\eta \geq 1$ and choose $a \in \R^d$ satisfying Assumption~\ref{ass:bar_a} such that $a_1 = a_2 = \dots = a_N = 0$ and $\bar a_{N+1} = \eta$. We call this specification the \emph{Jacobi Atlas model} as all the growth comes from the small capitalization assets through the parameter $\eta$. This is analogous to the Atlas model of \cite{banner2005atlas}. In this case the growth optimal strategy is
\begin{equation} \label{eqn:hat_pi_Atlas}
	\hat \theta_{{\bf{n}}_k(t)}(t) = \begin{cases}
		1 - \frac{\eta}{2} , & k=1,\dots,N, \\
		1 - \frac{ \eta}{2} + \frac{\eta}{2\bar X_{(N+1)}(t)}, & k = N+1,\dots,d.
	\end{cases}
\end{equation}
The investor holds the same number of units of each large capitalization asset and holds a larger number of shares (through the market portfolio) in the small capitalization assets due to their growth potential guaranteed by the parameter $\eta \geq 1$.

The parameter $\eta$ also directly determines how leveraged the strategy is. A larger value of $\eta$ leads to a higher degree of short-selling. Since $\eta$ represents the cumulative growth of the small capitalization stocks the interpretation is clear. For large values of $\eta$ the investor benefits by holding a larger position in the small capitalization stocks (through the market portfolio) and financing this strategy by short selling the large capitalization stocks. Conversely, when $\eta$ is small, the small capitalization stocks do not provide enough growth to justify the risk of holding a highly leveraged position. 

Note that the optimal strategy is insensitive to the exact specification of the parameters $a_{N+1},\dots,a_d$ as long as $\bar a_{N+1} = \eta$ and Assumption~\ref{ass:bar_a} is satisfied. These choices, however, do affect the behaviour of the larger market in which the open market is embedded. To illustrate, consider the following two situations:
\begin{enumerate} [label = ({\roman*}),noitemsep] 
	\item $a = \eta e_d$,
	\item $a = (\eta -\epsilon)e_{N+1} + \epsilon e_d$ for some choice of $\epsilon \in (0, 1)$.
\end{enumerate}
Both satisfy $\bar a_{N+1} = \eta$ and Assumption~\ref{ass:bar_a}. However, under (i), Theorem~\ref{thm:hybrid_growth_optimal} implies that there exists a growth optimal strategy in the open market of any size $M \in \{1,\dots,d-1\}$. In particular there exists a growth optimal strategy in the closed market. By contrast, the specification (ii) does not admit a growth optimal strategy in the open market of any size $M > N$. This is because under (ii), $X_{(N+2)}$ will hit zero with positive probability, while under (i) none of the assets will hit zero. The open market of size $N$, however, is impervious to these differences.
\subsubsection{The Hybrid Case Revisited} 
We saw in the name-based case, Example~\ref{sec:name_based_example}, that although there are specifications for which the growth optimal strategy is not artificially leveraged, there is no stability with respect to the parameter $d$ due to the name-based dependency. The rank-based case of Example~\ref{sec:rank_based} solves this issue, but it has other imperfections. Under the rank-based setup each asset asymptotically spends the same amount of time occupying each rank. Indeed, the ergodic property (Theorem~\ref{thm:ergodic}) yields $\lim_{T \to \infty}\frac{1}{T}\int_0^T 1_{\{{\bf{n}}_k(t) = i\}}dt = 1/d$ for every $i,k =1,\dots,d$. This is an unrealistic property that does not hold in real world equity markets.

The general hybrid specification is more flexible as one can allow a particular named asset to occupy some ranks more frequently than others. As an example, let $\eta \geq 1$ and $\gamma_* > 0$ be given and set $a = \eta e_d$ and $\gamma = \gamma_*e_1$.
In this case the first named asset will occupy the highest ranks more frequently than the lowest ranks. Indeed, using the ergodic property one can show that
	\[
	\lim_{T \to \infty} \frac{1}{T}\int_0^T 1_{\{{\bf r}_1(t) = k\}}dt > \lim_{T \to \infty} \frac{1}{T}\int_0^T 1_{\{{\bf r}_1(t) = l\}}dt, \quad k < l.
	\]

	\subsection{Functional Generation of the Growth Optimal Strategy} \label{sec:func_gen}
	In this section we show that the growth optimal strategy in the rank-based case is functionally generated in the sense of \cite[Chapters 3 \& 4]{fernholz2002stochastic}. 
	We adopt the setting of Section~\ref{sec:open_market_SPT}, where $X$ serves as the market weight process. Following \cite{fernholz2002stochastic,karatzas2017trading} we introduce the notion of functionally generated strategies and the master formula.
	
	\begin{defn}
		Let $\theta$ be a strategy and let $G:\R^d \to (0,\infty)$ be a function that is continuous on $\Delta^{d-1}_+$ and such that $G(X)$ is a semimartingale. If we have the representation 
		\begin{equation}
			\label{eqn:func_gen_wealth}\log V^\theta(T) = \log G(X(T)) + \Gamma(T)
		\end{equation}
		for some finite variation process $\Gamma$ with $\Gamma(0) = 0$ then we call $\theta$  a \emph{functionally generated strategy} with generating function $G$ and drift process $\Gamma$. In this case we write $\theta^G$ for $\theta$.
	\end{defn}
	
	The representation \eqref{eqn:func_gen_wealth} shows that the wealth process of a functionally generated strategy can be found without computing a stochastic integral.  This will allow us to show that functionally generated strategies have a built-in robustness property, which is crucial in the analysis of the robust growth optimization problem we consider in Section~\ref{sec:robust}. 
	
	\begin{thm}[Master Formula, Proposition~4.7 in \cite{karatzas2017trading}] \label{thm:master_formula}
		Let $G:\R^d \to (0,\infty)$ be a function that is continuous on $\Delta^{d-1}_+$ and such that $G(X)$ is a semimartingale. Assume there exist locally bounded measurable functions $g_i: \R^d \to \R$ for $i = 1,\dots,d$ and a finite variation process $Q$ such that 
		\begin{equation}\label{eqn:func_gen_dynamics}
			d\log G(X(t)) = \sum\limits_{i=1}^d g_i(X(t))dX_i(t) + dQ(t).
		\end{equation}
		Then $G$ functionally generates the strategy $\theta^G$ with components given by
		\begin{equation}\label{eqn:pi_func_gen}
			\theta^G_i(t) = g_i(X(t)) + 1 - \sum\limits_{j=1}^d X_j(t)g_j(X(t)), \quad i = 1,\dots,d.
		\end{equation}
	\end{thm}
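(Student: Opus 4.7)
The plan is to verify the claim by direct computation: construct the wealth process of $\theta^G$ and match it to the form \eqref{eqn:func_gen_wealth}. First I would check that \eqref{eqn:pi_func_gen} defines a bona fide trading strategy. Local boundedness of the $g_i$ together with continuity of the semimartingale $X$ makes each $g_i(X)$ locally bounded and hence $X$-integrable, which in turn makes $\theta^G$ itself $X$-integrable. The fully-invested condition \eqref{eq_fully_invested} follows by summing \eqref{eqn:pi_func_gen} against $X_i$:
\[\sum_{i=1}^d X_i \theta^G_i = \sum_{i=1}^d X_i g_i(X) + \sum_{i=1}^d X_i - \Big(\sum_{j=1}^d X_j g_j(X)\Big)\sum_{i=1}^d X_i = 1,\]
where the last equality uses $\sum_i X_i = 1$.

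Next I would compute the self-financing wealth dynamics via \eqref{eq_wealth_dynam_NEW}. The crucial observation is that $\sum_{i=1}^d dX_i(t) = 0$ since the market weights sum to one, so substituting \eqref{eqn:pi_func_gen} kills the term $(1-\sum_j X_j g_j)\boldsymbol{1}_d^\top dX$ and gives
\[\frac{dV^{\theta^G}(t)}{V^{\theta^G}(t)} = \sum_{i=1}^d g_i(X(t))\,dX_i(t).\]
This is the simplification that makes functional generation work on the simplex: the additive constant $1-\sum_j X_j g_j$ in $\theta^G_i$ is effectively an allocation to the (numeraire) market portfolio and contributes nothing to the wealth dynamics.

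Applying It\^o's formula to $\log V^{\theta^G}$ and invoking the hypothesis \eqref{eqn:func_gen_dynamics} to replace the stochastic integral $\sum_i g_i(X)dX_i$ by $d\log G(X)-dQ$, I obtain
\[d\log V^{\theta^G}(t) = d\log G(X(t)) - dQ(t) - \frac{1}{2}\sum_{i,j=1}^d g_i(X(t))g_j(X(t))\,d[X_i,X_j](t).\]
Integrating from $0$ to $T$ and setting
\[\Gamma(T) := -(Q(T)-Q(0)) - \frac{1}{2}\int_0^T \sum_{i,j=1}^d g_i(X(t))g_j(X(t))\,d[X_i,X_j](t)\]
yields \eqref{eqn:func_gen_wealth}. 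The process $\Gamma$ is of finite variation because $Q$ is finite variation by assumption, and the covariation integral is absolutely continuous with respect to the total variation measure of $\sum_{i,j}[X_i,X_j]$ with locally bounded density $g_ig_j$.

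The only subtlety is the normalization $\Gamma(0)=0$ in the definition of a functionally generated strategy: since $V^{\theta^G}(0)=1$ one needs $\log G(X(0))=0$, which is cosmetic since a generating function is defined only up to a positive multiplicative constant and the constant $-\log G(X(0))$ can be absorbed into $\Gamma$ (or into $G$). Apart from this normalization, the proof is a routine application of It\^o's formula, and the substantive content lies in the simplex identity $\sum_i dX_i = 0$ used to discard the projection of $\theta^G$ onto the market portfolio.
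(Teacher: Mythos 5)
The paper only cites this result (Proposition~4.7 of \cite{karatzas2017trading}) and does not supply its own proof, so there is nothing in the paper to compare against; your argument is the standard one and is correct. The key cancellation is exactly as you say: on $\Delta^{d-1}$ one has $\bm1_d^\top dX = 0$, so the additive term $(1-\sum_j X_j g_j(X))\bm1_d$ in $\theta^G$ contributes nothing to $\frac{dV^{\theta^G}}{V^{\theta^G}}=\sum_i g_i(X)\,dX_i$, and then It\^o applied to $\log V^{\theta^G}$ combined with the hypothesis \eqref{eqn:func_gen_dynamics} yields $d\log V^{\theta^G}=d\log G(X)-dQ-\tfrac12\sum_{i,j}g_ig_j(X)\,d[X_i,X_j]$, whose last two terms are of finite variation. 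Your observation about $\Gamma(0)=0$ is the right thing to flag: replacing $G$ by $G/G(X(0))$ leaves \eqref{eqn:func_gen_dynamics} and hence the $g_i$ and $\theta^G$ unchanged while enforcing the normalization, so the argument is complete.
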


	\begin{remark} \label{rem:func_gen} If the function $G$ from the previous theorem is $C^2$ on an open neighbourhood of $\Delta^{d-1}_+$ then the assumptions of the theorem are satisfied with $g_i(x) = \partial_i \log G(x)$ and $dQ(t) =\frac{1}{2}\sum_{i,j=1}^d \partial_{ij}\log G(X(t))d[X_i,X_j](t)$. The corresponding strategy becomes
		\[
		\theta^G_i(t) = \partial_i \log G(X(t)) + 1 - \sum\limits_{j=1}^d X_j(t)\partial_j \log G(X(t)), \quad i = 1,\dots,d
		\]
		and we have the representation \[\log V^{\theta^G}(T) = \log G(X(T)) -\frac{1}{2}\sum_{i,j=1}^d\int_0^T\frac{ \partial_{ij} G(X(t))}{G(X(t))}d[X_i,X_j](t).\]
	\end{remark}

	We now establish that the growth optimal strategy in the open market for rank Jacobi models is functionally generated and determine its generating function.

	\begin{prop} \label{prop:func_gen_open}
		Let $X$ be the rank Jacobi model of Section~\ref{sec:hybrid_market} with parameter $a \in \R^d$ and $\sigma^2 > 0$. Assume that $\bar a_k \geq 1$ for every $k=2,\dots,N+1$ so that the growth optimal strategy $\hat \theta$ of Theorem~\ref{thm:hybrid_growth_optimal} exists. Define $\hat F: \nabla^{d-1}_+ \to (0,\infty)$ and $\hat G: \Delta^{d-1}_+ \to (0,\infty)$ via
		\begin{equation} \label{eqn:hat_G}
			\hat F(y) = \bar y_{N+1}^{\bar a_{N+1}/2}\prod_{k=1}^N y_k^{a_k/2}, \qquad \hat G(x) = \hat F(x_{()}). 
		\end{equation}  Then $\hat \theta$ given by \eqref{eqn:hat_pi_hybrid} is functionally generated by $\hat G$. Moreover, its wealth process has the representation
		\begin{equation} \label{eqn:hat_pi_log_wealth}
			\begin{aligned}
				\log V^{\hat \theta} (T) &=  \log \hat F(X_{()}(T)) - \int_0^T \frac{L_\kappa \hat F}{\hat F}(X_{()}(t))dt - \frac{1}{8}\sum_{k=1}^{N-1} (a_k-a_{k+1})\tilde L_{k,k+1}(T) \\
				&\quad - \frac{1}{8}\int_0^T\(a_N - \bar a_{N+1} \frac{X_{(N)}(t)}{\bar X_{(N+1)}(t)}\)d \tilde L_{N,N+1}(t),
			\end{aligned}
		\end{equation}
		%
		where $L_\kappa$ is the operator given by
		\begin{equation} \label{eqn:L_kappa}
			L_\kappa f(y) = \frac{1}{2}\sum_{k,l=1}^d \kappa_{kl}(y)\partial_{kl}f(y), \quad f \in C^2(\nabla^{d-1}),
		\end{equation}
		with $\kappa$ given by \eqref{eqn:kappa_def} and $\tilde L_{k,k+1} = L_{\log X_{(k)} - \log X_{(k+1)}}$ are the local times of gaps of the ranked log-weights.
	\end{prop}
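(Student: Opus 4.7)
The plan is to verify the representation \eqref{eqn:hat_pi_log_wealth} directly by an It\^o calculation; functional generation of $\hat\theta$ by $\hat G$ then follows immediately since the right-hand side exhibits $\log V^{\hat\theta}(T)$ as $\log\hat G(X(T)) = \log \hat F(X_{()}(T))$ plus a finite variation process.

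First I would compute $d\log V^{\hat\theta}$. From \eqref{eq_wealth_dynam_NEW} and It\^o, $d\log V^{\hat\theta} = \sum_i \hat\theta_i dX_i - \tfrac{1}{2}\hat\theta^\top c \hat\theta\, dt$. Using the open market form \eqref{eqn:pi_canonical} together with $c\bm1_d = 0$ reduces the stochastic integral to $\sum_{k=1}^N \hat h_k dX_{{\bf n}_k(t)}$. The ranked process decomposition in \eqref{eqn:X_()_dynamics} yields the Banner--Ghomrasni-type identity $dX_{{\bf n}_k(t)} = dX_{(k)} - \tfrac{1}{4}dL_{k-1,k} + \tfrac{1}{4}dL_{k,k+1}$. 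Substituting and summing by parts (using $L_{0,1}\equiv 0$) gives
\[
\sum_i \hat\theta_i dX_i = \sum_{k=1}^N \hat h_k dX_{(k)} + \tfrac{1}{4}\sum_{k=1}^{N-1}(\hat h_k - \hat h_{k+1}) dL_{k,k+1} + \tfrac{1}{4}\hat h_N dL_{N,N+1}.
\]

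Next I would apply It\^o's formula to $\log\hat F(X_{()}(t))$. The partial derivatives are $\partial_k\log\hat F(y) = a_k/(2y_k)$ for $k\le N$ and $\bar a_{N+1}/(2\bar y_{N+1})$ for $k>N$, so since $\sum_{k=1}^d dX_{(k)} = 0$ one sees at once that $\sum_{k=1}^d \partial_k\log\hat F \cdot dX_{(k)} = \sum_{k=1}^N \hat h_k dX_{(k)}$. Applying It\^o to $\hat F$ and dividing by $\hat F$ then gives
\[
d\log\hat F(X_{()}) = \sum_{k=1}^N \hat h_k dX_{(k)} + \frac{L_\kappa \hat F}{\hat F}(X_{()})\, dt - \tfrac{1}{2}\sum_{k,l=1}^d \kappa_{kl}\, \partial_k\log\hat F\, \partial_l\log\hat F\, dt.
\]
Subtracting this expression from $d\log V^{\hat\theta}$ cancels the stochastic integrals and reduces the drift via the identity $\hat\theta^\top c \hat\theta = \sum_{k,l}\kappa_{kl}\, \partial_k\log\hat F\, \partial_l\log\hat F$. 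Both sides equal $\hat h^\top \kappa^N \hat h$: for the left-hand side this is immediate from \eqref{eqn:pi_canonical} and $c\bm1_d = 0$, while for the right-hand side one observes that the gradient $(\partial_k\log\hat F)_{k=1}^d$ differs from the padded vector $(\hat h, 0)\in\R^d$ by $\bigl(\bar a_{N+1}/(2\bar X_{(N+1)})\bigr)\bm1_d\in\ker\kappa$, under which the quadratic form is invariant. Consequently,
\[
d\bigl(\log V^{\hat\theta} - \log \hat F(X_{()})\bigr) = -\frac{L_\kappa \hat F}{\hat F}(X_{()})\, dt + \tfrac{1}{4}\sum_{k=1}^{N-1}(\hat h_k - \hat h_{k+1}) dL_{k,k+1} + \tfrac{1}{4}\hat h_N dL_{N,N+1}.
\]

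Finally I would convert the local time terms to $\tilde L_{k,k+1}$. On the support $\{X_{(k)} = X_{(k+1)}\}$ of $dL_{k,k+1}$ one has $\hat h_k - \hat h_{k+1} = (a_k - a_{k+1})/(2X_{(k)})$ and $\hat h_N = \tfrac{1}{2}\bigl(a_N/X_{(N)} - \bar a_{N+1}/\bar X_{(N+1)}\bigr)$. The change-of-variable identity $dL_{X_{(k)} - X_{(k+1)}} = X_{(k)}\, dL_{\log X_{(k)} - \log X_{(k+1)}}$, a consequence of It\^o--Tanaka applied near the diagonal (valid because triple collisions in the interior are excluded by Proposition~\ref{prop:Lebesgue_collision}\ref{item:triple_zero}), then produces the desired coefficients $(a_k - a_{k+1})/8$ and $\bigl(a_N - \bar a_{N+1} X_{(N)}/\bar X_{(N+1)}\bigr)/8$ in \eqref{eqn:hat_pi_log_wealth}. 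The main technical obstacle is the rigorous justification of this local time conversion; the remaining bookkeeping is mechanical once the open market structure and $c\bm1_d = 0$ are fully exploited.
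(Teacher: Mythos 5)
Your proof is correct and takes essentially the same route as the paper's. The paper's one-line proof invokes the master formula (Theorem~\ref{thm:master_formula}) together with the ranked decomposition \eqref{eqn:X_()_dynamics}; your argument simply unpacks both ingredients by hand, recovering the same Itô calculation for $\log\hat F(X_{()})$, the same use of $c\bm1_d=0$ (equivalently $\kappa\bm1_d=0$) to identify $\hat\theta^\top c\hat\theta = \nabla\log\hat F^\top\kappa\nabla\log\hat F = \hat h^\top\kappa^N\hat h$, and the same local-time change of variable $dL_{k,k+1}=X_{(k)}\,d\tilde L_{k,k+1}$ (for which the relevant collision result is Proposition~\ref{prop:Lebesgue_collision}\ref{item:collision_zero}, i.e.\ Lebesgue-nullity of the collision set, rather than \ref{item:triple_zero}).
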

	
	\begin{proof}
		This result is a consequence of the semimartingale decomposition for $X_{()}$ given by \eqref{eqn:X_()_dynamics} along with the master formula, Theorem~\ref{thm:master_formula}.
	\end{proof}

	\begin{remark} \label{rem:discont_G}
		In the general hybrid case there is a candidate generating function for $\hat \theta$ given by
		\[
		G(x) = \bar x_{(N+1)}^{\frac{1}{2}(\bar a_{N+1} + \sum_{k=N+1}^d \gamma_{{\bf{n}}_k(x)})}\prod_{k=1}^N x_{(k)}^{\frac{1}{2}(a_k + \gamma_{{\bf{n}}_k(x)})}.
		\]
		This function is differentiable almost everywhere and we have $\hat \theta(t) = \nabla \log G(X(t)) - (1 + X(t)^\top \nabla \log G(X(t)))\boldsymbol{1}_d$ whenever $X(t)$ is at a point of differentiability of $G$. However, when $N < d-1$ and $\gamma$ is not a multiple of $\boldsymbol{1}_d$, $G$ is discontinuous on a nonempty subset of $\{x_{(N)} = x_{(N+1)}\}$. In particular, we do not expect $G(X)$ to be a semimartingale. We do not know whether there is any meaningful way to interpret $\hat \theta$ as being generated by $G$. In the case $N = d-1$ the formula for $G$ simplifies to 
		\begin{equation} \label{eqn:hat_G_closed}
			G(x) = \(\prod_{k=1}^d x_{(k)}^{a_k/2}\)\(\prod_{i=1}^d x_i^{\gamma_i/2}\),
		\end{equation}
		which is a continuous function. In this special case it can be shown that $\hat \theta$ is functionally generated by $G$.
	\end{remark}
	
	%
		
		\section{Robust Asymptotic Growth} \label{sec:robust}
		The rank and hybrid Jacobi processes are parametric models which should only be viewed as idealizations of real-world market weight dynamics. Moreover, even if one accepts the structural form of the model, parameter estimation is not straightforward. In particular, drift processes are notoriously difficult to estimate for the low signal-to-noise ratios typically seen in financial data; see \cite{kardaras2012robust,kardaras2021ergodic}.
		
		We now study an \emph{asymptotic} and \emph{robust} growth optimization problem in the open market, with only two inputs fixed a priori: (i) the covariation matrix $\kappa$ of the ranked market weights, and (ii) their invariant density $q$. Following \cite{kardaras2021ergodic,itkin2020robust} we restrict to models with the prescribed covariation structure and invariant density, but allow for drift uncertainty. We maximize asymptotic growth in this class, and show that the optimal solution is the growth optimal strategy in the rank Jacobi model.
		
		\begin{remark}
			The results of this section only cover the rank Jacobi model ($\gamma = 0$). The difficulty in passing to the hybrid case arises from the fact that $\hat \theta$ in \eqref{eqn:hat_pi_hybrid} is not functionally generated in that case, as discussed in Remark~\ref{rem:discont_G}. 
		\end{remark}

		We now introduce the robust asymptotic growth optimization problem, motivated by the ergodic robust maximization problems studied in \cite{kardaras2021ergodic,itkin2020robust}. We work on the canonical path space $\Omega = C([0,\infty);\Delta^{d-1})$ with the topology of locally uniform convergence and Borel $\sigma$-algebra $\F$. The coordinate process is denoted by $X$ and we write $(\Fcal(t))_{t\ge0}$ for the right-continuous filtration generated by $X$.

		Fix $N \leq d-1$, $\sigma \in (0,\infty)$, and $a \in \R^d$, and assume that
		\begin{equation} \label{eqn:finite_growth_ass}
			\bar a_k > 1 \text{ for every } k=2,\dots,N+1.
		\end{equation}
		Throughout this section we let $\hat \P$ denote the law of the rank Jacobi process of Theorem~\ref{thm:process_existence} with parameters $a, \sigma$ and arbitrary initial law $\mu \in \mathcal{P}_0$ (the choice of initial law does not matter in the analysis to come.) Under $\hat \P$, Theorem~\ref{thm:hybrid_growth_optimal} yields a growth optimal strategy in the open market. Note that \eqref{eqn:finite_growth_ass} is stronger than the condition of Theorem~\ref{thm:hybrid_growth_optimal} in that it contains a strict inequality. This ensures that $\hat \theta$ has finite asymptotic growth rate as defined in \eqref{eqn:growth_rate_def} below.

		Recalling the matrix function $\kappa \colon \nabla^{d-1} \to \mathbb{S}_+^d$ in \eqref{eqn:kappa_def} and the density $q \colon \nabla^{d-1}_+ \to (0,\infty)$ in \eqref{eqn:q_def}, we now define the set of admissible models that will be used to define the robust optimization problem.
		
		\begin{defn}\label{def:Pi_geq}
			Let $\Pi_{\geq}$ denote the set of all probability measures $\P$ on $(\Omega,\F)$ such that the following hold: $X_0 \sim \mu$ for some $\mu \in \mathcal{P}_0$, $X$ is a continuous semimartingale with canonical decomposition of the form $dX(t) = b^{\P}(t)dt + dM^\P(t)$,  and we have
			\begin{enumerate}[label = (\roman*)]
				\item  \label{item:non_explosion} $X_{(N+1)}(t) > 0$ for $t \geq 0$, $\P$-a.s.
				\item  \label{item:QV_rank_condition} $\int_0^T d[X_{(k)},X_{(l)}](t) = \int_0^T \kappa_{kl}(X_{()}(t))dt$ for $k,l=1,\dots,d$ and $T \ge 0$, $\P$-a.s.
				\item \label{item:ergodic_rank}  For every measurable function $h \colon \nabla^{d-1} \to \R$ with $\int_{\nabla^{d-1}} |h|q < \infty$, 
				\[
				\lim_{T \to \infty} \frac{1}{T}\int_0^T h(X_{()}(t))dt = \int_{\nabla^{d-1}} hq, \quad \text{$\P$-a.s.}
				\]
				\item \label{item:drift_condition_rank} We have
				\[
				\limsup\limits_{T \to \infty}\frac{1}{T}\int_{0}^T |b^\P_i(t)|^{r'}dt< \infty, \quad i=1,\dots,d, \quad \P\text{-a.s.}
				\]
				for some $r' > s'$ where $s'$ is the conjugate exponent of $s = \min\{\bar a_2,\dots,\bar a_{N+1}\}$, meaning that $1/s + 1/s'=1$.
			\end{enumerate}
		\end{defn}  
		Condition~\ref{item:non_explosion} guarantees that $X_{(N+1)}$ does not hit zero so that the candidate robust growth-optimal strategy $\hat \theta$ is actually a trading strategy (i.e.\ $\hat \theta$ is $X$-integrable). Conditions~\ref{item:QV_rank_condition} and \ref{item:ergodic_rank} encode the covariation structure and ergodic property of $X_{()}$ respectively. Condition~\ref{item:drift_condition_rank} is a technical condition on the asymptotic growth of the drift process of $X$. Our proof of the main result in this section, Theorem~\ref{thm:robust} below, relies on this condition; however, it is unclear to us if it can be relaxed or not. Also, note that the constant $s$ appearing in \ref{item:drift_condition_rank} is strictly greater than one by condition \eqref{eqn:finite_growth_ass}.
		\begin{remark}
			Note that $\hat \P \in \Pi_{\geq}$. Thus the law of the rank Jacobi process is admissible. This follows from \eqref{eqn:X_()_dynamics}, Corollary~\ref{cor:rank_ergodic} and the fact that drift of $X$ in \eqref{eqn:X_polynomial_dynamics} is bounded.
		\end{remark}

		The \emph{asymptotic growth rate} of a wealth process $V^\theta$ and an admissible law $\P \in \Pi_{\geq}$ is defined by
		\begin{equation} \label{eqn:growth_rate_def}
			g(V^\theta;\P) = \sup\left\{\eta \in \R \colon \liminf_{T \to \infty} \frac{1}{T} \log V^\theta(T)\geq \eta, \text{ $\P$-a.s.} \right\}.
		\end{equation}
		We aim to compute the \emph{robust optimal growth rate} in the open market of size $N$, 
		\begin{equation} \label{eqn:lambda}
			\hat \lambda := \sup_{\theta \in \mathbb{O}^N}\inf_{\P \in \Pi_{\geq}} g(V^\theta;\P).
		\end{equation}
		We can now state the main result of this section.
		
		\begin{thm}[Robust growth optimality]\label{thm:robust}
			The robust optimal growth rate $\hat \lambda$ is achieved by $\hat \theta$ given by \eqref{eqn:hat_pi_hybrid} and we have 
			\begin{equation} \label{eqn:robust_growth} 
				\hat\lambda = \frac{\sigma^2}{8}\int_{\nabla^{d-1}} \(\sum_{k=1}^N\frac{a_k^2}{y_k} + \frac{\bar a_{N+1}^2}{\bar y_{N+1}}\)q(y)dy - \frac{\sigma^2}{8}\bar a_1^2.
			\end{equation}
			Moreover, $g(V^{ \hat \theta}; {\P}) = \hat\lambda$ for every $\P \in \Pi_{\geq}$.
		\end{thm}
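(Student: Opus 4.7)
The statement has two components: that $\hat\theta$ attains the robust sup and the explicit formula~\eqref{eqn:robust_growth}, and the stronger pointwise identity $g(V^{\hat\theta};\P) = \hat\lambda$ for every $\P \in \Pi_{\geq}$. I would prove these via two steps.

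First (computing $g(V^{\hat\theta};\P) = \hat\lambda$ for every $\P \in \Pi_{\geq}$): the crux is the functional-generation identity~\eqref{eqn:hat_pi_log_wealth} from Proposition~\ref{prop:func_gen_open}. Every term on its right-hand side---the terminal value $\log\hat F(X_{()}(T))$, the Lebesgue integral $\int (L_\kappa \hat F/\hat F)(X_{()})\,dt$, and the collision local times---depends on the law of $X$ only through the ranked covariation structure $\kappa$, which condition~(ii) fixes across $\Pi_{\geq}$. Consequently,~\eqref{eqn:hat_pi_log_wealth} holds $\P$-a.s.\ under every $\P \in \Pi_{\geq}$, and the computation of its $1/T$-limit is independent of $\P$. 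To identify this limit, work under $\P = \hat\P$: by~\eqref{eqn:logV_arb_pi} and the first-order condition underlying Theorem~\ref{thm:hybrid_growth_optimal}, the drift of $\log V^{\hat\theta}$ equals $\tfrac12 \hat h^\top \kappa^N \hat h = \tfrac{\sigma^2}{8}\bigl(\sum_{k=1}^N a_k^2/X_{(k)} + \bar a_{N+1}^2/\bar X_{(N+1)} - \bar a_1^2\bigr)$. By Corollary~\ref{cor:rank_ergodic} its time average converges a.s.\ to the right-hand side of~\eqref{eqn:robust_growth}; the strict inequality in~\eqref{eqn:finite_growth_ass} secures $q$-integrability of $y_k^{-1}$ and $\bar y_{N+1}^{-1}$. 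The martingale part of~\eqref{eqn:logV_arb_pi} has $O(T)$ quadratic variation and vanishes at rate $1/T$ by the LLN for continuous martingales.

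Second (upper bound $\sup_{\theta \in \mathbb{O}^N} \inf_{\P \in \Pi_{\geq}} g(V^\theta;\P) \leq \hat\lambda$): since $\hat\P \in \Pi_{\geq}$, it suffices to prove $g(V^\theta;\hat\P) \leq \hat\lambda$ for every $\theta \in \mathbb{O}^N$. A direct Itô computation using the first-order condition~\eqref{eqn:FOC} satisfied by $\hat\theta$ under $\hat\P$ shows that $(\theta - \hat\theta)^\top c(\ell - \hat\theta) = 0$ for any open-market $\theta$, so that $V^\theta/V^{\hat\theta}$ is a positive local martingale, hence a supermartingale converging $\hat\P$-a.s.\ to a finite limit. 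Therefore $\limsup_T T^{-1}\log V^\theta(T) \leq \limsup_T T^{-1}\log V^{\hat\theta}(T) = \hat\lambda$ a.s., whence $g(V^\theta;\hat\P) \leq \hat\lambda$.

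The hardest point lies in justifying $\P$-independence of~\eqref{eqn:hat_pi_log_wealth}: its derivation under $\hat\P$ exploits the absence of triple collisions in $\Delta^{d-1}_+$ (Proposition~\ref{prop:Lebesgue_collision}), a property not available a priori under a generic $\P \in \Pi_{\geq}$. Since $\hat F$ depends only on $X_{(1)},\ldots,X_{(N)}$ and the tail sum $\bar X_{(N+1)}$, one must check that any higher-order collisions among ranks $\geq N+1$ contribute nothing to the expansion. In parallel, the pointwise claim $T^{-1}\log\hat F(X_{()}(T)) \to 0$ a.s.\ is delicate: it requires coupling condition~(iv)'s H\"older bound on $b^\P$ (exponent $r' > s'$) with an independent Itô analysis of $\log\hat F(X_{()})$, controlling the drift via the moment estimate on $b^\P$ and the quadratic-variation contribution via ergodic convergence, thereby ruling out sub-linear boundary blowup of $\log\hat F$ at time $T$.
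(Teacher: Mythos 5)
Your second step (the supermartingale argument showing $\sup_{\theta\in\mathbb{O}^N} g(V^\theta;\hat\P) = g(V^{\hat\theta};\hat\P)$) is a valid route; the paper instead cites the growth optimality established in Theorem~\ref{thm:hybrid_growth_optimal}, but the num\'eraire/supermartingale argument reaches the same conclusion and the computation of the orthogonality from the first-order condition~\eqref{eqn:FOC} is essentially correct.

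The genuine gap is in your first step, and it is precisely the hard part. You assert that the decomposition~\eqref{eqn:hat_pi_log_wealth} holds $\P$-a.s.\ under every $\P \in \Pi_{\geq}$ on the grounds that its terms ``depend only on $\kappa$.'' This is not true of the local-time terms: $\tilde L_{k,k+1}$ is a pathwise functional whose form in~\eqref{eqn:hat_pi_log_wealth} was derived from the explicit ranked dynamics~\eqref{eqn:X_()_dynamics}, which hold only under $\hat\P$ and rely on the absence of triple collisions in $\Delta^{d-1}_+$ (Proposition~\ref{prop:Lebesgue_collision}\ref{item:triple_zero}), a property that Definition~\ref{def:Pi_geq} does not grant a generic $\P$. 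Even if one convinced oneself the decomposition carried over, the $1/T$-limits of the local-time terms are not pinned down by conditions~\ref{item:QV_rank_condition}--\ref{item:ergodic_rank} of $\Pi_{\geq}$ (condition~\ref{item:ergodic_rank} is an ergodic theorem for Lebesgue integrals, not for local-time integrals), so ``the computation of its $1/T$-limit is independent of $\P$'' does not follow. You flag both difficulties in your final paragraph but offer no resolution, which leaves the core of the argument unproved.

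The paper avoids the local times altogether: it mollifies $\hat G$ to obtain $C^2$, permutation-invariant generating functions $\hat G_{\epsilon,n}$ for which the Master Formula (Remark~\ref{rem:func_gen}) yields a representation with \emph{no} local-time contribution, so Lemma~\ref{lem:func_gen_growth_ivnariance} gives a $\P$-independent growth rate for $\theta^{\hat G_{\epsilon,n}}$ directly from conditions~\ref{item:QV_rank_condition}--\ref{item:ergodic_rank}. The passage from $\hat G_{\epsilon,n}$ back to $\hat G$ is then an approximation of the \emph{wealth processes}, uniform in $T$, and this is exactly where condition~\ref{item:drift_condition_rank} (the $L^{r'}$ drift bound) is used via H\"older's inequality to control the martingale and drift error terms. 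Your proposal never invokes condition~\ref{item:drift_condition_rank} for this purpose; without it, or without the mollification scheme, there is no mechanism to make the approximation uniform and the argument does not close.
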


		To prove this result we establish that the expression on the right hand side of \eqref{eqn:robust_growth} is both a lower and upper bound for $\hat\lambda$. By definition of $\hat \lambda$ we have the bounds
		\begin{equation} \label{eqn:lambda_bound}
			\inf_{\P \in \Pi_{\geq}} g(V^{\hat \theta};\P) \leq \hat\lambda \leq \sup_{\theta \in \mathbb{O}^N} g(V^{\theta};\hat \P).
		\end{equation}
		From the growth optimality of $\hat \theta$ under $\hat \P$, established in Theorem~\ref{thm:hybrid_growth_optimal}, the upper bound is equal to $g(V^{\hat \theta};\hat \P)$. Moreover, Lemma~\ref{lem:hybrid_model_asymptotic_growth} below establishes that $g(V^{\hat \theta};\hat \P)$ matches the right hand side of \eqref{eqn:robust_growth}. Then, in view of \eqref{eqn:lambda_bound}, to complete the proof it suffices to show that $\hat \theta$ achieves the same asymptotic growth rate under every measure $\P \in \Pi_{\geq}$. Lemma~\ref{lem:func_gen_growth_ivnariance} below establishes that strategies functionally generated by permutation invariant smooth functions bounded away from zero indeed have this growth rate invariance property. By approximating the generating function of $\hat \theta$ by such functions we are able to show that $\hat \theta$ also possesses this property. This last step is technical and is carried out in Appendix~\ref{app:robust}.

		We start with a technical lemma, whose proof is also located in Appendix~\ref{app:robust}.
		\begin{lem} \label{lem:technical_robust} Fix $\P \in \Pi_{\geq}$.  
			\begin{enumerate}[label = ({\roman*})]
				\item \label{item:collision} For $k=1,\dots,N$ the set $\{t: X_{(k)}(t) = X_{(k+1)}(t)\}$ is $\P$-a.s.\ a Lebesgue nullset.  
				\item \label{item:QV} We have the identity $[X_i,X_j](T) = \int_0^T\sum_{k,l=1}^d 1_{\{{\bf{n}}_k(t)=i,{\bf{n}}_l(t)=j\}}\kappa_{kl}(X_{()}(t))dt$ for every $i,j=1,\dots,d$ and every $T \geq 0$.
			\end{enumerate}
		\end{lem}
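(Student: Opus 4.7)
Both parts rely on Definition~\ref{def:Pi_geq}\ref{item:non_explosion}-\ref{item:QV_rank_condition}, which give the explicit bracket structure of the ranked process together with strict positivity of $X_{(N+1)}$.

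For part~\ref{item:collision}, my plan is to study the non-negative gap process $Y_k := X_{(k)} - X_{(k+1)}$ for $k \in \{1,\dots,N\}$. From Definition~\ref{def:Pi_geq}\ref{item:QV_rank_condition} and the explicit form \eqref{eqn:kappa_def} of $\kappa$, a direct computation gives
\[
d[Y_k, Y_k](t) = \sigma^2\bigl(X_{(k)}(t) + X_{(k+1)}(t) - Y_k(t)^2\bigr)\,dt.
\]
The occupation times formula for the continuous semimartingale $Y_k$ yields $\int_0^T 1_{\{Y_k(t) = 0\}}\,d[Y_k, Y_k](t) = 0$, because $\{0\}$ has Lebesgue measure zero in $\R$. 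On the set $\{Y_k = 0\}$ the term $Y_k^2$ vanishes, so substituting into the display above gives
\[
\int_0^T 1_{\{Y_k(t) = 0\}}\bigl(X_{(k)}(t) + X_{(k+1)}(t)\bigr)\,dt = 0.
\]
For $k \leq N$ one has $X_{(k)}(t) + X_{(k+1)}(t) \geq 2 X_{(N+1)}(t)$; combined with Definition~\ref{def:Pi_geq}\ref{item:non_explosion} and continuity, $X_{(N+1)}$ is bounded below by a positive (random) constant on $[0,T]$. Hence $\int_0^T 1_{\{Y_k(t)=0\}}\,dt = 0$, which gives \ref{item:collision}.

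For part~\ref{item:QV}, I would decompose time according to the active ranking permutation. For each $\sigma \in \mathcal{T}_d$ define the progressively measurable open set $O_\sigma := \{t \geq 0 : X_{\sigma^{-1}(1)}(t) > \dots > X_{\sigma^{-1}(d)}(t)\}$. On $O_\sigma$ one has ${\bf n}_k(t) = \sigma^{-1}(k)$ and in particular $X_i \equiv X_{(\sigma(i))}$. Because the continuous semimartingale $X_i - X_{(\sigma(i))}$ vanishes identically on the open set $O_\sigma$, its quadratic variation restricted to $O_\sigma$ is zero, and Kunita--Watanabe gives $d[X_i - X_{(\sigma(i))}, X_j](t) = 0$ on $O_\sigma$; applying this symmetrically in $j$ yields
\[
d[X_i, X_j](t) = d[X_{(\sigma(i))}, X_{(\sigma(j))}](t) = \kappa_{\sigma(i)\sigma(j)}(X_{()}(t))\,dt \quad \text{on } O_\sigma,
\]
where the second equality uses Definition~\ref{def:Pi_geq}\ref{item:QV_rank_condition}. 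Since on $O_\sigma$ the indicator $1_{\{{\bf n}_k(t)=i,\,{\bf n}_l(t)=j\}}$ is $1$ exactly for $(k,l)=(\sigma(i),\sigma(j))$, this coincides with $\sum_{k,l} 1_{\{{\bf n}_k(t)=i,\,{\bf n}_l(t)=j\}}\kappa_{kl}(X_{()}(t))\,dt$. Summing over the pairwise-disjoint $O_\sigma$ establishes the identity on $\bigcup_\sigma O_\sigma$, the complement of the collision set $C = \{t : \exists m \neq n,\ X_m(t) = X_n(t)\}$.

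The main obstacle is verifying that $C$ contributes nothing to either side of the asserted identity. For an adjacent rank collision at rank $k$ with $k \leq N$, part~\ref{item:collision} shows the event is a Lebesgue nullset, so its contribution to the right-hand side vanishes and, combined with absolute continuity of $d[X_i, X_j]$ on this set (transferred from the ranked process via the local identifications above), its contribution to the left-hand side vanishes as well. For collisions confined to lower ranks ($k > N$) where $X_{(k)} = X_{(k+1)} > 0$, the same occupation-times argument as in part~\ref{item:collision} still shows the corresponding time-set is a nullset. The remaining case $X_{(k)} = X_{(k+1)} = 0$ is handled by observing that the formula $\kappa_{kl}(y) = \sigma^2 y_k(\delta_{kl} - y_l)$ forces $\kappa_{kl}$ to vanish whenever $y_k = 0$ or $y_l = 0$, so the right-hand side integrand is zero; a companion Tanaka--Meyer analysis of the non-negative semimartingale $X_i$ at its zero set shows the left-hand side contribution on $\{X_i = 0\}$ is likewise zero.
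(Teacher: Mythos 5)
Part \ref{item:collision} is essentially the paper's argument: both use the occupation density formula applied to $X_{(k)}-X_{(k+1)}$, the explicit bracket from Definition~\ref{def:Pi_geq}\ref{item:QV_rank_condition}, and the positivity of $X_{(N+1)}$ from Definition~\ref{def:Pi_geq}\ref{item:non_explosion}. The paper simply notes that the integrand vanishes iff both ranked weights are zero, which cannot happen for $k\leq N$; your lower bound via $X_{(N+1)}$ says the same thing. Correct.

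For part \ref{item:QV} you take a genuinely different route. The paper invokes the semimartingale decomposition of $X_{()}$ from Banner and Ghomrasni, which expresses $dX_{(k)}$ as an $N_k$-weighted average of $\sum_i 1_{\{\mathbf{n}_k=i\}}dX_i$ plus local-time terms, then feeds in the prescribed bracket of the ranked process and uses part \ref{item:collision} to kill the multiplicities $N_k$ a.e.\ on $\{X_{(k)}>0\}$. You instead work from the names side, localizing on the permutation cells $O_\sigma$ and transferring brackets via the Kunita--Watanabe inequality applied to $X_i-X_{(\sigma(i))}$. That core step is fine and, if anything, avoids a citation.

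However there is a genuine gap in how you close the argument on the collision set $C=\{t:\exists\, m\neq n,\ X_m(t)=X_n(t)\}$, which you correctly identify as the remaining obstacle but do not resolve. Your claim that the contribution of $C$ to $d[X_i,X_j]$ vanishes via ``absolute continuity of $d[X_i,X_j]$ on this set, transferred from the ranked process via the local identifications above'' is circular: those identifications ($X_i\equiv X_{(\sigma(i))}$) only hold on $O_\sigma$, the complement of $C$. Knowing that $C$ has Lebesgue measure zero (for its positive-level part) does not give you $\int_0^T 1_C\,d[X_i,X_j]=0$ unless you already know $d[X_i,X_j]$ is absolutely continuous there, which is precisely what is at issue. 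Moreover, your Tanaka--Meyer remark only handles $\{X_i=0\}$ or $\{X_j=0\}$; it does not address times $t\in C$ at which $X_i(t),X_j(t)>0$ but two \emph{other} names coincide (e.g.\ at level zero in ranks beyond $N$). At such $t$ neither the $O_\sigma$ argument nor your boundary remark applies, the right-hand integrand need not vanish, and the left-hand side's behaviour is unjustified.

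The fix is to abandon the strict-ordering cells and decompose instead by $A_k := \{t:\mathbf{n}_k(t)=i\}$, $k=1,\dots,d$ (measurable thanks to the lexicographic tie-break, and a partition of $[0,\infty)$). Since $A_k\subseteq\{X_i=X_{(k)}\}$, the occupation density formula gives $\int_0^T 1_{A_k}\,d[X_i - X_{(k)},X_i-X_{(k)}]=0$ with no openness or non-collision hypothesis. Kunita--Watanabe then yields $1_{A_k}\,d[X_i,X_j] = 1_{A_k}\,d[X_{(k)},X_j]$; running the symmetric argument in $j$ and using Definition~\ref{def:Pi_geq}\ref{item:QV_rank_condition} gives
\[
1_{\{\mathbf{n}_k(t)=i,\,\mathbf{n}_l(t)=j\}}\,d[X_i,X_j](t) = 1_{\{\mathbf{n}_k(t)=i,\,\mathbf{n}_l(t)=j\}}\,\kappa_{kl}(X_{()}(t))\,dt
\]
for all $k,l$, and summing over $k,l$ (the indicators sum to $1$) gives the claimed identity without ever touching $C$ separately. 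With this modification your approach is correct and arguably cleaner than the paper's, which needs the imported decomposition of the ranked process.
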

		
		Next we characterize the asymptotic growth rates of strategies under the law $\hat \P$ of the rank Jacobi process.
		
		\begin{lem}\label{lem:hybrid_model_asymptotic_growth} Consider a strategy $\theta(t) = \theta(X(t))$ in feedback form satisfying the integrability condition 
			$\int_{\Delta^{d-1}}|\theta^\top c\ell| p< \infty$
			where $c$, $p$ and $\ell$ are given by \eqref{eqn:c_def}, \eqref{eqn:p_def} and \eqref{eqn:ell_def} respectively. Then
			\begin{equation} \label{eqn:worst_case_ergodic}
				g(V^{\theta};\hat \P) = \int_{\Delta^{d-1}}\( \theta^\top c\ell - \frac{1}{2}\theta^\top c\theta\)p.
			\end{equation}
			In particular 
			\begin{equation} \label{eqn:worst_case_growth}
				g(V^{\hat \theta};\hat \P) = \frac{1}{8}\int_{\nabla^{d-1}} \(\sum_{k=1}^N\frac{a_k^2}{y_k} + \frac{\bar a_{N+1}^2}{\bar y_{N+1}}\)q(y)dy - \frac{1}{8}\bar a_1^2,
			\end{equation}
			where $\hat \theta$ is given by \eqref{eqn:hat_pi_hybrid}.
		\end{lem}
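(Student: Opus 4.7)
The plan is to combine the semimartingale decomposition of $\log V^\theta$ in \eqref{eqn:logV_arb_pi} with the ergodicity of $X$ under $\hat\P$ (Theorem~\ref{thm:ergodic}). Write $\log V^\theta(T) = A(T) + M(T)$ where $A(T) = \int_0^T(\theta^\top c\ell - \tfrac12\theta^\top c\theta)(X(t))\,dt$ and $M$ is the local martingale part with $d[M,M](t) = (\theta^\top c\theta)(X(t))\,dt$. Applying Birkhoff separately to the integrable function $\theta^\top c\ell$ and to the non-negative function $\theta^\top c\theta$ (whose ergodic limit may be $+\infty$) yields
\begin{equation*}
\frac{1}{T}A(T) \longrightarrow \int_{\Delta^{d-1}}\bigl(\theta^\top c\ell - \tfrac12\theta^\top c\theta\bigr)p \in [-\infty,\infty), \qquad \hat\P\text{-a.s.}
\end{equation*}

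To conclude \eqref{eqn:worst_case_ergodic}, I need to show that $M$ contributes negligibly to the growth rate. When $\int\theta^\top c\theta\, p < \infty$, ergodicity gives $T^{-1}[M,M](T) \to \int\theta^\top c\theta\, p$, so the strong law for continuous local martingales (obtained via Dambis--Dubins--Schwarz and the Brownian law of the iterated logarithm) forces $M(T)/T \to 0$ $\hat\P$-a.s. When instead $\int\theta^\top c\theta\, p = \infty$, both sides of \eqref{eqn:worst_case_ergodic} are $-\infty$; writing $M(T) = B([M,M](T))$ for a standard Brownian motion $B$ and using $B(s)/s \to 0$ together with $T^{-1}[M,M](T) \to \infty$, the fluctuations are of smaller order than $[M,M](T)$, so the dominant $-\tfrac12[M,M](T)$ term forces $T^{-1}\log V^\theta(T) \to -\infty$.

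For the explicit identity \eqref{eqn:worst_case_growth}, substitute $\theta = \hat\theta$ using its representation \eqref{eqn:pi_canonical}--\eqref{eqn:hat_h} and exploit two simplifications. First, because $c\boldsymbol{1}_d = 0$ (equivalently $\kappa\boldsymbol{1}_d = 0$), permuting to the rank basis gives $\hat\theta^\top c\ell = \hat h^\top(\kappa\varrho)^N$ and $\hat\theta^\top c\hat\theta = \hat h^\top\kappa^N\hat h$; the first-order condition \eqref{eqn:FOC} then collapses both expressions to $\hat h^\top\kappa^N\hat h$, so the integrand reduces to $\tfrac12 \hat h^\top\kappa^N\hat h$, whose explicit form in the rank case ($\gamma = 0$) is given in \eqref{eqn:local_growth_jacobi}. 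Second, this integrand depends on $X$ only through the ranked coordinates $X_{()}$, so Corollary~\ref{cor:rank_ergodic} converts the $p$-integral over $\Delta^{d-1}$ into a $q$-integral over $\nabla^{d-1}$, yielding \eqref{eqn:worst_case_growth}.

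The main obstacle I anticipate is the martingale control in the regime $\int\theta^\top c\theta\, p = \infty$: the hypothesis only imposes integrability of $\theta^\top c\ell$ under $p$, not of $\theta^\top c\theta$, so one must argue carefully via the time-change representation that the martingale fluctuations cannot cancel the divergent negative drift. All remaining steps---the ergodic theorem, the strong law for continuous local martingales, and the algebraic simplifications at $\hat\theta$---are essentially routine once this point is handled.
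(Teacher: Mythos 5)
Your proposal is correct and essentially identical to the paper's argument: both decompose $\log V^\theta$ into the drift $\int_0^T(\theta^\top c\ell)(X)\,dt$ minus $\tfrac12[M,M](T)$ plus the martingale $M$, apply the ergodic theorem of Theorem~\ref{thm:ergodic} to each drift piece (allowing $\int\theta^\top c\theta\,p=+\infty$), and then control $M$ via Dambis--Dubins--Schwarz and the strong law for Brownian motion, with the two cases ($\int\theta^\top c\theta\,p$ finite or infinite) handled exactly as you describe; the paper simply bundles the two cases into a single identity $M(T)-\tfrac12[M,M](T)=[M,M](T)\bigl(B([M,M](T))/[M,M](T)-\tfrac12\bigr)$. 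The closing computation for \eqref{eqn:worst_case_growth} via \eqref{eqn:local_growth_jacobi}, the first-order condition \eqref{eqn:FOC}, and the passage from a $p$-integral on $\Delta^{d-1}$ to a $q$-integral on $\nabla^{d-1}$ is also the same as in the paper.
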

		\begin{proof}
			We work under $\hat \P$.
			Using \eqref{eqn:X_polynomial_dynamics}, the wealth process for any strategy $\theta$ is given by
			\begin{align}\log V^{\theta}(T)= \int_0^T \theta(X(t))^\top c(X(t))\ell(X(t))dt + M(T) - \frac{1}{2} [M,M](T), \label{eqn:ergodic_lemma1}
			\end{align}
			where $dM(t) = \theta(X(t))^\top \sigma(X(t))dW(t)$ for $\sigma_{ij}(x) = \sigma \sqrt{x_j} (\delta_{ij} - x_i)$. By the integrability assumption and the ergodic property, Theorem~\ref{thm:ergodic}, we have
			\begin{equation} \label{eqn:ergodic_lemma2}
				\lim_{T \to \infty} \frac{1}{T} \int_0^T \theta(X(t))^\top c(X(t))\ell(X(t))dt = \int_{\Delta^{d-1}}\theta^\top c\ell p.
			\end{equation}
			Also by the ergodic property, $\lim_{T \to \infty} T^{-1}[M,M](T) = \int_{\Delta^{d-1}} \theta^\top c\theta p$, where the right-hand side may be infinite. By the Dambis--Dubins--Schwarz theorem there exists a Brownian motion $B$ (possibly on an extended probability space) such that $M(T) = B([M,M](T))$. The strong law of large numbers for Brownian motion, along with the ergodic property, yield
			\begin{equation} \label{eqn:ergodic_lemma3}
				\begin{aligned}
					\lim_{T \to \infty} \frac{M(T) - \frac{1}{2}[M,M](T)}{T} &= \lim_{T \to \infty} \frac{[M,M](T)}{T}\(\frac{B([M,M](T))}{[M,M](T)} - \frac{1}{2}\) \\
					&= - \frac{1}{2}\int_{\Delta^{d-1}}\theta^\top c\theta p.
				\end{aligned}
			\end{equation}
			Combining \eqref{eqn:ergodic_lemma1}, \eqref{eqn:ergodic_lemma2} and \eqref{eqn:ergodic_lemma3} gives \eqref{eqn:worst_case_ergodic}. Moreover, a direct calculation shows that  
			\[  \hat \theta(x)^\top c(x)\ell(x) - \frac{1}{2}\hat \theta(x)^\top c(x)\hat \theta(x) = \frac{1}{8}\(\sum_{k=1}^N\frac{a_k^2}{x_{(k)}} + \frac{\bar a_{N+1}^2}{\bar x_{(N+1)}} - \bar a_1^2\),\]
			which yields \eqref{eqn:worst_case_growth}. The integral in \eqref{eqn:worst_case_growth} is finite by virtue of \eqref{eqn:finite_growth_ass} and Lemma~\ref{lem:Q_finite}.
		\end{proof}
		
		The next step is to establish the growth rate invariance property for functionally generated strategies with sufficiently regular generating functions.
		
		\begin{lem}[Growth rate invariance] \label{lem:func_gen_growth_ivnariance}
			Let  $F\in C^2(\nabla^{d-1};(0,\infty))$ be such that $\log F$ is bounded. Set $G(x) = F(x_{()})$ for $x \in \Delta^{d-1}$ and assume that $G \in C^2(\Delta^{d-1};(0,\infty))$. Then
			\begin{equation} \label{eqn:func_gen_growth_invariance} g(V^{\theta^G};\P) = \int_{\nabla^{d-1}} \frac{-L_\kappa F}{F}q = \int_{\nabla^{d-1}}\( \nabla \log F^\top \kappa \varrho - \frac{1}{2}\nabla \log F^\top \kappa \nabla\log F\)q
			\end{equation}
			for every $\P \in \Pi_{\geq}$, where $\varrho(y) = \ell(y)$ for $y \in \nabla^{d-1}_+$ and $L_\kappa$ is given by \eqref{eqn:L_kappa}.
		\end{lem}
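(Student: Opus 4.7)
The plan is to express $\log V^{\theta^G}(T)/T$ as a single ergodic integral involving only the ranked process $X_{()}$, extract the first equality in \eqref{eqn:func_gen_growth_invariance} directly from the ergodic property~\ref{item:ergodic_rank}, and then convert the result into the second equality via an integration-by-parts identity for $q$.

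First I would apply Remark~\ref{rem:func_gen} to rewrite
\[
\log V^{\theta^G}(T) = \log G(X(T)) - \log G(X(0)) - \tfrac12\sum_{i,j=1}^d\int_0^T \frac{\partial_{ij}G(X(t))}{G(X(t))}\,d[X_i,X_j](t).
\]
The boundary term is bounded since $\log F$ is bounded and $G = F\circ(\cdot)_{()}$. To recast the drift integral in ranked coordinates I would invoke Lemma~\ref{lem:technical_robust}\ref{item:collision}, which says $X(t)$ has distinct coordinates for Lebesgue-a.e.\ $t$; on this set, differentiating $G = F\circ(\cdot)_{()}$ through a locally constant ordering permutation gives $\partial_{ij}G(x) = \partial_{{\bf r}_i(x){\bf r}_j(x)}F(x_{()})$. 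Combined with Lemma~\ref{lem:technical_robust}\ref{item:QV}, the sum collapses to
\[
\sum_{i,j=1}^d\frac{\partial_{ij}G(X(t))}{G(X(t))}\,d[X_i,X_j](t) = \sum_{k,l=1}^d \frac{\partial_{kl}F(X_{()}(t))}{F(X_{()}(t))}\kappa_{kl}(X_{()}(t))\,dt = \frac{2L_\kappa F(X_{()}(t))}{F(X_{()}(t))}\,dt.
\]
Since $L_\kappa F/F$ is continuous (hence bounded) on the compact set $\nabla^{d-1}$ and therefore $q$-integrable, dividing by $T$, sending $T\to\infty$, and invoking~\ref{item:ergodic_rank} yields the first equality in \eqref{eqn:func_gen_growth_invariance}.

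For the second equality it suffices to establish the identity $\int_{\nabla^{d-1}}\tilde L\log F\cdot q = 0$, where $\tilde L u := L_\kappa u + \nabla u^\top\kappa\varrho$ is the interior generator of the rank Jacobi process with parameters $a$ and $\gamma=0$. Indeed, the chain rule $\partial_{kl}\log F = \partial_{kl}F/F - \partial_k F\partial_l F/F^2$ gives
\[
\tilde L\log F = \frac{L_\kappa F}{F} - \tfrac12\nabla\log F^\top\kappa\nabla\log F + \nabla\log F^\top\kappa\varrho,
\]
so integrating against $q$ and using $\int\tilde L\log F\cdot q=0$ converts the first expression into the second. To prove the identity I plan to apply Lemma~\ref{thm:IBP} to the closed-market Dirichlet form corresponding to parameters $a$ and $\gamma=0$: with $v=1$ and $\xi = \nabla\log G$ (both $C^1$, using $G\in C^2$ and $G>0$) it reads $\int_{\Delta^{d-1}}L\log G\cdot p = 0$. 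A direct computation using $c_{{\bf n}_k(x){\bf n}_l(x)}(x) = \kappa_{kl}(x_{()})$ and $\partial_iG(x) = \partial_{{\bf r}_i(x)}F(x_{()})$ shows $L\log G(x) = \tilde L\log F(x_{()})$ off the (measure-zero) tie set. The permutation symmetry of $p$ when $\gamma=0$ gives the change of variables $\int_{\Delta^{d-1}}h(x_{()})\,p\,dx = d!\int_{\nabla^{d-1}}h\,q\,dy$, from which the desired identity drops out.

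The main subtlety, though not technically deep, lies in Step~3: one must extract the rank-side integration-by-parts identity for $q$ from the closed-market one for $p$ by exploiting permutation invariance, and identify $L\log G$ with $\tilde L\log F\circ(\cdot)_{()}$ on the set where $X$ has distinct coordinates. Once this bridge is in place, the rest is bookkeeping of signs and factors in the chain rule.
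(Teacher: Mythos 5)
Your argument for the first equality tracks the paper's own proof closely: reduce via Remark~\ref{rem:func_gen}, collapse the double sum into ranked coordinates using Lemma~\ref{lem:technical_robust}\ref{item:QV} (the paper uses the identity $\partial_i G(x)=\partial_{{\bf r}_i(x)}F(x_{()})$ pointwise, you pass through the tie set being a Lebesgue nullset---both are fine), and invoke the ergodic property together with boundedness of $L_\kappa F/F$ on the compact simplex.

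For the second equality your route is different from the paper's. The paper obtains it indirectly by applying Lemma~\ref{lem:hybrid_model_asymptotic_growth} under the reference law $\hat\P$, using that $g(V^{\theta^G};\hat\P)$ must coincide with the already-established $\int(-L_\kappa F/F)\,q$ and then rewriting $\int_{\Delta^{d-1}}(\nabla\log G^\top c\ell - \tfrac12\nabla\log G^\top c\nabla\log G)\,p$ in ranked coordinates via permutation invariance. You instead prove the identity directly from Lemma~\ref{thm:IBP}: taking $v\equiv 1$, $\xi = \nabla\log G$ gives $\int_{\Delta^{d-1}} L\log G\,p = -\mathcal{E}(\log G,1) = 0$, which after passing to ranked variables yields $\int_{\nabla^{d-1}}\tilde L\log F\cdot q=0$; expanding $\tilde L\log F$ via the chain rule then produces the desired identity. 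This is exactly the alternative the authors flag in the remark immediately following the lemma ("can also be proved by integration by parts"). Your route is a touch more self-contained since it never invokes Lemma~\ref{lem:hybrid_model_asymptotic_growth}, at the cost of a short explicit computation.

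One small slip: the change of variables is $\int_{\Delta^{d-1}}h(x_{()})\,p\,dx = \int_{\nabla^{d-1}}h\,q\,dy$, without the $d!$ you wrote---the $d!$ is already absorbed into $q$ via \eqref{eqn:q_general_def}. It is harmless here because the quantity being transferred is zero, but it would matter in any non-vanishing application of the same step.
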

		\begin{proof}
			By Remark~\ref{rem:func_gen} we have the representation 
			\[\log V^{\theta^G}(T) = \log G(X(T)) - \frac{1}{2}\int_0^T \sum_{i,j=1}^d \frac{\partial_{ij}G}{G} (X(t)) d[X_i,X_j](t).\]
			Moreover, we have $\partial_i G(x) = \partial_k F(x_{()})$ for all $x \in \Delta^{d-1}$ and all $i,k$ such that ${\bf{n}}_k(x) = i.$
			Hence, using Lemma~\ref{lem:technical_robust}\ref{item:QV}, we see that 
			\begin{equation} \label{eqn:smooth_permute}
				\begin{aligned}
					\frac{1}{2}\sum_{i,j=1}^d & \frac{\partial_{ij}G}{G} (X(t)) d[X_i,X_j](t) \\
					& = \frac{1}{2}\sum_{i,j=1}^d \frac{\partial_{ij}G}{G} (X(t)) \sum_{k,l=1}^d 1_{\{{\bf{n}}_k(t)= i, {\bf{n}}_l(t) = j\}}\kappa_{kl}(X_{()}(t))dt \\
					& = \frac{1}{2} \sum_{k,l=1}^d \frac{\partial_{kl} F}{F} (X_{()}(t)) \kappa_{kl}(X_{()}(t))dt = \frac{L_\kappa F }{F} (X_{()}(t)) dt.
				\end{aligned} 
			\end{equation}
			It follows that the wealth process can be written as
			\[\log V^{\theta^G}(T) = \log F(X_{()}(T)) + \int_0^T \frac{-L_\kappa F}{F} (X_{()}(t)) dt.\]
			Since $F \in C^2(\nabla^{d-1};(0,\infty))$ and $\log F$ is bounded, $-L_\kappa F/F$ is bounded. Hence, dividing by $T$, sending $T \to \infty$ and using the ergodic property yields
			\[
			\lim_{T \to \infty} \frac{1}{T} \log V^{\theta^G}(T) = \int_{\nabla^{d-1}} \frac{-L_\kappa F}{F}q, \quad \P\text{-a.s.}
			\]
			for every $\P \in \Pi_{\geq}$. This proves the first equality in \eqref{eqn:func_gen_growth_invariance}. The second equality in \eqref{eqn:func_gen_growth_invariance} follows by comparing this expression with the expression for $g(V^{\theta^G};\hat \P)$ given by Lemma~\ref{lem:hybrid_model_asymptotic_growth} and using the permutation invariance of $G$. Note that $\int_{\Delta^{d-1}} \nabla \log G^\top c \ell p< \infty$ since $\nabla \log G$ is bounded, so that Lemma~\ref{lem:hybrid_model_asymptotic_growth} is indeed applicable.
		\end{proof}
		
		\begin{remark}
			The second equality in \eqref{eqn:func_gen_growth_invariance} can also be proved by integration by parts.
		\end{remark}
		
		In view of Lemma~\ref{lem:func_gen_growth_ivnariance} and $\hat \theta$ being functionally generated by virtue of Proposition~\ref{prop:func_gen_open}, we expect that the asymptotic growth rate of $\hat \theta$ is the same under each $\P \in \Pi_{\geq}$. This is indeed true, but the difficulty in proving this is that $\hat G$ in \eqref{eqn:hat_G} is not $C^2$. As such, the drift process $\Gamma$ in \eqref{eqn:func_gen_wealth} will contain local time terms as in \eqref{eqn:hat_pi_log_wealth}, whose ergodic averages are difficult to analyze for arbitrary $\P \in \Pi_{\geq}$. The solution is to mollify $\hat G$ to obtain smooth approximations $\hat G_n$ for which Lemma~\ref{lem:func_gen_growth_ivnariance} applies. We then approximate, uniformly in $T \in [0,\infty)$, the wealth process of $\hat \theta$ by the wealth processes of $\theta^{\hat G_n}$ induced by the mollified functions. This uniform approximation crucially uses Definition~\ref{def:Pi_geq}\ref{item:drift_condition_rank}. As a result we obtain the growth rate invariance property for $\hat \theta$ and complete the proof of Theorem~\ref{thm:robust}. The details are located in Appendix~\ref{app:robust}.
		
		\section{Conclusion} \label{sec:conclusion}
		In this paper we introduced hybrid Jacobi processes in Section~\ref{sec:hybrid_polynomial_models} and studied in detail their ergodic, collision and boundary attainment properties. In Section~\ref{sec:open_market_SPT} we shifted focus to study a relaxed version of an open market in a general non-parametric framework. We proposed a structural condition \eqref{eqn:kappa_condition} on the covariation between small and large-cap stocks under which the growth-optimal strategy in the open market becomes comparable to the growth-optimal strategy in the closed market it is embedded in. Then, in Section~\ref{sec:hybrid_market}, we combined the insights from the two previous sections to study an open market setup under which the market weight process is a hybrid Jacobi process. We observed that the hybrid Jacobi markets satisfy the desirable properties (1) and (2) laid out in the introduction. Additionally, Section~\ref{sec:robust} and Section~\ref{sec:rank_based} showed that the subclass of rank Jacobi markets satisfied properties (3) and (4) respectively.
		
		Although we do not study the empirical calibration of hybrid Jacobi models in this paper we would like to point out a few features of the model, which make it amenable to calibration. The explicit expression for the invariant density \eqref{eqn:p_def} may allow for \emph{moment matching}. That is to compute certain theoretical moments in terms of the parameters $a,\gamma$ and $\sigma$ and match these to the capital distribution curves obtained from data. Additionally, the ergodic expressions for the collision local times in Proposition~\ref{prop:ergodic_local} allow one to calibrate parameters to the \emph{turnover} in the stock ranks. Indeed, the methods of \cite{fernholz2013second} are applicable to the setting of this paper (note that when the volatility parameters in \cite{fernholz2013second} are all equal to each other then the invariant density of the market weights in \cite{fernholz2013second} is the same as in this paper with $\bar a_1 + \bar \gamma_1 = 0$). 
		
		One important limitation of the hybrid Jacobi models we introduce here is the appearance of only one single volatility parameter $\sigma^2$. Taking into consideration the condition \eqref{eqn:kappa_condition} a natural extension for the covariation matrix in the hybrid Jacobi models would be 
		\begin{equation} \label{eqn:extended_c}
			c_{ij}(x) = -\sigma^2_{kl}x_ix_j \quad \text{ for } i \ne j, \qquad  \text{ where } k = r_i(x), \ l = r_j(x)
		\end{equation} and setting $c_{ii}(x) = -\sum_{j \ne i} c_{ij}$ for rank-based volatility parameters $\sigma_{kl}$. However, unless all of the volatility parameters are the same (which brings up back precisely to the volatility structure in the hybrid Jacobi model), the covariation matrix \eqref{eqn:extended_c} becomes discontinuous. This raises many challenges including existence of the corresponding market weight process, its ergodic properties, etc. We leave this important extension to future research.
		
		
		\begin{appendix}

			\section{Integration over $\Delta^{d-1}$ and $\nabla^{d-1}$} \label{app:integral}
			In this section we discuss in detail the conventions regarding integration over the simplex, the ordered simplex and its extensions as defined below. We also establish some useful identities and a technical lemma, which will be used in the next sections.
			
			We first extend the definition of the ordered simplex to a larger collection of sets. Given $\alpha,\beta \in \R$ set  $\nabla^{d-1}(\alpha,\beta) = \{y \in \R^{d}: y_1 \geq y_2 \geq \dots \geq y_{d} \geq \beta, \ \sum_{k=1}^{d} y_k = \alpha \}$ and note that $\nabla^{d-1}(1,0) = \nabla^{d-1}$. All integrals over $\nabla^{d-1}(\alpha,\beta)$ are understood with respect to the pushforward of the Lebesgue measure on $\R^{d-1}$ under the transformation $T_\alpha:\R^{d-1} \to \R^d$ given by $T_\alpha(x_1,\dots,x_{d-1}) = (x_1,\dots,x_{d-1},\alpha-\sum_{k=1}^{d-1}x_k)$.
			That is for a measurable function $f:\nabla^{d-1}(\alpha,\beta) \to \R$ we have that 
			\[\int_{\nabla^{d-1}(\alpha,\beta)} f(y)dy = \int_{T_\alpha^{-1}(\nabla^{d-1}(\alpha,\beta))} f(x_1,\dots,x_{d-1},\alpha-\sum_{k=1}^{d-1}x_k)dx_1,\dots,dx_{d-1}.\]  Integrals over $\Delta^{d-1}$ are defined analogously using the map $T_1$.
			
			Now we establish some useful identities involving integrals of functions of product form. To this end let $b \in \R^d$ be given and for $\alpha,\beta > 0$ define
			\[Q_b(\alpha,\beta) := \int_{\nabla^{d-1}(\alpha,\beta)} \prod_{k=1}^d y_k^{b_k-1}dy.\]
			Note that $Q_b(\alpha,\beta)$ is finite for every $b \in \R^d$ since $\beta > 0$. Additionally, by the definition of $\nabla^{d-1}(\alpha,\beta)$ and the integral change of variables formula, we have the homogeneity property 
			\begin{equation} \label{eqn:homogeneity_property}
				Q_b(\lambda\alpha,\lambda\beta) = \lambda^{\bar b_1-1}Q_b(\alpha,\beta)
			\end{equation} for every $\lambda > 0$. Recall that $\bar b_k = \sum_{l=k}^d b_l$ for every $k=1,\dots,d$. We now establish a useful lemma providing a recursive formula for $Q_b(\cdot,\cdot)$ as well as a limiting formula.
			\begin{lem} \label{lem:Q_lim}
				Let $b \in \R^d$ be given. 
				\begin{enumerate}
					\item \label{item:Q_b_iter}For any $\alpha > 0$ and $0 < \beta \leq \alpha/d$ we have the identity
					\begin{equation} \label{eqn:Q_b_iterative}
						Q_b(\alpha,\beta) = \int_{\beta}^{\alpha/d}y_d^{b_d-1}Q_{b'}(\alpha-y_d,y_d)dy_d
					\end{equation}
					where $b' = (b_1,\dots,b_{d-1}) \in \R^{d-1}$.
					\item \label{item:Q_b_lim}  $\lim_{\epsilon \downarrow 0} \epsilon^\eta Q_b(1-c\epsilon,\epsilon) = 0$ for any $c \in \R$ and $\eta > \max \{0,-\bar b_2,\dots,-\bar b_d\}$.
				\end{enumerate}
			\end{lem}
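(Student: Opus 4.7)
For part \ref{item:Q_b_iter}, my plan is a direct application of Fubini's theorem. I will parameterize $\nabla^{d-1}(\alpha, \beta)$ by the coordinates $(y_2, \ldots, y_d)$ (so that $y_1 = \alpha - \sum_{k=2}^d y_k$ is the dependent coordinate); the defining constraints then read $y_2 \ge y_3 \ge \cdots \ge y_d \ge \beta$ together with $y_1 \ge y_2$. Fixing the smallest coordinate $y_d$, the remaining variables $(y_2, \ldots, y_{d-1})$ sweep out exactly the coordinate image of $\nabla^{d-2}(\alpha - y_d, y_d)$. The upper limit $y_d \le \alpha/d$ comes from combining $y_d \le y_k$ for every $k$ with the sum constraint $\sum_{k=1}^d y_k = \alpha$. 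Fubini's theorem then delivers the recursion.

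For part \ref{item:Q_b_lim}, the plan is to prove by induction on $d$ the quantitative bound
\[
Q_b(1, \beta) \;\le\; C\, \beta^{-\mu_b}\bigl(1 + |\log \beta|\bigr)^{k}, \qquad 0 < \beta \le \beta_0,
\]
where $\mu_b := \max\{0, -\bar b_2, \ldots, -\bar b_d\} \ge 0$ and $C, \beta_0, k$ depend on $b$ and $d$. The base case $d = 1$ is trivial since $Q_b(1, \beta) = 1$ and $\mu_b = 0$. For the inductive step, part \ref{item:Q_b_iter} together with the homogeneity property \eqref{eqn:homogeneity_property} gives
\[
Q_b(1, \beta) \;=\; \int_\beta^{1/d} y_d^{b_d - 1}(1 - y_d)^{\bar b'_1 - 1}\, Q_{b'}\!\Bigl(1, \tfrac{y_d}{1 - y_d}\Bigr) dy_d,
\]
where $b' = (b_1, \ldots, b_{d-1})$. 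On $y_d \in (0, 1/d]$, the factor $(1 - y_d)^{\bar b'_1 - 1}$ is bounded and $y_d/(1 - y_d)$ lies between $y_d$ and $2 y_d$, so substituting the inductive bound for $Q_{b'}$ reduces everything to estimating
\[
\int_\beta^{1/d} y_d^{b_d - 1 - \mu_{b'}}\bigl(1 + |\log y_d|\bigr)^{k'}\, dy_d.
\]

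Using the identity $\bar b'_k = \bar b_k - b_d$ for $k \ge 2$, one readily verifies the key relation $\mu_b = \max\{0, \mu_{b'} - b_d\}$. Three regimes then arise: (a) $b_d > \mu_{b'}$, where the integrand is integrable at $0$ and the integral is uniformly bounded, matching $\mu_b = 0$; (b) $b_d = \mu_{b'}$, where a logarithmic divergence increments $k$ by one while still $\mu_b = 0$; and (c) $b_d < \mu_{b'}$, where the integral is of order $\beta^{b_d - \mu_{b'}}(1 + |\log \beta|)^{k'} = \beta^{-\mu_b}(1 + |\log \beta|)^{k'}$. This closes the induction. Finally, homogeneity gives $Q_b(1 - c\epsilon, \epsilon) = (1 - c\epsilon)^{\bar b_1 - 1} Q_b\bigl(1, \epsilon/(1 - c\epsilon)\bigr)$; since $\epsilon/(1 - c\epsilon)$ is comparable to $\epsilon$ for small $\epsilon$, the bound yields $\epsilon^\eta Q_b(1 - c\epsilon, \epsilon) \le C' \epsilon^{\eta - \mu_b}(1 + |\log \epsilon|)^k \to 0$ as $\epsilon \downarrow 0$ whenever $\eta > \mu_b$, which is precisely the stated hypothesis. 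The main obstacle is the borderline regime (b): it forces the inductive claim to carry a polynomial-in-$\log$ prefactor rather than a pure power of $\beta$, and the exponent $k$ must be tracked carefully through the iterates.
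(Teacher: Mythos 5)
Your proof of part (i) is essentially the paper's argument: both reduce to Fubini's theorem after parameterizing the ordered simplex by $d-1$ of its coordinates. Your choice to treat $y_1$ as the dependent variable, rather than the paper's parameterization via $(x_1,\dots,x_{d-1})$ followed by a change of variables eliminating $y_{d-1}$, is immaterial, since by symmetry the pushforward of Lebesgue measure onto the hyperplane $\{\sum y_k = \alpha\}$ does not depend on which coordinate is dropped.

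Part (ii) is correct but takes a genuinely different route. The paper fixes $\eta$ and $c$ and argues by induction on $d$ via L'Hopital's rule applied to the ratio $Q_b\bigl(1,\epsilon/(1-c\epsilon)\bigr)/\epsilon^{-\eta}$, using the explicit derivative $\partial_\beta Q_b(1,\beta) = -\beta^{b_d-1}(1-\beta)^{\bar b'_1-1}Q_{b'}\bigl(1,\beta/(1-\beta)\bigr)$ obtained from part (i) and homogeneity. You instead prove the quantitative pointwise bound $Q_b(1,\beta) \le C\,\beta^{-\mu_b}(1+|\log\beta|)^k$ for small $\beta$, with $\mu_b = \max\{0,-\bar b_2,\dots,-\bar b_d\}$, by induction on $d$. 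The structural identity $\mu_b = \max\{0,\mu_{b'}-b_d\}$ is verified correctly using $\bar b'_k = \bar b_k - b_d$ and $\bar b_d = b_d$, the split of the integration range at a fixed $\beta_0$ (to bring in the inductive hypothesis only where it applies) is sound, and the three sign regimes of $b_d - \mu_{b'}$ are handled properly, with the borderline case $b_d = \mu_{b'}$ absorbed into the polylogarithmic prefactor rather than causing a breakdown. The homogeneity step at the end to pass from $Q_b(1,\cdot)$ to $Q_b(1-c\epsilon,\epsilon)$ and the final limit under $\eta > \mu_b$ are all fine. What your route buys is an explicit rate of blowup for $Q_b(1,\beta)$ as $\beta \downarrow 0$ that the paper's L'Hopital argument leaves implicit, and it avoids having to justify the applicability of L'Hopital (both sides diverging, the derivative formula) at each inductive step; the paper's argument is shorter on the page but less transparent about the marginal cases where some $\bar b_k$ vanishes.
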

			\begin{proof} By our convention regarding integrals over $\nabla^{d-1}(\alpha,\beta)$ we have that 
				\begin{equation} \label{eqn:Q_b_expanded}
					Q_b(\alpha,\beta) = \int_{T_\alpha^{-1}(\nabla^{d-1}(\alpha,\beta))} (\alpha-\sum_{k=1}^{d-1}x_k)^{b_d-1}\prod_{k=1}^{d-1}x_k^{b_k-1}dx_1\dots dx_{d-1},
				\end{equation}
				where $T_\alpha^{-1}(\nabla^{d-1}(\alpha,\beta)) = \{x \in \R^{d-1}: x_1 \geq \dots \geq x_{d-1} \geq \alpha-\sum_{k=1}^{d-1}x_k \geq \beta\}$. Now we make the change of variables $y_k = x_{k}$ for $k=1,\dots,d-2$ and $y_d = \alpha-\sum_{k=1}^{d-1}x_k$. Note that $y_{d-1}$ is not defined, while $y_d$ is; this parametrization is convenient as will become apparent in the following equations. We then obtain
				\[Q_b(\alpha,\beta) = \int_{E} y_d^{b_d-1}(\alpha -y_d - \sum_{k=1}^{d-2}y_k)^{b_{d-1}-1}\prod_{k=1}^{d-2}y_k^{b_k-1}dy_1\dots dy_{d-2}dy_d,\]
				where $E = \{y_1 \geq \dots \geq y_{d-2} \geq \alpha -y_d -\sum_{k=1}^{d-2}y_k \geq y_d \geq \beta\} \subset\R^{d-1}$. Using Fubini we integrate over the last component $y_d$ to obtain
				\[Q_b(\alpha,\beta)\int_{\beta}^{\alpha/d}y_d^{b_d-1}\(\int_{\tilde E} (\alpha -y_d - \sum_{k=1}^{d-2}y_k)^{b_{d-1}-1}\prod_{k=1}^{d-2}y_k^{b_k-1}dy_1\dots dy_{d-2}\)dy_d,\]
				where $\tilde E = \{y_1 \geq \dots \geq  y_{d-2} \geq \alpha - y_d - \sum_{k=1}^{d-2}y_k \geq y_d\} \subset \R^{d-2}$. We now recognize from \eqref{eqn:Q_b_expanded} that the inner integral is precisely $Q_{b'}(\alpha-y_d,y_d)$, which proves \eqref{eqn:Q_b_iterative}.
				
				Now we prove \ref{item:Q_b_lim}. Fix $\eta, c$ as in the statement of the lemma and in what follows we always take $\epsilon$ small enough so that $1-c\epsilon > 0$. We prove the claim by induction on $d$. First take $d=2$ and note that $Q_{b}(1,\frac{\epsilon}{1-c\epsilon}) = \int_{\frac{\epsilon}{1-c\epsilon}}^{1/2} y^{b_2-1}(1-y)^{b_1-1}dy.$ If $b_2 > 0$ then we have that $Q_{b}(1,0) < \infty$ so we can directly send $\epsilon \downarrow 0$ to obtain the result. If not then a calculation involving L'Hopital's rule yields  
				\[\lim_{\epsilon \downarrow 0} \frac{Q_b(1,\frac{\epsilon}{1-c\epsilon})}{\epsilon^{-\eta}} = \lim_{\epsilon\downarrow 0} \frac{\epsilon^{\eta+b_2}}{\eta}\frac{(1-(c+1)\epsilon)^{b_1-1}}{(1-c\epsilon)^{b_1+b_2}} = 0,\]
				which proves the base case. 
				
				Next we assume the claim has been proven for $d-1$ and aim to prove it for $d$. If the vector $b$  satisfies $\bar b_k > 0$ for every $k \geq 2$ then we see by Lemma~\ref{lem:Q_finite} that $Q_{b}(1,0) < \infty$ so that directly sending $\epsilon \downarrow 0$ proves the claim. If $b$ does not satisfy this condition then we need to apply L'Hopital's rule to evaluate the limit. Using \eqref{eqn:Q_b_iterative} and the homogeneity property \eqref{eqn:homogeneity_property} we obtain that
				\[\frac{\partial}{\partial \beta}Q_b(1,\beta) = -\beta^{b_d-1}Q_{b'}(1-\beta,\beta) = -\beta^{b_d-1}(1-\beta)^{\bar b'_1-1}Q_{b'}\(1,\frac{\beta}{1-\beta}\),\]
				where $\bar b'_1 = \sum_{k=1}^{d-1}b_k$. Now by applying L'Hopital's rule, the chain rule and simplifying we obtain
				\[\lim_{\epsilon \downarrow 0} \frac{Q_b(1,\frac{\epsilon}{1-c\epsilon})}{\epsilon^{-\eta}} = \lim_{\epsilon\downarrow 0} \frac{\epsilon^{\eta+b_d}}{\eta}Q_{b'}\(1,\frac{\epsilon}{1-(c+1)\epsilon}\)\frac{(1-(c+1)\epsilon)^{\bar b_{1}'-1}}{(1-c\epsilon)^{\bar b_1}} = 0,\]
				where the last equality followed from the inductive hypothesis since $\eta + b_d > \max\{0,$ $-\bar b'_2,\dots,-\bar b'_{d-1}\}$.
			\end{proof}
			We close out this section by noting a final integrability property. Let $a \in \R^d$ be given satisfying Assumption~\ref{ass:bar_a} (with $\gamma = 0$) and fix $k \in \{2,\dots,d\}$ and $p > 0$. Note that the function $x \mapsto |\log (Cx)|^p x^\delta$ is bounded on $(0,1)$ for any $C,\delta > 0$. Hence by choosing $\delta > 0$ small enough so that $\tilde a := a -\delta e_k$ also satisfies Assumption~\ref{ass:bar_a} we see from Lemma~\ref{lem:Q_finite} that 
			\begin{equation} \label{eqn:log_bound}
				\int_{\nabla^{d-1}} |\log \bar y_k|^p \prod_{k=1}^{d}y_l^{a_l-1}dy \leq \int_{\nabla^{d-1}}|\log ((d-k+1)y_k)|^p y_k^\delta \prod_{k=1}^{d}y_l^{\tilde a_l-1}dy < \infty.
			\end{equation} 
			\section{Proofs of Results from Section~\ref{sec:hybrid_polynomial_models}} \label{app:hybrid_polynomial_models}
			In this section we prove some of the results from Section~\ref{sec:hybrid_polynomial_models} that were not included in the main body of the text. 
			We first set our sights on proving Lemma~\ref{lem:finite_Z}. To show this we first establish a separate lemma, which is also used elsewhere, and then obtain Lemma~\ref{lem:finite_Z} as a consequence.
			\begin{lem} \label{lem:Q_finite}
				For a vector $b \in \R^d$ we have that \[Q_b := \int_{\nabla^{d-1}} \prod_{k=1}^d y_k^{b_k-1}dy < \infty \quad  \iff \quad \bar b_k > 0 \quad \text{for} \quad k=2,\dots,d.\]
			\end{lem}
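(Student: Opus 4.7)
The key step is a change of variables that eliminates the sum constraint. Since $y_1 \geq 1/d > 0$ on $\nabla^{d-1}$, I would introduce new coordinates $h_k := y_k/y_1$ for $k=2,\ldots,d$ (with $h_1 := 1$). Under this substitution the ordering $y_1 \geq y_2 \geq \cdots \geq y_d \geq 0$ becomes simply $1 \geq h_2 \geq \cdots \geq h_d \geq 0$, while $\sum_{k=1}^d y_k = 1$ determines $y_1 = (1+h_2+\cdots+h_d)^{-1}$. A direct computation of the Jacobian of $(h_2,\ldots,h_d) \mapsto (y_1,\ldots,y_{d-1})$ gives $\lvert \det J\rvert = y_1^d$: from $\partial y_1/\partial h_\ell = -y_1^2$ and $\partial y_k/\partial h_\ell = -y_1^2 h_k + y_1\,\delta_{k\ell}$ for $k \in \{2,\ldots,d-1\}$ and $\ell \in \{2,\ldots,d\}$, the row operation of subtracting $h_k$ times row~$1$ from row~$k$ turns row~$k$ into $y_1\, e_{k-1}^\top$, and expansion along the last column (whose only non-vanishing entry is $-y_1^2$ in row~$1$) leaves a diagonal $y_1 I_{d-2}$ block.

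Combining this Jacobian with the transformed integrand $\prod_{k=1}^d y_k^{b_k-1} = y_1^{\bar b_1 - d}\prod_{k=2}^d h_k^{b_k-1}$ produces
\[
Q_b = \int_{\{1 \geq h_2 \geq \cdots \geq h_d \geq 0\}} (1+h_2+\cdots+h_d)^{-\bar b_1} \prod_{k=2}^d h_k^{b_k-1}\, dh_2 \cdots dh_d.
\]
Because $h_2+\cdots+h_d$ takes values in $[0,d-1]$, the prefactor $(1+h_2+\cdots+h_d)^{-\bar b_1}$ lies in a bounded subset of $(0,\infty)$, so $Q_b < \infty$ if and only if $I := \int_{\{1\geq h_2 \geq \cdots \geq h_d \geq 0\}}\prod_{k=2}^d h_k^{b_k-1}\, dh_2\cdots dh_d$ is finite.

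I would then evaluate $I$ iteratively, integrating from the innermost variable $h_d$ outward, with the inductive invariant that after integrating out $h_d, h_{d-1}, \ldots, h_{k+1}$ under the hypothesis $\bar b_l > 0$ for all $l \in \{k+1,\ldots,d\}$ the expression equals $\bigl(\prod_{l=k+1}^d \bar b_l\bigr)^{-1} h_k^{\bar b_{k+1}}\prod_{j=2}^{k-1}h_j^{b_j-1}$. Multiplying by the remaining factor $h_k^{b_k-1}$ turns the exponent of $h_k$ into $\bar b_k - 1$, and the integral over $h_k \in [0,h_{k-1}]$ (with convention $h_1 := 1$) converges precisely when $\bar b_k > 0$, producing the next invariant. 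Iterating down to $k=2$ gives $I = \prod_{k=2}^d \bar b_k^{-1}$ whenever all tail sums are positive. Conversely, if $\bar b_{k^\dagger}\leq 0$ for some $k^\dagger \geq 2$ (picking the largest such), the preceding integrations succeed and leave $\int_0^{h_{k^\dagger-1}} h_{k^\dagger}^{\bar b_{k^\dagger}-1}\, dh_{k^\dagger}$, which diverges at $0$; by Tonelli $I = \infty$ and hence $Q_b = \infty$. The only delicate step is the Jacobian calculation, which the row-operation trick reduces to inspection; everything else is a direct iterated-integral computation.
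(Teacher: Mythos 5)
Your argument is correct, and it genuinely differs from the paper's. The paper substitutes consecutive log-ratios $z_k = \log y_{k-1} - \log y_k$ ($k=2,\dots,d$), which maps $\nabla^{d-1}$ onto the open orthant $\R_+^{d-1}$ with Jacobian $\prod_k y_k^{-1}$; after writing $\prod_k y_k^{b_k-1} = y_1^{\bar b_1}\prod_{k=2}^d (y_{k-1}/y_k)^{-\bar b_k}\prod_k y_k^{-1}$, it discards the bounded factor $y_1^{\bar b_1}$ and obtains $Q_b$ as a product of independent one-dimensional integrals $\int_0^\infty e^{-\bar b_k z}\,dz$, so the decoupling makes the criterion $\bar b_k > 0$ read off immediately. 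Your substitution $h_k = y_k/y_1$ instead maps to the nested domain $\{1 \ge h_2 \ge \cdots \ge h_d \ge 0\}$, with the Jacobian $y_1^d$ computed cleanly by your row-reduction, and handles the bounded prefactor $(1+\sum h_k)^{-\bar b_1}$ in the same spirit; but the variables remain coupled through the ordering, so the finiteness criterion emerges from the iterated integral $\int_0^{h_{k-1}} h_k^{\bar b_k - 1}\,dh_k$, using the telescoping of the tail sums and Tonelli for the divergent direction. Both arguments are short and elementary and establish the same result (your route even gives the explicit value $I = \prod_{k=2}^d \bar b_k^{-1}$ when finite); the paper's is slightly slicker because the log-ratio coordinates fully separate the variables, whereas yours trades that for a more transparent Jacobian and avoids the algebraic rewriting of the integrand that the paper relies on. One cosmetic note: in stating your inductive invariant, the factor $h_k^{b_k-1}$ should be understood as not yet absorbed into the inner integrations, which your next sentence (``multiplying by the remaining factor $h_k^{b_k-1}$'') makes clear, but it would be cleaner to state the invariant as the value of $\int\cdots\int \prod_{j=k+1}^d h_j^{b_j-1}\,dh_{k+1}\cdots dh_d$ alone.
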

			\begin{proof} Note that 
				\[\prod_{k=1}^d y_k^{b_k-1}  = y_1^{\bar b_1} \prod_{k=2}^{d} \(\frac{y_{k-1}}{y_{k}}\)^{-\bar b_k}\(\prod_{k=1}^d y_k^{-1}\).\]
				Since $1/d \leq y_1 \leq 1$ we have that the size or sign of $\bar b_1$ will not affect whether or not $Q_b$ is finite. Hence, we assume without loss of generality that $\bar b_1 = 0$ or, equivalently, that $b_1 = -\sum_{k=2}^d b_k$. Next consider the change of variables $z_k = \log (y_{k-1}) - \log(y_k)$ for $k=2,\dots,d$. This transformation maps the ordered simplex onto $\R_+^{d-1}$ and its Jacobian is determined by $dz = \prod_{k=1}^{d} y_k^{-1}dy$. Thus we obtain
				\[Q_b = \int_{\R^{d-1}_{+}} \exp\(-\sum_{k=2}^d\bar b_kz_k\)dz = \prod_{k=2}^d \int_0^\infty e^{-\bar b_kz}dz.\]
				This expression is finite if and only if $\bar b_k > 0$ for every $k =2,\dots,d$ which completes the proof. 
			\end{proof}
			\begin{proof}[Proof of Lemma~\ref{lem:finite_Z}]
				Note that by a change of variables we equivalently have the representation
				\[Z = \sum_{\tau \in \mathcal{T}_d} \int_{\nabla^{d-1}} \prod_{k=1}^d y_k^{a_k + \gamma_{\tau{(k)}}-1}dy.\]
				By Lemma~\ref{lem:Q_finite} it follows that $Z$ is finite if and only if $\bar a_k + \bar \gamma_{\tau(k)} > 0$ for $k=2,\dots,d$ and every $\tau \in \mathcal{T}_d$. Since we have the inequality $\bar a_k + \bar\gamma_{\tau(k)} \geq \bar a_k + \bar \gamma_{(k)}$ for every $k$ and every $\tau \in \mathcal{T}_d$ with equality achieved for some $\tau \in \mathcal{T}_d$ it follows that $Z < \infty$ if and only if Assumption~\ref{ass:bar_a} holds. This completes the proof.
			\end{proof}
			Next we establish the integration by parts formula Lemma~\ref{thm:IBP}. 
			\begin{proof}[Proof of Lemma~\ref{thm:IBP}]
				To carry out the integration by parts we will make the transformation from $ \Delta^{d-1} \subset \R^d$ to its associated region $E \subset \R^{d-1}$ given by
				\[E = \{x \in [0,\infty)^{d-1}: x_1 + \dots + x_{d-1} \leq 1\}.\] We also define the projection map $\pi:\Delta^{d-1} \to E$ and its inverse $\pi^{-1}:E \to \Delta^{d-1}$ via 
				\[\pi(x) = (x_1,\dots,x_{d-1}), \qquad \pi^{-1}(z) = (z_1,\dots,z_{d-1},1-z^\top {\bf 1}_{d-1}) \]
				respectively. To avoid confusion we will generically denote elements of $\Delta^{d-1}$ by $x$ and elements of $E$ by $z$.
				
				Now recall $c,\xi,p,v$ as in the statement of the lemma and set $\eta(x) = c(x)\xi(x)p(x)$ for $x \in \Delta^{d-1}$. Next for $z \in E$ define $\psi(z) = v(\pi^{-1}(z))$. Then for any $z \in E$ we have
				\[\partial_i \psi(z) = \partial_i v(\pi^{-1}(z)) - \partial_d v(\pi^{-1}(z)); \quad i=1,\dots,d-1.\]
				Consequently, since $\eta(x)^\top {\bf 1}_d = 0$ for every $x \in \Delta^{d-1}$, we see that
				\begin{align*} \nabla v(x)^\top \eta(x) & = \sum_{i=1}^{d-1} \eta_i(x)\partial_i v(x) + \eta_d(x) \partial_d v(x) \\
					& = \sum_{i=1}^{d-1}\eta_i(x)(\partial_i v(x) - \partial_d v(x)) = \sum_{i=1}^{d-1}\eta_i(x)\partial_i\psi(\pi(x)).
				\end{align*}
				Setting $\tilde \eta_i(z) = (\eta_i \circ \pi^{-1})(z)$ for $i=1,\dots,d-1$ and $z \in E$ we have that
				\begin{equation} \label{eqn:projection_int}
					\int_{U} \nabla v^\top c\xi p(x)\, dx = \int_{\pi(U)} \nabla \psi(z)^\top \tilde \eta(z)\, dz
				\end{equation}
				for any $U \subseteq \Delta^{d-1}$.

				Now for $\epsilon > 0$ define $U_\epsilon = \{x \in \Delta^{d-1}: x_{(d)} > \epsilon\}$. The projected region is 
				\[E_\epsilon := \pi(U_\epsilon) = \{z \in E: \min\{z_1,\dots,z_{d-1},1-z^\top{\bf 1}_{d-1}\} > \epsilon\}.\] Note that $\partial E_\epsilon = \{z \in E: = \min\{z_1,\dots,z_{d-1},1-z^\top{\bf 1}_{d-1}\} = \epsilon\}$ and that for $\mathcal{H}^{d-2}$-a.e $z \in \partial E_\epsilon$ the outward unit normal is given by \begin{equation} \label{eqn:outward_pointing_normal}
					\nu(z) = \begin{cases}  -e_i, & \text{if } z_i = \min\{z_1,\dots,z_{d-1},1-z^\top{\bf 1}_{d-1}\}, \\
						\frac{1}{\sqrt{d-1}}{\bf 1}_{d-1},& \text{if } 1-z^\top{\bf 1}_{d-1} = \min\{z_1,\dots,z_{d-1},1-z^\top{\bf 1}_{d-1}\}
					\end{cases}
				\end{equation} where $\mathcal{H}^{d-2}$ is the $d-2$ dimensional Hausdorff measure. In particular the unit normal vector does not depend on $\epsilon$. Since the functions $\psi$ and $\tilde \eta$ are Lipschitz continuous on $\bar E_\epsilon$ we can apply the standard integration by parts formula (see \cite[Corollary~9.66]{Leoni2017first}) for each $\epsilon > 0$ to get
				\begin{equation} \label{eqn:IBP_with_boundary}
					\int_{E_\epsilon} \nabla \psi(z)^\top \tilde \eta(z)\, dz = - \frac{1}{2}\int_{E_\epsilon} \psi(z)\diver \tilde \eta(z)\, dz + \int_{\partial E_\epsilon} \psi(z) \nu(z)^\top \tilde \eta(z)d\mathcal{H}^{d-2}(z).
				\end{equation}
				Now for $z \in \partial E_\epsilon$ write $x$ for $\pi^{-1}(z)$. Then using the form of the outward pointing normal \eqref{eqn:outward_pointing_normal} and the explicit expression for $c$ and $p$ given by \eqref{eqn:c_def} and \eqref{eqn:p_def} respectively we obtain for $\mathcal{H}^{d-2}$-a.e $z \in \partial E_\epsilon$, 
				\begin{align*} |\psi(z) \nu(z)^\top \tilde \eta(z)|  & 
					=\sigma^2\epsilon p(x)|v(x)|\left|x^\top \xi(x) - \xi_{{\bf n}_d(x)}(x)\right|\left(1_{\{x_d \ne x_{(d)}\}} + \frac{1_{\{x_d = x_{(d)}\}}}{\sqrt{d-1}}\right)  \\
					&  \leq C \epsilon^{a_d + \gamma_{(d)}}\prod_{k=1}^{d-1}x_{(k)}^{a_k + \gamma_{{\bf{n}}_k(x)}-1},
				\end{align*}
				where $C = \sigma^2 Z^{-1}\sup_{x \in \Delta^{d-1}} \{|v(x)||x^\top \xi(x) - \xi_{{\bf n}_d(x)}(x)|\}$. Next let $j = {\bf{n}}_d(\gamma)$ and set $\gamma' = (\gamma_1,\dots,\gamma_{j-1},\gamma_{j+1},\dots,\gamma_d) \in \R^{d-1}$.
				Then by noting that $\pi^{-1}(\partial E_\epsilon) = \{x \in \Delta^{d-1}: x_{(d)} = \epsilon\}$ and using the notation of Appendix~\ref{app:integral} we see that 
				\begin{align}
					\left|\int_{\partial E_\epsilon} \right. & \left. \psi(z) \nu(z)^\top \tilde \eta(z)d\mathcal{H}^{d-2}(z)\right|
					\nonumber \\
					& 	\leq C \epsilon^{a_d + \gamma_{(d)}}\int_{\{x\in\Delta^{d-1}:\,  x_{(d)} = \epsilon\}}  \prod_{k=1}^{d-1}x_{(k)}^{a_k + \gamma_{{\bf{n}}_k(x)}-1} d(\mathcal{H}^{d-2} \circ \pi^{-1})(x) \nonumber \\
					& = C \epsilon^{a_d + \gamma_{(d)}}\sum_{\tau \in \mathcal{T}_{d-1}}\int_{\nabla_{d-2}(1-\epsilon,\epsilon)}  \prod_{k=1}^{d-1} y_k^{a_k + \gamma'_{\tau(k)}-1}dy    \nonumber
					= C  \epsilon^{a_d + \gamma_{(d)}} \sum_{\tau \in \mathcal{T}_{d-1}} Q_{a' + \gamma'_{\tau}}(1-\epsilon,\epsilon) , \label{eqn:IBP_estimate}\\
					\tiny\hphantom{m}
				\end{align}
				where $a' = (a_1,\dots,a_{d-1})$. Since $\bar a_k + \bar \gamma_{\tau(k)} \geq \bar a_k + \bar \gamma_{(k)} > 0$ for every $k=2,\dots,d$ and $\tau \in \mathcal{T}_d$ we have that \[a_d + \gamma_{(d)} > \max\{0,-\bar a'_2 - \bar \gamma'_{\tau'(2)},\dots,-\bar a'_{d-1} - \bar \gamma'_{\tau'(d-1)}\}\]
				for every $\tau' \in \mathcal{T}_{d-1}$, where $\bar a'_k + \bar \gamma'_{(k)} = \sum_{l=k}^{d-1} a_l + \sum_{l=k}^{d-1}\gamma_{(l)}$ for $k=2,\dots,d-1$. Hence by Lemma~\ref{lem:Q_lim}\ref{item:Q_b_lim} we have that the right hand side of \eqref{eqn:IBP_estimate} tends to zero as $\epsilon \downarrow 0$. Thus, when sending $\epsilon \downarrow 0$ in \eqref{eqn:IBP_with_boundary} we have that the boundary integral vanishes. Moreover a direct calculation, again using the explicit form of $c$ and $p$, shows that for every $z \in E$,
				$|\diver \tilde \eta(z)| \leq 3d\tilde Cp(\pi^{-1}(z))$, where $\tilde C = \sup_{x \in \Delta^{d-1}}\max_{i=1,\dots,d} (|\partial_i \xi_i(x)| + |\xi_i(x)|)$.
				Hence, by sending $\epsilon \to 0$ in \eqref{eqn:IBP_with_boundary} we obtain
				by dominated convergence that
				\begin{equation} \label{eqn:IBP_proj}
					\int_E \nabla \psi(z)^\top \eta(z)\, dz = - \int_{E} \psi(z)\, \diver \tilde \eta(z)\, dz.
				\end{equation} 
				Finally to obtain \eqref{eqn:IBP} we manipulate $\diver \tilde \eta$. To this end note that for every $z \in E$,
				\begin{equation} \label{eqn:proj}
					\begin{split}
						\diver\tilde \eta(z) = \sum_{i=1}^{d-1}\partial_i\tilde \eta_i(z) & = \sum_{i=1}^{d-1} \partial_i \eta_i(\pi^{-1}(z)) - \partial_d \sum_{i=1}^{d-1} \eta_i(\pi^{-1}(z))  \\
						& = \sum_{i=1}^{d} \partial_i\eta_i(\pi^{-1}(z)) = \diver\eta(\pi^{-1}  z),
					\end{split}
				\end{equation}
				where in the second to last equality we used the fact that $\eta(x)^\top {\bf 1}_d = 0$ for every $x \in \Delta^{d-1}$.
				Consequently, from \eqref{eqn:projection_int} and \eqref{eqn:proj} we see that \eqref{eqn:IBP_proj} is exactly the same as \eqref{eqn:IBP} once we change coordinates.
				This proves the general integration by parts formula. The final claim in the statement of the theorem now follows from the fact that for a function $u \in C^2(\Delta^{d-1})$ we have that $pLu = \frac{1}{2}\diver(c \nabla u p)$ almost everywhere.
			\end{proof}
			We now prove the existence of hybrid Jacobi models.
			\begin{proof}[Proof of Theorem~\ref{thm:process_existence}]
				The existence of a Hunt diffusion $X$ corresponding to the Dirichlet form is guaranteed by \cite[Theorem~7.2.2]{Fukushima1994Dirichlet}. Now 
				In particular, from \cite[Equations~(7.2.22)]{Fukushima1994Dirichlet} we have that the transition kernel $p_t(\cdot, dy)$ of the Hunt diffusion $X$ satisfies 
				\[ \label{eqn:quasi_cont_version}
				p_tu :=\int_{\Delta^{d-1}}p_t(\cdot,dy)u(y) \text{ is a quasi-continuous version of } T_tu
				\]
				for every $u \in C(\Delta^{d-1})$. Here $(T_t)_{t \geq 0}$ are the strongly continuous semigroup operators on $L^2(\Delta^{d-1},m)$ corresponding to the generator of the Dirichlet form $(L,D(L))$. From the Kolmogorov Backward equation (see e.g.\ \cite[Theorem~4.7]{eberle2009markov}) we have the identity
				\begin{equation} \label{eqn:Kolmogorov_backward}
					T_tu - u = \int_0^t (T_sLu)ds, \quad u \in D(L),
				\end{equation}
				where the integral is a Bochner integral and the equality is understood to hold on $L^2(\Delta^{d-1},m)$. Using the fact that $p_tu$ is a version of $T_tu$ we obtain from \eqref{eqn:Kolmogorov_backward} that 
				\begin{equation} \label{eqn:transition_function_relation}
					p_tu(x)  - u(x) - \int_0^t p_sLu(x)ds= 0, \quad m\text{-a.e.}\ x\in \Delta^{d-1}.
				\end{equation}  But by quasi-continuity of $p_tu$ we have for that for every $\epsilon > 0$ there exists an open set $G_\epsilon(u) \subset \Delta^{d-1}$ (depending on $u$) with $\text{Cap}(G_\epsilon(u)) < \epsilon$ and such that $p_tu$ is continuous on $\Delta^{d-1}\setminus G_\epsilon(u)$. Set $\tilde N(u) = \cap_{\epsilon > 0} G_\epsilon(u)$. By continuity of $p_tu$  we see that \eqref{eqn:transition_function_relation} holds for all $x \in \Delta^{d-1}\setminus \tilde N(u)$ and that $\text{Cap}(\tilde N(u)) = 0$. By \cite[Theorem~4.2.1(ii)]{Fukushima1994Dirichlet} it follows that $\tilde N(u)$ is an exceptional set, since it has zero capacity, and so, by \cite[Theorem~4.1.1]{Fukushima1994Dirichlet} there exists a properly exceptional set $N(u) \subset \Delta^{d-1}$ with $\tilde N(u) \subseteq N(u)$. By definition of properly exceptional \ref{item:existence_i} is satisfied with this choice of set $N(u)$.
				
				Now we establish that $M^u(t):= u(X(t)) - u(X(0)) - \int_0^t Lu(X(s))ds$ is a $\P_x$-martingale for every $x \in \Delta^{d-1}\setminus N(u)$. Let such an $x \in \Delta^{d-1}\setminus N(u)$ be given. Using the Markov property and definition of transition function we have for every $0 \leq s \leq t$ that 
				\begin{equation} \label{eqn:mart_problem_proof}
					\begin{split}
						\E[M^u(t) - M^u(s)|\Fcal(s)] =	\E_x& \left[u(X(t)) - u(X(s)) - \int_s^t Lu(X(r))dr|\F(s)\right] \\
						&   = p_{t-s}u(X(s)) - u(X(s)) - \int_0^{t-s}p_rLu(X(s)))dr. \end{split} 
					\qquad \P_x\text{-a.s.}
				\end{equation} 
				But by property \ref{item:existence_i}, $\P_x(X(s) \in N(u)\text{ for some } s > 0) = 0$ so, by \eqref{eqn:transition_function_relation}, we have that \eqref{eqn:mart_problem_proof} is zero $\P_x$-a.s. This establishes the martingale property.
				
				Now to finish the proof we need to obtain a single properly exceptional set such that  \ref{item:existence_i} and \ref{item:existence_ii} hold. Consider the countable dense set $D_0 \subset D(L)$ consisting of all polynomials with rational coefficients. Since a countable union of properly exceptional sets is again properly exceptional we obtain a single properly exceptional set $N = \bigcup_{u \in D_0} N(u)$ such that \ref{item:existence_ii} holds and $M^u$ is a $\P_x$-martingale for every $x \in \Delta^{d-1}\setminus N$ and every $u \in D_0$. Then by approximating any $u \in D(L)$ by uniformly bounded $u_n \in D_0$ such that $u_n \to u $ and $Lu_n \to Lu$ pointwise we readily obtain by bounded convergence that \ref{item:existence_ii} holds for every $u \in D(L)$ completing the proof. 
			\end{proof}
			Next we prove the various properties of hybrid Jacobi processes. Recall that for an index set $I \subseteq \{1,\dots,d\}$, as in \eqref{eqn:Lambda_def}, we denote by $\Lambda_I$ the function given by $\Lambda_I(x) = \sum_{i \in I} x_i$ for $x \in \R^d$. We now prove the first two items of Proposition~\ref{prop:Lebesgue_collision}.
			\begin{proof}[Proof of Proposition~\ref{prop:Lebesgue_collision}\ref{item:boundary_zero} and \ref{item:collision_zero}]
				First, we will show that $L_{\Lambda_I(X)} \equiv 0$ for every index set $I$. By the occupation density formula and Lemma~\ref{lem:Lambda} we have for every $\epsilon > 0$ that 
				\begin{align*} 
					\frac{1}{\epsilon}\int_0^\epsilon &L^a_{\Lambda_I(X)}(T)da  = \int_0^T\frac{1_{\{\Lambda_I(X(t))\in (0,\epsilon)\}}}{\epsilon}d[\Lambda_I(X),\Lambda_I(X)](t)  \\
					& = \sigma^2 \int_0^T \frac{1_{\{\Lambda_I(X(t))\in (0,\epsilon)\}}}{\epsilon}\Lambda_I(X(t))(1-\Lambda_I(X(t)))dt \leq \sigma^2 \int_0^T 1_{\{\Lambda_I(X(t))\in (0,\epsilon)\}}dt.
				\end{align*} 
				Sending $\epsilon \downarrow 0$ and using the right continuity of $a \mapsto L^a_{\Lambda_I(X)}$ on the left hand side and the Lebesgue dominated convergence theorem on the right hand side yields $L_{\Lambda_I(X)}(T) = 0$, $\P_\mu$-a.s.\ for every $T \geq 0$.
				
				Next note that for any nonnegative semimartingale $Y$ satisfying $L_Y(T) = 0$ we have from Tanaka's formula that 
				\begin{align*}
					Y(T)  = |Y(T)| & = Y(0) + \int_0^T \text{sign}(Y(t))dY(t) + L_Y(T)  \\
					& = Y(0) + \int_0^T 1_{\{Y(t) > 0\}}dY(t) - \int_0^T 1_{\{Y(t) = 0\}} dY(t),
				\end{align*}
				Rearranging gives $1_{\{Y(t) = 0\}}dY(t) = 0$. Applying this identity with $Y = \Lambda_I(X)$ and expanding out the stochastic integral using \eqref{eqn:X_polynomial_dynamics} we obtain
				\begin{equation} \label{eqn:Lambda_Lebesgue_time}\int_0^T\(\sum_{i \in I}a_{{\bf r}_i(t)} + \gamma_i \)1_{\{\Lambda_I(X(t)) = 0\}}dt = 0.
				\end{equation}
				Now we claim that \begin{equation} \label{eqn:Lambda_induct}
					\int_0^T 1_{\{\Lambda_I(X(t)) = 0\}}dt = 0
				\end{equation} for every index set $I$. We prove this claim by backward induction on $|I|$, the size of $I$. The base case $|I|=d$ is trivial since in that case $\Lambda_I(X) = \sum_{i=1}^d X_i \equiv 1$.  
				Now let $l \in \{2,\dots,d\}$ be given and assume that the result holds for index sets of size $l$. We will show that the result holds for index sets $I$ of size $l-1$. For any such index set $I$ note that as a consequence of the inductive hypothesis we have for a.e.\ $t \geq 0$ that
				\begin{equation} \label{eqn:induction_tool}
					\{\Lambda_I(X(t))  = 0\} \subseteq \{{\bf{n}}_k(t) \in I \text{ for every } k =d-l+2,\dots,d\}.
				\end{equation}
				Indeed, fixing time $t$, if indices in $I$ do not occupy the smallest $l-1$ ranks when $\Lambda_I(X(t)) = 0$ then there would be another index $j$ (depending on $t$) such that $X_{j}(t) = 0$. But then we would have that $\Lambda_J(X(t)) = 0$ where $J = I \cup \{j\}$. The inductive hypothesis ensures that this can only happen at a Lebesgue null-set of time points.
				
				Hence, using  \eqref{eqn:Lambda_Lebesgue_time} and \eqref{eqn:induction_tool} we obtain that 
				$(\sum_{k =d-l+2}^{d}a_k + \sum_{i \in I}\gamma_i)\int_0^T 1_{\{\Lambda_I(X(t)) = 0\}}dt  = 0.$
				By Assumption~\ref{ass:bar_a} we see that $ \sum_{k =d-l+2}^{d}a_k + \sum_{i \in I}\gamma_i \geq \bar a_{d-l+2} + \bar \gamma_{(d-l+2)} > 0$ so we must have that $\int_0^T 1_{\{\Lambda_I(X(t) = 0)\}}dt = 0$ proving \eqref{eqn:Lambda_induct}. Since this holds for every $T \geq 0$ we establish that $\int_0^\infty 1_{\{\Lambda_I(X(t) = 0)\}}dt = 0$. The statement of the lemma is precisely the special case when $|I| = 1$ so this completes the proof of the first item.
				
				The proof of the second item is very similar to the proof of \cite[Lemma~4.1]{itkin2021class}. Fix indices $i \ne j$. The occupation density formula for the semimartingale $X_i - X_j$ then yields
				\[\int_0^T f(X_i(t) - X_j(t))(X_i(t) + X_j(t) - (X_i(t) - X_j(t))^2)dt = \int_{\R} f(a)L^a_{X_i - X_j}(T)da\]
				for every bounded measurable function $f$, where we used \eqref{eqn:qv_formula} with $u=v = e_i - e_j$ to compute the quadratic variation of $X_i - X_j$. Taking the function $f(a) = 1_{\{a=0\}}$ we obtain 
				\begin{equation} \label{eqn:occ_dens}
					\int_0^T 1_{\{X_i(t) = X_j(t)\}}(X_i(t) + X_j(t) - (X_i(t) - X_j(t))^2)dt = 0.
				\end{equation}
				Note that $X_i(t) + X_j(t) - (X_i(t) - X_j(t))^2$ is nonnegative and is equal to $0$ if and only if $X_i(t) = X_j(t) = 0$ or $X_i(t) = 1$ or $X_j(t) = 1$. However, by part \ref{item:boundary_zero} this only happens at a Lebesgue null-set of time points. Hence, from \eqref{eqn:occ_dens} we see that $1_{\{X_i(t) = X_j(t)\}} = 0$, $\P_\mu$-a.s.\ for almost every $t \geq 0$. Since $i$ and $j$ were arbitrary this completes the proof.
			\end{proof}
			Next we investigate the question of triple collisions and prove Proposition~\ref{prop:Lebesgue_collision}\ref{item:triple_zero}. To assist us in this analysis we will compare the process $X$ to a different, but related, process for which triple collision properties are known. In \cite{itkin2021class}, rank-based continuous semimartingales on the simplex were constructed using Dirichlet forms. However, rather than taking the closed simplex $\Delta^{d-1}$ as the state space, the processes constructed in that setting were on the open simplex with absorption at the boundary; that is the state space was the one point compactification of the open simplex $\Delta^{d-1}_{+} \cup \{\Theta\}$ where the point $\Theta$ acted as a cemetery state for the process. Moreover, the (pre-)Dirichlet form $\mathcal{E}$ as in \eqref{eqn:X_Dirichlet_form} with core $C_c^\infty(\Delta^{d-1}_+)$ rather than $\mathcal{D}$ is under the purview of \cite{itkin2021class}. It follows from \cite[Theorem~3.1]{itkin2021class} that the resulting Dirichlet form produces a process $X^{\Theta}$ with generator $L$ as in \eqref{eqn:generator_d}, but with a different domain $D^\Theta(L)$. The process $X^{\Theta}$ possesses the strong Feller property and for every $\nu \in \mathcal{P}(\Delta^{d-1}_+ \cup \{\Theta\})$  there exists a probability measure $\P_\nu^\Theta$ such that $\P_\nu^\Theta(X^{\Theta}(0) \in \Gamma) = \nu(\Gamma)$ for every Borel set $\Gamma \subset \Delta^{d-1}_+ \cup \{\Theta\}$ and $\P^\Theta_\nu$ solves the martingale problem for $(L,D^{\Theta}(L))$; i.e.\
			$
			u(X^{\Theta}(T)) - u(X^{\Theta}(0)) - \int_0^T Lu(X^\Theta(t))dt 
			$
			is a martingale under $\P_\nu^\Theta$ for every $u \in D^\Theta(L)$ where, by convention, we extend any function $v$ from the open simplex $\Delta^{d-1}_+$ to the one point compactification $\Delta^{d-1}_+ \cup \{\Theta\}$ via $v(\Theta) = 0$. 
			
			Next we will show that the processes $X$ and $X^\Theta$ coincide on the interior of the simplex. To this end for every $n \in \N$ define the open sets 
			\begin{equation}\label{eqn:E_n}
				E_n = \{x \in \Delta^{d-1}: x_{(d)} > 1/n\}.
			\end{equation} Following \cite[Section~4.6]{Ethier1986Markov}, we say that a process $Z$ solves the \emph{stopped martingale problem corresponding to $L$ on $E_n$} with initial law $\xi \in \mathcal{P}(\Delta^{d-1})$ if $Z(0) \sim \xi$ and 
			\[u(Z(T \land \tau_n)) - u(Z(0)) - \int_0^{T \land \tau_n}Lu(Z(t))dt\]
			is a martingale for every $u \in D(L)$ where $D(L)$ is defined in Section~\ref{sec:hybrid_construction}. Here
			\begin{equation} \label{eqn:tau_n} 
				\tau_n = \inf\{t \geq 0: Z(t) \not \in E_n \}.
			\end{equation}
			Note that the sets \[\{u:E_n \to \R \ \big | \  u = v|_{E_n} \text{ for some } v \in D(L)\}, \quad \{u:E_n \to \R \ \big| \ u = v|_{E_n} \text{ for some } v \in D^\Theta(L)\}\] are identical for every $n$.
			Hence both $X(\cdot \land \tau_n)$ and $X^{\Theta}(\cdot \land \tau_n)$ solve the stopped martingale problem for $L$ on $E_n$ when their initial laws coincide.
			\begin{lem} \label{lem:law_lemma}
				For any $\mu \in \mathcal{P}_0$ we have that $\P_\mu|_{ \mathcal{F}(\tau_n)} = \P_\mu^\Theta|_{\mathcal{F}(\tau_n)}$ for every $n \in \N$. That is, the laws of $X(\cdot \land \tau_n)$ and $X^{\Theta}(\cdot \land \tau_n)$ are the same when the processes share the same initial distribution. 
			\end{lem}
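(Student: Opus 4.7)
The approach is to invoke uniqueness of the stopped martingale problem on $E_n$. Both $X(\cdot \wedge \tau_n)$ under $\P_\mu$ and $X^\Theta(\cdot \wedge \tau_n)$ under $\P^\Theta_\mu$ solve the stopped martingale problem corresponding to $L$ on $E_n$ with initial law $\mu$, so uniqueness immediately yields the equality of their laws on $\Fcal(\tau_n)$.

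First I would note that by conditioning on the initial point it suffices to treat $\mu = \delta_x$ for an arbitrary $x \in \Delta^{d-1}\setminus(N \cup \partial\Delta^{d-1})$. If $x \notin E_n$ then $\tau_n = 0$ and there is nothing to prove, so I can restrict attention to $x \in E_n$. Second, I would record that the two martingale problems really do see the same data on $E_n$: the generator $L$ is the same operator in both constructions, and the two domains $D(L)$ and $D^\Theta(L)$ induce the same class of functions when restricted to $E_n$ (the paragraph preceding the lemma already makes this observation). Hence both $X(\cdot \wedge \tau_n)$ and $X^\Theta(\cdot\wedge \tau_n)$ are solutions of the same stopped martingale problem started from $\delta_x$.

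The core of the argument is then the uniqueness of this stopped problem. By \cite[Theorem~4.6.1]{Ethier1986Markov}, this reduces to well-posedness of the martingale problem associated with $L$ on the open set $E_n$ (which we can chart into an open subset of $\R^{d-1}$ by projecting onto the first $d-1$ coordinates). On $E_n$ every component satisfies $x_i \geq x_{(d)} > 1/n$, so the diffusion matrix $c(x)$ from \eqref{eqn:c_def}, when restricted to the tangent space of $\Delta^{d-1}$, is smooth and uniformly positive definite. The drift coefficient $b_i(x) = \tfrac{\sigma^2}{2}(\gamma_i + a_{\mathbf{r}_i(x)} - (\bar a_1 + \bar\gamma_1) x_i)$ is bounded and Borel measurable (piecewise constant across the hyperplanes $\{x_i = x_j\}$). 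The classical Stroock--Varadhan theorem then gives uniqueness in law for the driftless diffusion, and Girsanov's theorem transfers this to uniqueness in law under any bounded measurable drift perturbation, which is precisely our setting.

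The main obstacle, and where I would spend the most care, is handling the rank-based discontinuity of the drift: the classical uniqueness statements for martingale problems require either continuous coefficients or the Stroock--Varadhan measurable-drift extension. Since the diffusion coefficient is smooth and uniformly elliptic on $E_n$ and the drift is bounded and Borel, the measurable-drift extension (equivalently Girsanov from the well-posed Wright--Fisher problem) covers us cleanly. Once uniqueness of the stopped martingale problem on $E_n$ is in hand, the matching of the two laws on $\Fcal(\tau_n)$ follows immediately, completing the proof.
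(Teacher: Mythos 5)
Your proposal follows essentially the same route as the paper: note both processes solve the stopped martingale problem for $L$ on $E_n$, establish well-posedness of that stopped problem via Stroock--Varadhan uniqueness (bounded measurable drift, uniformly elliptic diffusion on the chart) and the Ethier--Kurtz localization theorem, and conclude. The one step you leave tacit is that the localization theorem is applied by first extending the coefficients to all of $\R^{d-1}$ so that the global martingale problem is well-posed; the paper makes this explicit via a cutoff function $\psi$, defining $\hat c = c\psi + I_{d-1}(1-\psi)$ and truncating the drift to $E_n$ before invoking \cite[Theorem~7.2.1]{Stroock1979Multi} and \cite[Theorem~5.6.1]{Ethier1986Markov}.
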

			\begin{proof}
				Fix $n$ and recall that we view $\Delta^{d-1}$ as a subset of $\R^{d-1}$ via the transformation $(x_1,\dots,x_{d-1}) \mapsto (x_1,\dots,x_{d-1},1-\sum_{i=1}^{d-1}x_i)$. Thus, with some abuse of notation, in this proof we will view the drift and volatility coefficients appearing in \eqref{eqn:generator_d} as defined on $\R^{d-1}$ and $\mathbb{S}^{d-1}_+$ respectively. 
				
				Since both $X(\cdot \land \tau_n)$ and $X^{\Theta}(\cdot \land \tau_n)$ solve the stopped martingale problem with the common initial condition the result will follow if we can show that the stopped martingale problem is well-posed. To this end let $\psi \in C_c^\infty(\R^{d-1})$ be such that $\psi = 1$ on $E_n$ and $\psi = 0$ on $\R^{d-1} \setminus E_{n+1}$. Extend $c$ to all of $\R^{d-1}$ by zero and define $\hat c:\R^{d-1} \to \mathbb{S}^{d-1}_+$ and $\hat b:\R^{d-1} \to \R^{d-1}$ via
				\begin{align*}
					\hat b_i(x) &= (\gamma_i + a_{{\bf r}_i(x)} - (\bar a_1 + \bar \gamma_1) x_i)1_{E_n}(x), \quad i=1,\dots,d,  \\
					\hat c(x) & = c(x)\psi(x) + I_{d-1}(1-\psi(x))
				\end{align*} 
				where $I_{d-1}$ is the $(d-1) \times (d-1)$ identity matrix. Define the corresponding generator 
				$\hat L = \frac{1}{2}\sum_{i,j=1}^d \hat c_{ij}\partial_{ij} + \sum_{i=1}^d \hat b_i \partial_i.$ Since $\hat c$ is uniformly elliptic and both $\hat b$ and $\hat c$ are bounded it follows by \cite[Theorem 7.2.1]{Stroock1979Multi} that the martingale problem corresponding to $\hat L$ is well-posed. By \cite[Theorem 5.6.1]{Ethier1986Markov} it then follows that the stopped martingale problem corresponding to $\hat L$ on $E_n$ is well-posed. But $(\hat L u)|_{E_n}= L(u|_{E_n})$ for every $u \in D(\hat L)$ which completes the proof. 
			\end{proof}
			\begin{proof}[Proof of Proposition~\ref{prop:Lebesgue_collision}\ref{item:triple_zero}]
				Next let $E_n$ be as in \eqref{eqn:E_n} and set $B_n = B \cap E_n$. Define the sequence of stopping times by setting $\sigma_n^1 = 0$ and 
				\begin{align*}
					\tau^m_n & = \inf\{t \geq \sigma^m_n: X(t) \not \in E_{n}\} & m \geq 1, \\
					\sigma^m_n & = \inf\{t \geq \tau^{m-1}_n: X(t) \in \bar E_{n-1}\} & m \geq 2.
				\end{align*}
				Note that $\tau_n^1 = \tau_n$ where $\tau_n$ was given by \eqref{eqn:tau_n} with $Z$ replaced by $X$. We have that 
				\begin{equation} \label{eqn:stopping_time_inequality}
					\P_\mu(X(t) \in B_{n-1} \text{ for some } t >0)  \leq \sum_{m=1}^\infty \P_\mu(X(t) \in B_{n-1} \text{ for some } t \in (\sigma_n^m,\tau_n^m)).
				\end{equation}  
				Applying iterated conditioning and the strong Markov property we obtain for each $m \geq 1$ that
				\begin{align*}
					\P_\mu(X(t) \in B_{n-1} & \text{ for some } t \in  (\sigma_n^m,\tau_n^m)) \\
					& = \E_\mu[\P_\mu(X(t) \in B_{n-1}\text{ for some } t \in (\sigma_n^m,\tau_n^m) \ |\F(\sigma_n^m))] \\
					& = \E_\mu[\P_{X_{\sigma_n^m}}(X(t) \in B_{n-1} \text{ for some } t \in (0,\tau_n))1_{\{\sigma_n^m< \infty\}}].
				\end{align*}
				But by Lemma~\ref{lem:law_lemma} we have that the laws of $X$ and $X^{\Theta}$ agree on the time interval $(0,\tau_n)$. Hence for $\P_\mu$-a.e $\omega$\footnote{Specifically, we have that $\P_\mu(X_{\sigma_n^m}\not \in N, \sigma_n^m < \infty) = 1$, where $N$ is as in the statement of Theorem~\ref{thm:process_existence}.} we have that \begin{align*}
					\P_{X_{\sigma_n^m}(\omega)} (X(t) \in B_{n-1} \text{ for some } & t \in (0,\tau_n)) \\ & =  \P^{\Theta}_{X_{\sigma_n^m}(\omega)}(X^\Theta(t) \in B_{n-1} \text{ for some } t \in (0,\tau_n)) = 0.
				\end{align*} The final equality follows from \cite[Theorem 4.3]{itkin2021class} which guarantees for \emph{every} $x \in \Delta^{d-1}_+$ that $\P^\Theta_x(X^\Theta(t) \in B \text{ for some } t > 0) = 0$. It follows from \eqref{eqn:stopping_time_inequality} that $\P_\mu(X(t) \in B_{n-1} \text{ for some } t >0) = 0$. Since 
				\[\{X(t) \in B \text{ for some } t > 0\} = \bigcup_n \{X(t) \in B_n \text{ for some } t > 0\}\]
				the result follows.
			\end{proof}

			Finally, to close out this section we prove Theorem~\ref{thm:boundary_attainment}.
			\begin{proof}[Proof of Theorem~\ref{thm:boundary_attainment} ]
				We first prove the forward directions of \ref{item:bound_attain_name}\ref{item:bound_attain_name_hit} and \ref{item:bound_attain_rank}\ref{item:bound_attain_rank_hit}, and then prove the backward directions. To this end fix $I \subset \{1,\dots,d\}$ as in the statement of \ref{item:bound_attain_name}. Assume that there exists an $l \in \{2,\dots,d-N+1\}$ such that $\bar a_l + \sum_{i \in I} \gamma_i + \sum_{k=l}^{d-N} \gamma^{-I}_{(k)} < 1$. Let $J$ be an index set of size $d-l+1$, so that $\{\gamma_j\}_{j \in J} = \{\gamma_i\}_{i \in I} \cup \{\gamma_{(k)}^{-I}\}_{k=l}^{d-N}$; that is $J$ contains $I$ as well as the additional $d-l+1-N$ indices corresponding to smallest remaining values in the vector $\gamma$. 
				Note that $\sum_{j \in J} \gamma_j = \sum_{i \in I} \gamma_i + \sum_{k=l}^{d-N} \gamma_{(k)}^{-I}$.
				We will show that the process $\Lambda_J(X(t))$ hits zero with positive probability. Assume by way of contradiction that $\P_\mu(\Lambda_J(X(t)) = 0 \text{ for some } t > 0)  =0$. Then $-\log \Lambda_J(X)$ is a semimartingale and using Itô's formula we obtain from \eqref{eqn:Lambda_dynamics} that
				\[
				-d\log \Lambda_J(X) = \frac{\sigma^2}{2}\(\frac{1- \sum_{j \in J} (\gamma_j +  a_{{\bf r}_j(\cdot)})}{\Lambda_J(X)} + \bar a_1 + \bar \gamma_1 -1\)dt +  \sigma \sqrt{\frac{1- \Lambda_J(X)}{\Lambda_J(X)}}dB.
				\] Next, define the sets
				\begin{equation} \label{eqn:A_J}
					A_J := \{x \in \Delta^{d-1}: {\bf{n}}_k(x) \in J \text{ for every } k = l,\dots,d\}, \qquad A_J^c = \Delta^{d-1} \setminus A_J.
				\end{equation}
				The set $A_J$ corresponds to the region of the simplex for which the smallest components of $x$ are precisely those with indices in the set $J$. Fix $K > 0$ and set $A_J^K = \{ x \in A_J: \log x_{(l-1)} - \log x_{(l)} > K\}$. 
				Choose a function $\psi \in C_c^\infty(\Delta^{d-1})$ such that $0 \leq \psi \leq 1$,  $\psi = 1$ on $A_J^K$ and $\psi = 0$ on $\Delta^{d-1}\setminus A_J$. Also let $\epsilon  = 1 - \bar a_l - \sum_{i \in I} \gamma_i - \sum_{k=l}^{d-N} \gamma^{-I}_{(k)}$ and note by assumption that $\epsilon > 0$. 
				Then using the product rule we obtain
				\begin{align*}
					-d(\log \Lambda_J(X)\psi(X)) = & \frac{\sigma^2}{2}\psi(X)\(\bar a_1 + \bar \gamma_1 -1 + \frac{1- \sum_{j \in J} \gamma_j - \sum_{j \in J} \sum_{k=1}^d 1_{\{{\bf{n}}_k(t) = j\}}a_k}{\Lambda_J(X)}\)dt \\
					& - \frac{\sigma^2}{2} \log \Lambda_J(X)\sum_{i=1}^d\partial_i \psi(X) \(\gamma_i + \sum_{k=1}^d1_{\{{\bf{n}}_k(t) = i\}}a_k - (\bar a_1 + \bar \gamma_1) X_i\)dt \\ & 
					+ \psi(X)dM(t) + dN(t) +  \nabla \psi(X)^\top c(X) \frac{\boldsymbol{1}^J}{\Lambda_J(X)}dt,
				\end{align*}
				where $dM(t) = \sigma \sqrt{\frac{1- \Lambda_J(X(t))}{\Lambda_J(X(t))}}dB(t)$, $dN(t) = \log\Lambda_J\nabla \psi^\top c^{1/2}(X(t))dW(t)$, $\boldsymbol{1}^J_j = 1_{\{j \in J\}}$ for $j=1,\dots,d$ and we omitted the time index for notational clarity. Note that 
				$(c(x){\bf 1}^J/\Lambda_J(x))_i = -x_i  +x_i1_{\{i \in J\}}/\Lambda_J(x) \in [-1,0]$ for ever $i$.
				Hence, using the fact that $\psi = 0$ on $A_J^c$ and that $\Lambda_J(x) = \sum_{k=l}^d x_{(k)} = \bar x_{(l)}$ for $x \in A_J$ we obtain the estimate
				\begin{align*}
					-\log \Lambda_J(X(T))\psi(X(T))  \geq\,  & \epsilon\int_0^T \frac{\sigma^2\psi(X)}{2\bar X_{(l)}}1_{A_J}(X)dt + C \int_0^T \(-1  + 1_{A_J}(X)\log \bar X_{(l)}\)dt \\
					& + \int_0^T\psi(X)dM(t) + N(T),
				\end{align*} for some universal constant $C > 0$.  Then setting $d\tilde M(t)= \psi(X(t))dM(t)$ and using the fact that $0 \leq \psi \leq 1$ together with the fact that $\Lambda_J(x) = \bar x_{(l)}$ on $A_J$ we note that 
				\begin{equation} \label{eqn:tilde_M_bound}
					\left[\tilde M(T), \tilde M(t)\right] = \int_0^T\sigma^2 \psi^2(X)\frac{1-\Lambda_J(X)}{\Lambda_J(X)}\, dt \leq \int_0^T \frac{\sigma^2\psi(X)}{2\bar X_{(l)}}1_{A_J}(X)dt.
				\end{equation}
				Hence we obtain
			
			\begin{equation}\label{eqn:log_sigma_inequality}
				\begin{split}
					-\log \Lambda_J(X(T))\psi(X(T)) \geq  & \int_0^TC\(-1  + 1_{A_J}(X(t))\log \bar X_{(l)}(t)\)dt \\
					& + \frac{\epsilon}{2} [\tilde M,\tilde M](T) + \tilde M(T) + N(T).
				\end{split}
			\end{equation}
			By the ergodic property we have that 
			\begin{equation} \label{eqn:log_sigma_finite}
				\begin{split} 
					\lim_{T\to \infty} \frac{1}{T}\int_0^T1_{A_J}(X(t)) \log \bar X_{(l)}dt & = \int_{\Delta^{d-1}} 1_{A_J}(x)\log \bar x_{(l)}p(x)dx \\
					& = \frac{1}{d!}\sum_\tau \int_{\nabla^{d-1}} \log \bar y_{l}\prod_{k=1}^d y_k^{a_k + \gamma_{\tau(k)-1}}\, dy   > - \infty,
				\end{split}
			\end{equation}
			where the sum is taken over permutations $\tau$ such that $\tau(k) \in J$ for $k=l,\dots,d$ and finiteness is due to \eqref{eqn:log_bound}.
			Next note that \[[N,N](T) \leq C\int_0^T (\log \Lambda_J(X(t)))^21_{A_J}(X(t))dt = C\int_0^T (\log \bar X_{(l)}(t))^21_{A_j}(X(t))dt.\]
			Hence, by the ergodic property and \eqref{eqn:log_bound} we again have $\lim_{T \to \infty} \frac{[N,N](T)}{T} < \infty$, $\P_\mu$-a.s.\ so by \cite[Lemma~1.3.2]{fernholz2002stochastic} we conclude that \begin{equation} \label{eqn:mart_limit}
				\lim_{T\to \infty} \frac{N(T)}{T} = 0, \quad \P_\mu\text{-a.s.}
			\end{equation}
			Next we wish to perform a similar analysis for $\tilde M$. To this end
			let $\mathcal{T}_J= \{\tau \in \mathcal{T}: \tau(k) \in J \text{ for every } k =l,\dots,d\}$. 
			Note that \[\int_{\Delta^{d-1}} \psi(x)^2\frac{1_{A_J}(x)}{
				\bar x_{(l)}}p(x)dx \geq \frac{Z^{-1}}{(d-l+1)}\sum_{\tau \in \mathcal{T}_J} \int_{\nabla^{d-1} \cap \{\log \frac{y_{l-1}}{y_l} > K\}} \frac{1}{y_{l}}\prod_{k=1}^d y_k^{a_k + \gamma_{\tau(k)}-1}dy,\]
			where we used the fact that $\psi = 1$ on $A_J^K$ and that $\bar x_{(l)} \leq (d-l+1)x_{(l)}$.
			For $\tau \in \mathcal{T}_J$ set $b^\tau = a + \gamma_{\tau} -e_l$.
			The integrability of the right hand side is governed by Lemma~\ref{lem:Q_finite}; note that the intersection with the set $\{\log \frac{y_{l-1}}{y_l} > K\}$ in the domain of integration will not affect integrability. Indeed, arguing as in Lemma~\ref{lem:Q_finite}, we have that
			\[ \int_{\nabla^{d-1} \cap \{\log \frac{y_{l-1}}{y_{l}} > K\}} \frac{1}{y_{l}}\prod_{k=1}^d y_k^{a_k + \gamma_{\tau(k)}-1}dy \ \propto \int_{K}^\infty e^{- \bar b^\tau_{l} z}dz \prod_{k\ne l} \int_{0}^\infty e^{- \bar b^\tau_k z}dz = \infty,\] since $\bar b^\tau_l  = \bar a_l + \sum_{i \in I} \gamma_i + \sum_{k=l}^{d-N} \gamma^{-I}_{(k)} - 1 < 0$. Consequently, from the estimate \eqref{eqn:tilde_M_bound} we see that 
			\begin{equation} \label{eqn:main_blowup}
				\lim_{T\to \infty} \frac{[\tilde M,\tilde M](T)}{T} = \infty, \quad \P_\mu\text{-a.s.}
			\end{equation} 
			
			By the Dambis--Dubins--Schwarz theorem, \eqref{eqn:main_blowup} and the law of large numbers for Brownian motion we have that $\lim_{T\to \infty} \frac{ \tilde M(T)}{[\tilde M, \tilde M](T)} = 0$, $\P_{\mu}$-a.s. Dividing by $T$ and sending $T \to \infty$ in \eqref{eqn:log_sigma_inequality} shows that $\lim_{T \to \infty}-T^{-1}\psi(X(T))\log \Lambda_J(X(T)) = \infty,$  $\P_\mu$-a.s.\ by virtue of \eqref{eqn:log_sigma_finite}, \eqref{eqn:mart_limit} and  \eqref{eqn:main_blowup}. But then, by definition of $\Lambda_J$, we must have $\lim_{T\to \infty} \Lambda_J(X(T)) = 0$, $\P_{\mu}$-a.s.\ This contradicts the ergodicity of $X$. Hence $-\log \Lambda_J(X)$ cannot be a semimartingale and consequently we must have that 
			$
			\P_\mu(\Lambda_J(X(t)) = 0 \text{ for some } t > 0) > 0.
			$
			Since $\Lambda_J(X(t)) = 0 \implies \Lambda_I(X(t)) = 0$ we have proved the forward direction of \ref{item:bound_attain_name}\ref{item:bound_attain_name_hit}.
			
			We now prove the forward direction of \ref{item:bound_attain_rank}\ref{item:bound_attain_rank_hit}. To this end fix $k \in \{2,\dots,d\}$ and suppose that $\bar a_l + \bar \gamma_{(l)} < 1$ or some $l \in \{2,\dots,k\}$. Set 
			\[I = \{i \in \{1,\dots,d\}: {\bf{n}}_m(\gamma) = i \text{ for some } m \geq l\}\]
			and note that $N := |I| = d-l+1$.
			Then $\bar a_l + \sum_{i \in I}\gamma_i + \sum_{m=l}^{l-1}\gamma_{(m)}^{-I} = \bar a_l + \gamma_{(l)} < 1$, so that \eqref{eqn:param_cond_I} does not hold here (we evaluated \eqref{eqn:param_cond_I} at the largest index $d-N+1$).  Hence by \ref{item:bound_attain_name}\ref{item:bound_attain_name_hit} we have that $\P_\mu(\Lambda_I(X(t)) = 0 \text{ for some } t > 0) > 0$. But we have the set inclusions
			\[\{\Lambda_I(X(t)) = 0\} = \{X_i(t) = 0 \text{ for all } i \in I\} \subseteq \{X_{(l)}(t) = 0\} \subseteq \{X_{(k)}(t) = 0\}.\]
			This proves the forward direction of \ref{item:bound_attain_rank}\ref{item:bound_attain_rank_hit}.
			
			Now we prove the backward direction of \ref{item:bound_attain_name}\ref{item:bound_attain_name_hit}. Fix $I \subset \{1,\dots,d\}$ as in the statement of \ref{item:bound_attain_name} and suppose that $\bar a_l + \sum_{i \in I} \gamma_i + \sum_{k=l}^{d-N} \gamma^{-I}_{(k)} \geq 1$ for every $l = 2,\dots,d-N+1$. We will prove that $\Lambda_J(X)$ does not hit zero $\P_\mu$-a.s.\ for any index set $J$ containing $I$. We proceed by finite backward (strong) induction on the size of $J$ with base case $|J|=d$ and terminal case $|J| = N$, which is only possible if $J = I$. The base case is trivial since when $|J| = d$ we have that $\Lambda_J(X) = \sum_{i=1}^d X_i \equiv 1$.

			Before proceeding to the inductive step we introduce some notation. For some $l \leq d-1$ and an index set $K \subseteq \{1,\dots,d\}$ with $|K|= l$ define the stopping times
			\begin{align*}\tau^{K}_n &= \inf\{t \geq 0: \Lambda_{K}(X(t)) \leq 1/n\}, & n \in \N, \\
				\tau^{K} &= \inf\{t \geq 0: \Lambda_{K}(X(t)) = 0\}, \\
				\tau^{K,+}_m& = \min\{\tau_m^{J^+}: K \subsetneq J^+ \subseteq \{1,\dots,d\}\}, & m \in \N.
			\end{align*}
			The stopping time $\tau_m^{K,+}$ denotes the first time that $\Lambda_{K}(X(t))$ is less than or equal to $1/m$ for \emph{some} index set $J^+$ of size at least $l+1$ containing $K$. 
			Now we proceed with the finite induction by assuming for some $l \in \{N+1,\dots,d\}$ that $\Lambda_{J^+}(X)$ does not hit zero $\P_\mu$-a.s.\ for any set $J^+ \subset \{1,\dots,d\}$ with $|J^+| \geq l$ and $I \subseteq J^+$. Fix an arbitrary set $J\subset \{1,\dots,d\}$ with $|J| = l-1$ and $I \subseteq J$. We have to show that 
			$\P_\mu(\Lambda_{J}(X(t)) = 0 \text{ for some } t > 0) = 0.$
			
			
			
			We have for $t \leq \tau_m^{J,+}$ that $\Lambda_{J^+}(X(t)) \geq 1/m$ for every $J^+$ of size $l$ containing $I$. Moreover, if $X(t) \in A_{J}^c$ then necessarily $\{X_j(t)\}_{j \in J}$ do not occupy the smallest $l-1$ ranks. Consequently, there exists and index $j_* \not \in J$ such that $X_{j_*}(t) \leq X_j(t)$ for some $j \in J$. Set $J^+ = J \cup \{j_*\}$. We then have for $t \in [0,\tau_m^{J,+}]$ that  
			\begin{equation} \label{eqn:Lambda_m_bound}
				2\Lambda_J(X(t)) \geq \Lambda_{J^+}(X(t))1_{A_J^c}(X(t))\geq \frac{1}{m}1_{A_J^c}(X(t)).
			\end{equation}
			Next note that up to time $\tau_n^{J}$, the process $-\log \Lambda_{J}(X)$ is bounded so we can apply Itô's formula to obtain from \eqref{eqn:Lambda_dynamics} that
			\begin{equation}\label{eqn:Sigma_stopping_time_base}
				\begin{split}
					-\log & \Lambda_{J}(X(T\land \tau_n^{J} \land \tau_m^{J,+}))  =  -\log \Lambda_{J}(0)  + M(T \land \tau_n^{J} \land \tau_m^{J,+}) \\
					& + \frac{\sigma^2}{2}\int_0^{T \land \tau_n^{J} \land \tau_m^{J,+}} \(\bar a_1 + \bar \gamma_1-1 + \frac{1-\sum_{j \in J}(\gamma_j + a_{{\bf r}_j(t)})}{\Lambda_{J}(X(t))}\)dt
				\end{split} 
			\end{equation}
			for some martingale $M(\cdot \land \tau_n^{J} \land \tau_m^{J,+})$. We then estimate for any $t  \leq T \land \tau_n^{J} \land \tau_m^{J,+}$ that
			\begin{align*}
				\frac{1-\sum_{j \in J} (\gamma_j + a_{{\bf r}_j(t)})}{\Lambda_{J}(X(t))} = & \frac{1  - \sum_{j \in J} \gamma_j - \bar a_{d-l+1}}{\Lambda_{J}(X(t))}1_{A_{J}}(X(t)) \\
				& \hspace{1cm} + \frac{1- \sum_{j \in J} (\gamma_j + a_{{\bf r}_j(t)}) }{\Lambda_{J}(X(t))}1_{A_{J}^c}(X(t)) \leq  2m(1 + |\gamma|_1 + |a|_1),
			\end{align*}
			where $|\gamma|_1 = \sum_{i=1}^d |\gamma_i|$ and $|a|_1$ is defined analogously.
			To obtain the inequality we eliminated the first term by using the fact that $\bar a_{d-l+1} + \sum_{j \in J} \gamma_j \geq \bar a_{d-l+1} + \sum_{i \in I} \gamma_i + \sum_{k=d-l+2}^{d-N} \gamma^{-I}_{(k)} \geq 1$, where the last inequality follows by our assumption. We estimated the second term using \eqref{eqn:Lambda_m_bound}. Applying this estimate to \eqref{eqn:Sigma_stopping_time_base} yields
			
			\begin{align*}
				- \log\Lambda_{J}(X(T \land \tau_n^{J} \land \tau_m^{J,+}))  \leq & - \log\Lambda_{J}(0) + M(T \land \tau_n^{J} \land \tau_m^{J,+}) \\
				& + (m+1)\sigma^2T(1 + |a|_1 + |\gamma|_1).
			\end{align*}
			Taking expectation, sending $n \to \infty$ and using Fatou's lemma yields
			\[- \E_\mu[\log\Lambda_{J}(X(T \land \tau^{J} \land \tau_m^{J,+})] \leq  - \E_\mu[\log\Lambda_{J}(0)] + (m+1)\sigma^2T(1 + |a|_1 +|\gamma|_1).\]
			Since the right hand side is finite it follows that $\P_\mu(\tau^{J} < T  \land \tau_m^{J,+})$ = 0. But since $T$ and $m$ were arbitrary by sending them to infinity we obtain
			\begin{equation} \label{eqn:stopping_time_estimate}
				\P_\mu(\tau^{J} < \tau^{J,+}) = 0.
			\end{equation}
			where $ \tau^{J,+} := \lim_{m \to \infty} \tau_m^{J,+}$.
			By the inductive hypothesis $\tau^{J,+} = \infty$, $\P_\mu$-a.s.\ which completes the proof of the inductive step. Thus the result follows from finite backward induction, finishing the proof of \ref{item:bound_attain_name}\ref{item:bound_attain_name_hit}.
			
			To prove the backward direction of \ref{item:bound_attain_rank}\ref{item:bound_attain_rank_hit} fix $k \in \{2,\dots,d\}$ and assume that $\bar a_{(l)} + \bar \gamma_{(l)} \geq 1 $ for every $l = 2,\dots,k$. We have the  equality
			\begin{equation} \label{eqn:rank_set_inclusion}
				\{X_{(k)}(t) = 0 \text{ for some } t > 0\} = \bigcup_{I}\{\Lambda_I(X(t)) = 0 \text{ for some } t > 0\},
			\end{equation}
			where the union is taken over all sets $I \subseteq \{1,\dots,d\}$ with $|I| = d-k+1$. But note that  for any such $I$ and any $l = 2,\dots,k$ we have that $\bar a_l + \sum_{i \in I} \gamma_i + \sum_{j=l}^{d-N} \gamma^{-I}_{(j)}\geq \bar a_l + \bar \gamma_{(l)} \geq 1$. Hence, by part \ref{item:bound_attain_name}\ref{item:bound_attain_name_hit}, $\P_\mu(\Lambda_I(X(t)) = 0 \text{ for some } t > 0) = 0$ which, together with  \eqref{eqn:rank_set_inclusion}, proves the backward direction of \ref{item:bound_attain_rank}\ref{item:bound_attain_rank_hit}.
			
			Now we turn our attention to proving \ref{item:bound_attain_name}\ref{item:bound_attain_name_above}. Fix $I$ as in the statement of \ref{item:bound_attain_name} and assume that $\bar a_{d-N+1} + \sum_{i \in I} \gamma_i \geq 1$.
			Then arguing as in the inductive step of \ref{item:bound_attain_name}\ref{item:bound_attain_name_hit} we obtain \eqref{eqn:stopping_time_estimate} for the set $J$ replaced by $I$; that is,
			\[\P_\mu(\tau^I < \tau^{I,+}) = 0.\] But we always have the inequality $\tau^I \leq \tau^{I,+}$ since $\Lambda_J(X(t)) = 0 \implies \Lambda_I(X(t)) = 0$ for any index set $J$ containing $I$. We conclude that we must have $\tau^I = \tau^{I,+}$, $\P_\mu$-a.s.\ Next we will show that for every $n$
			\begin{equation} \label{eqn:hitting_above_n}
				\P_\mu(\Lambda_I(X(t)) = 0 \text{ and } \min_J\Lambda_J(X(t)) > 1/n \text{ for some } t > 0) = 0,
			\end{equation} 	where the minimum is taken over all $I \subsetneq J \subset \{1,\dots,d\}$. To this end we define the stopping times $\sigma_n^1 = 0$ and 
			\begin{align*}
				\tau_n^m & = \inf\left\{t \geq \sigma_n^m: \min_J \Lambda_J(X(t)) \leq \frac{1}{n+1}\right\},& m =1,2,\dots, \\
				\sigma_n^m & = \inf\left\{t \geq \tau_n^{m-1}:  \min_J \Lambda_J(X(t)) \geq \frac{1}{n}\right\},& m=2,3,\dots.
			\end{align*}
			
			Note that we cannot have $\min_J\Lambda_J(X(t)) > 1/n$ on the time interval $[\tau_n^m,\sigma_n^{m+1}]$ for any $m$. Hence we obtain the estimate
			\begin{align*} \P_\mu(\Lambda_I(X(t)) &= 0 \text{ and } \min_J\Lambda_J(X(t)) > 1/n \text{ for some } t > 0) \\ 
				&\quad \leq \sum_{m=1}^\infty \P_\mu(\Lambda_I(X(t)) = 0 \text{ and } \min_J\Lambda_J(X(t)) > 1/n \text{ for some } t \in (\sigma_n^m,\tau_n^{m})).
			\end{align*}
			By the strong Markov property we have for every $m$ that 
			\begin{align} \label{eqn:hitting_above_n_strong_markov}
				&\P_\mu( \Lambda_I(X(t)) =  0 \text{ and } \min_J\Lambda_J(X(t)) > 1/n \text{ for some } t \in (\sigma_n^m,\tau_n^{m})) \nonumber \\
				& = \E_\mu[\P_{X_{\sigma_n^m}}(\Lambda_I(X(t)) = 0 \text{ and } \min_J\Lambda_J(X(t)) > 1/n \text{ for some } t \in (0,\tau_n^{1}))1_{\{\sigma_n^m < \infty\}}].
			\end{align}
			For $\P_\mu$-a.e. $\omega \in \{\sigma_n^m < \infty\}$ we have that $\P_{X_{\sigma_n^m}(\omega)} \in \mathcal{P}_0$ so we see that for every such $\omega$ \[
			\inf\{t \geq 0: \Lambda_I(X(t)) = 0 \text{ and } \min_J \Lambda_J(X(t)) > 1/n\} \geq \tau^I = \tau^{I,+}, \quad  \P_{X_{\sigma_n^m}(\omega)}\text{-a.s.,}\]
			where we used the previously established fact that $\tau^{I} = \tau^{I,+}$ almost surely under any law in $\mathcal{P}_0$. But by definition $ \tau^{I,+} \geq \tau^1_n$. Consequently, we deduce that the expression in \eqref{eqn:hitting_above_n_strong_markov} is zero establishing \eqref{eqn:hitting_above_n}. Sending $n \to \infty$ now proves \ref{item:bound_attain_name}\ref{item:bound_attain_name_above}.
			
			It just remains to show \ref{item:bound_attain_rank}\ref{item:bound_attain_rank_above}. Fix $k \in \{2,\dots,d\}$ and assume that $\bar a_k + \bar \gamma_{(k)} \geq 1$. Note that 
			\begin{equation}\label{eqn:pushdown_set_inclusion} 
				\begin{split} \{&X_{(k)}(t)  = 0 \text{ and } X_{(k-1)}(t) > 0  \text{ for some } t > 0\} \\
					&  = \bigcup_{\substack{I \subseteq \{1,\dots,d\} \\ |I| = d-k+1}} \left\{\Lambda_I(X(t)) = 0 \text{ and } \Lambda_J(X(t)) > 0 \text { for every } I \subsetneq J \text{ and some } t > 0\right\}.
				\end{split}
			\end{equation}
			But for any index set $I$ admissible in the union we have that $\bar a_k + \sum_{i \in I} \gamma_i \geq \bar a_k + \bar \gamma_{(k)} \geq 1$. Hence the result follows by \ref{item:bound_attain_name}\ref{item:bound_attain_name_above} and \eqref{eqn:pushdown_set_inclusion} completing the proof.
		\end{proof}
		\section{Proof of Results from  Section~\ref{sec:robust}} \label{app:robust}
		The purpose of this section is to prove Theorem~\ref{thm:robust}. We first prove a technical lemma.
		\begin{proof}[Proof of Lemma~\ref{lem:technical_robust}]
			The proof of \ref{item:collision} is very similar to the proof of Proposition~\ref{prop:Lebesgue_collision}. Fix $\P \in \Pi_{\geq}$ and $k \in \{1,\dots,d-1\}$. Note that \[
			d[X_{(k)}-X_{(k+1)},X_{(k)}-X_{(k+1)}](t) 
			= (X_{(k)}(t) + X_{(k+1)}(t) - (X_{(k)}(t) - X_{(k+1)}(t))^2)dt.
			\] 
			Akin to \eqref{eqn:occ_dens}, we take the function $f(a) = 1_{\{a=0\}}$ in the occupation density formula for $X_{(k)} - X_{(k+1)}$ to obtain
			\begin{equation} \label{eqn:occ_dens_rank}
				\int_0^T 1_{\{X_{(k)}(t) = X_{(k+1)}(t)\}}(X_{(k)}(t) + X_{(k+1)}(t) - (X_{(k)}(t) - X_{(k+1)}(t))^2)dt = 0.
			\end{equation}
			Note that $X_{(k)}(t) + X_{(k+1)}(t) - (X_{(k)}(t) - X_{(k+1)}(t))^2$ is nonnegative and is equal to zero if and only if $X_{(k)}(t) = X_{(k+1)}(t) = 0$. But for $k=1,\dots,N$ by Definition~\ref{def:Pi_geq}\ref{item:non_explosion} this does not happen $\P$-a.s. Hence from \eqref{eqn:occ_dens_rank} we see that  $\P$-a.s.\ we must have that $1_{\{X_{(k)}(t) = X_{(k+1)}(t)\}} = 0$ for almost every $t \geq 0$ and every $k=1,\dots,N$ completing the proof of \ref{item:collision}. Note that for $k = N+1,\dots,d-1$ it follows from \eqref{eqn:occ_dens_rank} that the set $\{t: X_{(k)} = X_{(k+1)}, \ X_{(k)}(t) > 0\}$ is a Lebesgue null-set.
			
			Now to prove \ref{item:QV} we use \cite[Theorem~2.3]{Banner2008Local} to get that 
			\[dX_{(k)}(t) = \frac{1}{N_k(t)}\sum_{i=1}^d 1_{\{{\bf{n}}_k(t) = i\}}dX_i(t) + \frac{1}{2 N_k(t)}\sum_{l=k+1}^d dL_{k,l}(t) - \frac{1}{2N_k(t)}\sum_{l=1}^{k-1}dL_{l,k}(t)\]
			for $k=1,\dots,d$, where $N_k(t) = |\{i \in \{1,\dots,d\}: X_{(k)}(t) = i\}|$ and $L_{k,l} = L_{X_{(k)} - X_{(l)}}$. Hence it follows that
			\begin{equation}
				\label{eqn:QV_identity}
				\begin{split}
					\sum_{i,j=1}^d 1_{\{{\bf{n}}_k(t) = i,{\bf{n}}_l(t)= j\}} d[X_i,X_j](t) & =\sigma^2 N_k(t)N_l(t)X_{(k)}(t)(\delta_{kl} - X_{(l)}(t))dt
				\end{split}
			\end{equation} for $k,l=1,\dots,d,$ where we used Definition~\ref{def:Pi_geq}\ref{item:QV_rank_condition} and the form of $\kappa$ given by \eqref{eqn:kappa_def}. From the proof of Lemma~\ref{lem:technical_robust}\ref{item:collision} we have that $N_k(t) = N_l(t) =  1$ for a.e.\ $t$ on the set $\{X_{(k)}(t) > 0\} \cap \{X_{(l)}(t) > 0 \}$. But when either $X_{(k)}(t) = 0$ or $X_{(l)}(t) = 0$ the right hand side of \eqref{eqn:QV_identity} vanishes. Hence we have the identity $\sum_{i,j=1}^d 1_{\{{\bf{n}}_k(t) = i,{\bf{n}}_l(t)= j\}}d[X_i,X_j](t) = \kappa_{kl}(X_{()}(t))dt$ for $k,l=1,\dots,d$. Multiplying the integrand on the right hand side by $1_{\{{\bf{n}}_k(t) = i,{\bf{n}}_l(t)= j\}}$ we obtain 
			\[1_{\{{\bf{n}}_k(t) = i,{\bf{n}}_l(t)= j\}}d[X_i,X_j](t) = 1_{\{{\bf{n}}_k(t) = i,{\bf{n}}_l(t)= j\}}\kappa_{kl}(X_{()}(t))dt, \quad i,j,k,l=1,\dots,d.\]
			Summing over $k$ and $l$ yields the required identity.
		\end{proof}

		Now we are ready to prove Theorem~\ref{thm:robust}
		
		\begin{proof}[Proof of Theorem~\ref{thm:robust}] In view of the previous discussion carried out following the statement of Theorem~\ref{thm:robust} we just have to show that $g(V^{\hat\theta};\P)$ is independent of $\P \in \Pi_{\geq}$.
			To this end we will approximate the function $\hat G$, given by \eqref{eqn:hat_G}, by $C^2$ functions. Let $\psi \in C_c^\infty(\R)$ be such that $\psi \geq 0$, $\mathrm{supp}(\psi) = [0,1]$ and $\int_{\R}\psi = 1$. For $n \in \N$ define the $d$-dimensional mollifiers $\Psi_n:\R^d \to (0,\infty)$ via $\Psi_n(x) = n^d\prod_{i=1}^d \psi(nx_i)$. Now for $\epsilon > 0$ define the function $\hat G_\epsilon :(-\epsilon,\infty)^d \to (0,\infty)$ via $\hat G_\epsilon(x) = \hat G(x+\epsilon\boldsymbol{1}_d)$ and extend it to all of $\R^d$ by zero. For $n \in \N$ we define the smooth approximations $\hat G_{\epsilon,n}= \hat G_\epsilon* \Psi_n$.  By construction, $\hat G_{\epsilon,n}$ are $C^2$ and since $\hat G_\epsilon$ is continuous on $\Delta^{d-1}$ and differentiable almost everywhere on $\Delta^{d-1}$ we have by properties of convolution that $\lim_{n \to \infty} \hat G_{\epsilon,n} = \hat G_{\epsilon}$ pointwise and $\lim_{n \to \infty}\nabla \log \hat G_{\epsilon,n} = \nabla \log \hat G_{\epsilon}$ almost everywhere on $\Delta^{d-1}$. Additionally, $\hat G_\epsilon$ is $2\epsilon^{-1}$-Lipschitz continuous on $\Delta^{d-1}$ and $\hat G_{\epsilon,n}$ inherits this property. In particular $\nabla \log \hat G_{\epsilon,n}$ is uniformly bounded in $n$ for every $\epsilon > 0$. Moreover, the functions $\hat G_\epsilon$, $\hat G_{\epsilon,n}$ are permutation invariant for every $\epsilon$ and $n$. 
			
			The introduction of $\hat G_{\epsilon,n}$ above was done in two steps. First we introduced $\hat G_\epsilon$ to approximate $\hat G$ by explicit functions which are bounded away from zero (so that $\log \hat G_\epsilon$ is bounded for every $\epsilon$). The second step is to approximate $\hat G_\epsilon$ by the smooth functions $\hat G_{\epsilon,n}$. We will be able to show, by first sending $n \to \infty$ and then $\epsilon \to 0$, that the trading strategy generated by $\hat G$ has the same growth rate in every admissible measure.  
			
			To simplify the expressions to come we make the following definitions. For $y \in \R^d$ satisfying $y_1 \geq y_2 \geq \dots \geq y_d$ we define $\hat F_\epsilon(y) = \hat G_\epsilon(y)$ and $\hat F_{\epsilon,n}(y) = \hat G_{\epsilon,n}(y)$. Then partial derivatives of $\hat F_\epsilon$ and $\hat F_{\epsilon,n}$ will be with respect to the ordered vector coordinate $y_k$, whereas partial derivatives of $\hat G_\epsilon$ and $\hat G_{\epsilon,n}$ will be with respect to the unordered coordinates $x_i$. Next we denote by $\varphi$, $\varphi_\epsilon$ and $\varphi_{\epsilon,n}$ the functions $\log \hat G$, $\log \hat G_{\epsilon}$ and $\log \hat G_{\epsilon,n}$ respectively. Similarly, we write $\xi$, $\xi_\epsilon$ and $\xi_{\epsilon,n}$ for $\log \hat F$, $\log \hat F_{\epsilon}$ and $\log \hat F_{\epsilon,n}$. Lastly, write $\hat \theta, \hat \theta^\epsilon$ and $\hat \theta^{\epsilon,n}$ for the strategies generated by $\hat G, \hat G_\epsilon$ and $\hat G_{\epsilon,n}$ respectively.
			
			Now note that since $\varphi_{\epsilon,n}$ is continuously differentiable on $\R^d$ it follows in a similar way to \eqref{eqn:smooth_permute} that 
			\begin{equation} \label{eqn:robust_rank_identity1}
				\sum_{i,j=1}^d \partial_i\varphi_{\epsilon,n}\partial_j\varphi_{\epsilon,n}(X(t)) d[X_i,X_j](t) = \nabla \xi_{\epsilon,n}^\top \kappa\xi_{\epsilon,n}(X_{()}(t))dt.
			\end{equation}
			Next we have for each $\epsilon > 0$ that $\varphi_\epsilon$ is differentiable off of the set $\bigcup_{k=1,\dots,N}\{x_{(k)} = x_{(k+1)}\}$ and for every $x$ at which $\varphi_\epsilon$ is differentiable we have that 
			\[\partial_i \varphi_\epsilon (x) = \partial_k \xi_\epsilon(x_{()}), \quad {\bf{n}}_k(x) = i.\] Since by Lemma~\ref{lem:technical_robust}\ref{item:collision} we have that the top $N+1$ ranks only collide at a null-set of time points, we obtain using Lemma~\ref{lem:technical_robust}\ref{item:QV} the relationship
			\begin{equation} \label{eqn:robust_rank_identity2}
				\int_0^T\sum_{i,j=1}^d  \partial_i \varphi_\epsilon\partial_j\varphi_\epsilon(X(t))d[X_i,X_j](t) = \int_0^T\nabla \xi_\epsilon^\top \kappa\nabla\xi_\epsilon(X_{()}(t))dt,
			\end{equation}
			$\P\text{-a.s.}$ for every $\P \in \Pi_{\geq}$.
			An analogous identity for $\varphi$ and $\xi$, replacing $\varphi_\epsilon$ and $\xi_\epsilon$ respectively, holds as well.
			
			With these preliminaries in hand we are now ready to analyze the wealth processes. Since $\hat G_{\epsilon,n}$ is $C^2$ with bounded first and second derivatives (whenever $\epsilon > 1/n)$ and $ \xi_{\epsilon,n}$ is bounded it follows by Lemma~\ref{lem:func_gen_growth_ivnariance} that \begin{equation} \label{eqn:epsilon_n_growth}
				\lim_{T \to \infty} \frac{1}{T}\log V^{\hat \theta^{\epsilon,n}}(T) = \int_{\nabla^{d-1}} (\nabla \xi_{\epsilon,n}^\top \kappa \varrho - \frac{1}{2}\nabla \xi_{\epsilon,n}^\top \kappa \nabla \xi_{\epsilon,n})q =: \eta_{\epsilon,n}, \quad \P\text{-a.s.}
			\end{equation} for every $\epsilon,n$ such that $\epsilon > 1/n$. Next we will show that 
			\begin{equation} \label{eqn:epsilon_growth}
				\lim_{T \to \infty} \frac{1}{T}\log V^{\hat \theta^\epsilon}(T) =  \int_{\nabla^{d-1}} (\nabla \xi_{\epsilon}^\top \kappa \varrho - \frac{1}{2}\nabla \xi_{\epsilon}^\top \kappa \nabla \xi_{\epsilon})q =: \eta_\epsilon, \quad \P\text{-a.s.}
			\end{equation}
			for every $\P \in \Pi_{\geq}$ and $\epsilon > 0$. To this end fix $\P \in \Pi_{\geq}$ and let $b^{\P}$ and $M^{\P}$ denote the drift process and local martingale part of $X$ under $\P$, as in Definition~\ref{def:Pi_geq}, respectively. Next we compute that
			\begin{equation}\label{eqn:logV_n_estimate}
				\begin{split} & \left|\frac{\log V^{\hat\theta^{\epsilon}}(T)}{T} -\frac{\log V^{\hat\theta^{\epsilon,n}}(T)}{T}\right| \leq
					\frac{1}{T}\int_0^T  |(\nabla \varphi_\epsilon(X(t)) - \nabla \varphi_{\epsilon,n}(X(t)))^\top b^{\P}(t)|dt \\
					& \hspace{0.2cm} + \frac{1}{2T}\int_0^T\left|\sum_{i,j=1}^d \Big(\partial_i \varphi_\epsilon\partial_j\varphi_\epsilon(X(t))- \partial_i\varphi_{\epsilon,n}\partial_j\varphi_{\epsilon,n}(X(t))\Big) d[X_i,X_j](t)\right|  \\
					&  \hspace{0.2cm} +  \left|\frac{1}{T}\int_0^T   (\nabla \varphi_\epsilon(X(t)) - \nabla \varphi_{\epsilon,n}(X(t)))^\top dM^{\P}(t)\right|   \\
					& \leq \sum_{i=1}^d \(\frac{1}{T}\int_0^T|\partial_i \varphi_\epsilon(X(t)) - \partial_i \varphi_{\epsilon,n}(X(t))|^{r}dt\)^{1/r}\(\frac{1}{T}\int_0^T|b^{\P}_i(t)|^{r'}dt\)^{1/r'} \\
					&  \hspace{0.5cm} + \frac{1}{2T}\int_0^T|\nabla \xi_{\epsilon}^\top \kappa \nabla \xi_{\epsilon}(X_{()}(t)) -  \nabla \xi_{\epsilon,n}^\top \kappa \nabla \xi_{\epsilon,n}(X_{()}(t))|dt + N^{\P}_{\epsilon,n}(T),
				\end{split}
			\end{equation} where $N^{\P}_{\epsilon,n}(T) =  \int_0^T   (\nabla \varphi_\epsilon(X(t)) - \nabla \varphi_{\epsilon,n}(X(t)))^\top dM^{\P}(t)$, $r'$ is as in Definition~\ref{def:Pi_geq}\ref{item:drift_condition_rank} and we have $1/r + 1/r' = 1$.
			We used the fact that for a.e.\ $t$,  $\hat \theta^\epsilon(X(t)) = \nabla \varphi_\epsilon(X(t)) + (1-\nabla \varphi_\epsilon(X(t))^\top X(t))\boldsymbol{1}_d$ and the analogous expression for $\hat \theta^{\epsilon,n}$. We also used \eqref{eqn:robust_rank_identity1} and \eqref{eqn:robust_rank_identity2} in the final inequality.  
			
			By a similar calculation to \eqref{eqn:robust_rank_identity2} we have that $ [N^{\P}_{\epsilon,n},N^{\P}_{\epsilon,n}](T) = \int_0^T (\nabla \xi_\epsilon - \nabla \xi_{\epsilon,n})^\top \kappa (\nabla \xi_\epsilon - \nabla \xi_{\epsilon,n}) (X_{()}(t))dt$.
			Hence, it follows by the ergodic property that
			\[\lim_{T\to \infty} \frac{[N^{\P}_{\epsilon,n},N^{\P}_{\epsilon,n}](T)}{T}= \int_{\nabla^{d-1}} (\nabla \xi_\epsilon - \nabla \xi_{\epsilon,n})^\top \kappa (\nabla \xi_\epsilon(y) - \nabla \xi_{\epsilon,n})q, \quad \P\text{-a.s.}\]
			Since $\nabla \xi_\epsilon$ and $\nabla \xi_{\epsilon,n}$ are bounded, the right hand side is finite so by \cite[Lemma~1.3.2]{fernholz2002stochastic} we have that $\lim_{T \to \infty} T^{-1}N^{\P}_{\epsilon,n}(T) = 0$, $\P$-a.s. Next note that 
			\[\sum_{i=1}^d \(\frac{1}{T}\int_0^T\big|(\partial_i \varphi_\epsilon - \partial_i \varphi_{\epsilon,n})(X(t))\big|^{r}dt\)^{1/r} \hspace{-0.3cm} = \sum_{k=1}^d  \(\frac{1}{T}\int_0^T\big|(\partial_k \xi_\epsilon - \partial_k \xi_{\epsilon,n})(X_{()}(t))\big|^{r}dt\)^{1/r}.\]
			Also define
			$C^{r'} = \max_{i=1,\dots,d}\limsup_{T\to \infty}\frac{1}{T}\int_0^T|b^{\P}_i(t)|^{r'}dt,$ which is finite due to Definition~\ref{def:Pi_geq}\ref{item:drift_condition_rank}. It then follows from the ergodic property and \eqref{eqn:logV_n_estimate} that $\P$-a.s.,
			\begin{equation} \label{eqn:limsup_bound_epsilon}
				\begin{split} \limsup_{T \to \infty} \left|\frac{\log V^{\hat\theta^{\epsilon}}(T)}{T} -\frac{\log V^{\hat\theta^{\epsilon,n}}(T)}{T}\right| &  \leq C\sum_{k=1}^d \(\int_{\nabla^{d-1}}|\partial_k\xi_\epsilon - \partial_k \xi_{\epsilon,n}|^rq \)^{1/r}\\
					& \  + \frac{1}{2}\int_{\nabla^{d-1}} (\nabla \xi_\epsilon^\top \kappa \nabla\xi_\epsilon - \nabla \xi_{\epsilon,n}^\top \kappa \nabla\xi_{\epsilon,n})q.
				\end{split}
			\end{equation}
			Since $\nabla \xi_{\epsilon,n} \to \nabla \xi_\epsilon$ almost everywhere as $n \to \infty$ and the gradients are uniformly bounded in $n$ it follows by sending $n \to \infty$ that 
			\begin{equation} \label{eqn:uniform_epsilon_wealth}
				\lim_{n \to \infty}\limsup_{T \to \infty} \left|\frac{\log V^{\hat\theta^{\epsilon}}(T)}{T} -\frac{\log V^{\hat\theta^{\epsilon,n}}(T)}{T}\right| = 0, \quad \P\text{-a.s.}
			\end{equation}
			Now we estimate that 
			\begin{equation} \label{eqn:eta_epsilon_growth}
				\left|\frac{\log V^{\hat\theta^{\epsilon}}(T)}{T} - \eta_\epsilon\right|  \leq \left|\frac{\log V^{\hat\theta^{\epsilon}}(T)}{T} -\frac{\log V^{\hat\theta^{\epsilon,n}}(T)}{T}\right| + \left|\frac{\log V^{\hat\theta^{\epsilon,n}}(T)}{T} - \eta_{\epsilon,n}\right| + |\eta_{\epsilon,n} - \eta_\epsilon|,
			\end{equation}
			where we recall that $\eta_{\epsilon,n}$ and $\eta_\epsilon$ were defined in \eqref{eqn:epsilon_n_growth} and \eqref{eqn:epsilon_growth} respectively.
			Using \eqref{eqn:epsilon_n_growth}, \eqref{eqn:uniform_epsilon_wealth} and the fact that $\lim_{n \to \infty} \eta_{\epsilon,n} = \eta_\epsilon$ we obtain \eqref{eqn:epsilon_growth} by first sending $T \to \infty$ and then $n \to \infty$ in \eqref{eqn:eta_epsilon_growth}.
			
			We employ a similar technique to analyze $\log V^{\hat \theta}$. We use the definition of wealth process \eqref{eq_wealth_dynam_NEW} to obtain the semimartingale decomposition for $\log V^{\hat \theta}$ and $\log V^{\hat \theta^\epsilon}$ under any admissible measure $\P \in \Pi_{\geq}$. Proceeding as in \eqref{eqn:logV_n_estimate} we obtain a bound akin to \eqref{eqn:limsup_bound_epsilon},
			\begin{equation} \label{eqn:log_V_bound}
				\begin{split}
					\limsup_{T \to \infty}\left|\frac{\log V^{\hat\theta}(T)}{T} -\frac{\log V^{\hat\theta^{\epsilon}}(T)}{T}\right| \leq  & \ C\sum_{k=1}^d \(\int_{\nabla^{d-1}}|\partial_k\xi - \partial_k \xi_{\epsilon}|^rq \)^{1/r}\\
					& \  + \frac{1}{2}\int_{\nabla^{d-1}} (\nabla \xi^\top \kappa \nabla\xi - \nabla \xi_{\epsilon}^\top \kappa \nabla\xi_{\epsilon})q.
				\end{split}
			\end{equation}
			Direct calculations show that
			\[\sum_{k=1}^d |\partial_k \xi(y) - \partial_k \xi_\epsilon(y)|^r + |\nabla \xi^\top \kappa \nabla \xi(y) - \nabla \xi_\epsilon^\top \kappa \nabla \xi_\epsilon(y)|\leq \frac{\tilde C}{y_{N+1}^r}\]
			for some constant $\tilde C > 0$ and every $y \in \nabla^{d-1}_+$.
			By assumption \eqref{eqn:finite_growth_ass} on the parameter $a$ together with the definition of $r$ from Definition~\ref{def:Pi_geq}\ref{item:drift_condition_rank} we have by Lemma~\ref{lem:Q_finite} that 
			$\int_{\nabla^{d-1}} {y^{-r}_{N+1}}{q(y)}dy < \infty.$
			Hence, by sending $\epsilon \downarrow 0$ in \eqref{eqn:log_V_bound} we obtain by the dominated convergence theorem that
			\[
			\lim_{\epsilon \downarrow 0} \limsup_{T \to \infty}\left|\frac{\log V^{\hat\theta}(T)}{T} -\frac{\log V^{\hat\theta^{\epsilon}}(T)}{T}\right| = 0, \quad \P\text{-a.s.}
			\]
			Next note that, again by dominated convergence, we have
			\[\eta := \lim_{\epsilon \downarrow 0} \eta_\epsilon = \frac{1}{8}\int_{\nabla^{d-1}} \(\sum_{k=1}^N\frac{a_k^2}{y_k} + \frac{\bar a_{N+1}^2}{\bar y_{N+1}}\)q(y)dy - \frac{1}{8}\bar a_1^2. \]
			Then, by an analogous estimate to \eqref{eqn:eta_epsilon_growth} 
			we obtain that $\lim_{T \to \infty} T^{-1} \log V^{\hat \theta}(T) = \eta$, $\P$-a.s. establishing that $g(V^{\hat \theta};\P) = \eta$ for every $\P \in \Pi_{\geq}.$ This completes the proof.
		\end{proof}
	\end{appendix}

	\bibliographystyle{plain}
	\bibliography{references}

\end{document}